\newcommand{\At}{\mathrm{\textrm{At}}}
\newcommand{\ls}{\mathcal{L}_{\Sigma}}
\begin{document}

%\theoremstyle{definition}
%\newcounter{def}
%\newcounter{exa}

\newcounter{thm}
\newcounter{lem}
\newcounter{def}
\newtheorem{definition}[thm]{Definition}
\newtheorem{theorem}[thm]{Theorem}
\newtheorem{lemma}[thm]{Lemma}
\newtheorem{corollary}[thm]{Corollary}
\newtheorem{example}[thm]{Example}
\newtheorem{proposition}[thm]{Proposition}

\newcommand{\proof}{\noindent {\bf Proof }}

\author{Ekaterina Komendantskaya\footnote{The work was supported by the Engineering and Physical Sciences Research Council, UK: Postdoctoral Fellow in TCS grant EP/F044046/1-2, EPSRC First Grant EP/J014222/1, and EPSRC Grant EP/K031864/1.}\\
  School of Computing,\\ University of Dundee, UK \and  John Power\footnote{The work was supported by  Royal Society grant "Universal Algebra and its dual: monads and comonads, Lawvere theories and what?", EPSRC grant EP/K028243/1, and SICSA Distinguished Visiting Fellow grant.} \\ Department of Computer Science,\\
University of Bath, UK \and Martin Schmidt \\ Institute of Cognitive Science,\\ Osnabr\"{u}ck University, Germany\\ }
\date{}

\title{Coalgebraic Logic Programming:\\ from Semantics to Implementation}
\maketitle 
\abstract{Coinductive definitions, such as that of an infinite stream, may often
  be described by elegant logic programs, but ones for which
  SLD-refutation is of no value as SLD-derivations fall into infinite loops. 
	%while infinite SLD-derivations are of
  %considerable value. 
	Such definitions give rise to questions of lazy corecursive derivations and 
  parallelism, as execution of such logic programs can have both recursive and
  corecursive features at once. Observational and coalgebraic
  semantics have been used to study them abstractly. The programming developments
  have often occurred separately and have usually been
  implementation-led. Here, we give a coherent semantics-led account
  of the issues, starting with abstract category theoretic semantics,
  developing coalgebra to characterise naturally arising trees, and
  proceeding towards implementation of a new dialect, CoALP, of logic
  programming, characterised by guarded lazy corecursion and parallelism.\\
  \textbf{Keywords: Logic Programming, Coalgebra, Observational Semantics, Corecursion, Coinduction, Parallelism}}

%\newpage

%\tableofcontents

%\newpage

\section{Introduction}

The central algorithm of logic programming is
$SLD$-resolution~\cite{Kow79,Llo88,SS86}.  It is primarily used to
obtain $SLD$-refutations; it is usually given least fixed point
semantics; and it is typically implemented
sequentially~\cite{Llo88,SS86}.

All three of these traditions have been challenged over the years, for
related reasons.  For example, infinite streams of bits can be described naturally in
terms of a logic program \texttt{Stream}:
\vspace{-5pt}
\small{
    \begin{eqnarray*}
        \texttt{bit}(0) & \gets & \\
        \texttt{bit}(1) & \gets & \\
        \texttt{stream(scons (x,y))} & \gets & \texttt{bit(x)},\texttt{stream(y)}\\
    \end{eqnarray*}
  } 
$SLD$-resolution is of value here, but $SLD$-refutations are not,
  and that is standard for coinductively defined
  structures~\cite{GuptaBMSM07,Jaume02,Llo88,Majki?04coalgebraicsemantics,
    SimonBMG07}. Consequently, least fixed point semantics, which is
  based on finiteness of derivations, is unhelpful. \texttt{Stream}
  can be given greatest fixed point semantics~\cite{Llo88}, but
  greatest fixed point semantics is incomplete in general, failing for
  some infinite derivations.  \texttt{Stream} can alternatively be
  given coalgebraic semantics~\cite{BonchiM09,CominiLM01} or
  observational semantics~\cite{CominiLM01,GLM95}. Coalgebraic
  semantics is, in general, well-suited to describing parallel
  processes~\cite{JR97,R2000}.

  In this paper, we propose a single coherent, conceptual
  semantics-led framework for this, developing and extending three
  recent conference papers~\cite{KMP10,KP11-2,KP11}. We start from the
  theoretical, with an abstract category theoretic semantics for logic
  programming, and we proceed to the applied, ultimately proposing a
  new dialect, CoALP, of logic programming based on our abstract
  development. We do not change the definition of a logic program; we
  rather change the analysis of it. \texttt{Stream} is a leading and
  running example for us.
  %, see Example~\ref{ex:stream}.

In more detail, a first-order logic program consists of a finite set
of clauses of the form
\[
A \leftarrow A_1, \ldots , A_n
\]
where $A$ and the $A_i$'s are atomic formulae, typically containing
free variables, and with the $A_i$'s mutually distinct.  In the ground
case, i.e., if there are no free variables, such a logic program can
be identified with a function $p:At\longrightarrow P_f(P_f(At))$,
where $At$ is the set of atomic formulae and $p$ sends an atomic
formula $A$ to the set of sets of atomic formulae in each antecedent
of each clause for which $A$ is the
head~\cite{BonchiM09,CominiLM01,JR97}. Such a function is called a
coalgebra for the endofunctor $P_fP_f$ on the category $Set$.  Letting
$C(P_fP_f)$ denote the cofree comonad on $P_fP_f$, given a ground
logic program qua $P_fP_f$-coalgebra, we characterise
and-or parallel derivation trees~\cite{GuptaC94,PontelliG95,GPACH12}
in terms of the $C(P_fP_f)$-coalgebra structure corresponding to $p$,
see Section~\ref{sec:OS}. And-or parallel derivation trees subsume
$SLD$-trees and support parallel implementation and the {\em Theory of
  Observables}~\cite{GLM95,CominiLM01}.

The extension from ground logic programs to first-order programs is
subtle, requiring new abstract category theory. Nevertheless, it
remains in the spirit of the situation for ground logic programs.  Our
characterisation of  and-or parallel derivation trees
does not extend from ground to arbitrary logic programs exactly, but
it fails in particularly interesting ways: the relationship between
and-or trees and ours is at the heart of the paper. Indeed, the
analysis of trees is fundamental to us. We end our abstract
development by proving soundness, completeness, correctness and full
abstraction results for coalgebraic semantics in Section~\ref{sec:OS}.

Proceeding from the abstract to the applied, two aspects of logic
programming that are both desirable and problematic in practice are
corecursion and parallelism. 

Many accounts of corecursion in logic programming, e.g., CoLP~\cite{GuptaBMSM07,SimonBMG07}, use
explicit annotation of corecursive loops to terminate infinite
derivations, see Section~\ref{sec:corec}. In such accounts, inductive
and coinductive predicates are labelled in order to make the
distinction between admissible (in corecursion) and non-admissible (in
recursion) infinite loops. But some predicates need
to be treated as recursive or corecursive depending on the context,
making annotation prior to program execution
impossible. Example~\ref{ex:stream2}, extending \texttt{Stream},
illustrates this.
%; see Section \ref{sec:corec} for more details. 

We propose an alternative approach to corecursion in logic programming:
 a new derivation
algorithm based on the coinductive trees -- structures directly inspired by our coalgebraic semantics.
The resulting dialect  CoALP is based on the same syntax of Horn-clause logic programming, but, in place of SLD-resolution, it  features a new \emph{coinductive derivation} algorithm. CoALP\rq{}s lazy corecursive derivations and syntactic \emph{guardedness} rules are 
similar to those implemented in lazy functional languages, cf. \cite{BK08,Coq94,Gimenez98}.
%CoALP guards corecursion by implementing derivations by coinductive
%trees. This has a similar effect to the way lazy functional languages
%guard corecursive functions, by a combination of syntactic
%restrictions on admissible coinductive definitions, i.e., structural
%(co)recursion, and lazy evaluation. 
Unlike alternative approaches~\cite{GuptaBMSM07,SimonBMG07}, CoALP does not require explicit syntactic annotations of coinductive definitions.
We discuss coinductive trees and derivations in Section~\ref{sec:corec}. There, we prove soundness and completeness of CoALP relative to the coalgebraic semantics of Section \ref{sec:OS}.
%; and precisely define corecursive guardedness conditions for it.

Another distinguishing feature of logic programming languages is that
they allow implicit parallel execution of programs. The three main
types of parallelism used in implementations are
\emph{and-parallelism}, \emph{or-parallelism}, and their combination:
\cite{GuptaC94,PontelliG95,GPACH12}.  However, many first-order
algorithms are P-complete and hence inherently
sequential~\cite{DKM84,Kanellakis88}. This especially concerns
first-order unification and variable substitution in the presence of
variable dependencies. Care is required here. For example, in
\texttt{Stream}, the goal \texttt{stream(scons(x, scons(y,x)))}, if
processed sequentially, leads to a failed derivation owing to
ill-typing, whereas if proof search proceeds in a parallel fashion, it
may find substitutions for $x$, e.g., $0$ and $nil$, in distinct
parallel branches of the derivation tree, but such a derivation is not
sound, see Example~\ref{ex:unsound}.

Existing implementations \cite{GuptaC94,PontelliG95,GPACH12} of
parallel $SLD$-derivations require keeping records of previous
substitutions and so involve additional data structures and algorithms
that coordinate variable substitution in different branches of
parallel derivation trees; which ultimately restricts parallelism.  
If such synchronisation is omitted,
parallel $SLD$-derivations may lead to unsound results as in
\texttt{Stream} above.  Again, this can be seen as explicit resource
handling, where resources are variables, terms, and substitutions.  In
Kowalski's terms of \emph{Logic Programming = Logic + Control}
\cite{Kow79}, this leads to the separation of issues of logic
(unification and $SLD$-resolution) and control (underlying
implementation tools) in most parallel logic programming
implementations, as we explain in Section \ref{sec:parallel}.

CoALP offers an alternative solution to this problem.  The coinductive resolution of
CoALP has an inherent ability to handle parallelism. 
Namely, coinductive trees with imposed guardedness conditions provide a natural formalism for
parallel implementation of coinductive derivations. Parallelisation of CoALP is sound by (guarded) program
construction and the construction of coinductive trees.
The main
distinguishing features of parallelism in CoALP are implicit resource
handling and convergence of the issues of logic and control:
no explicit scheduling of parallel processes is needed, and parallelisation is handled by the coinductive derivation algorithm.
We explain this in Section \ref{sec:parallel}.  
%All
%mechanisms introduced in CoALP to guard (co)recursion also become
%necessary and sufficient guards for soundness of parallel
%derivations.

Ultimately, in Section~\ref{sec:impl},  we propose the first implementation of CoALP, available for download from \cite{SK12}.  Its main
distinguishing features are guarded corecursion, parallelism, and implicit handling of corecursive and parallel resources.
%it does not require programmers to label the syntax and thus control
%the work of recursive and corecursive derivation algorithms.  
In Section \ref{sec:concl} we conclude and discuss future work.
%%Instead,
%we suggest one algorithm, which we call coinductive resolution, that
%treats recursive and corecursive programs uniformly, based on a
%definition of coinductive tree that arises from our
%coalgebraic semantics.

\section{SLD Derivations and Trees they
  Generate}\label{sec:lp1}\label{sec:FOLP}

We recall the definitions surrounding the notion
of $SLD$-derivation~\cite{Llo88}, and we consider
various kinds of trees the notion generates.

\subsection{Background Definitions}

\begin{definition}\label{df:syntax}
  A \emph{signature} $\Sigma$ consists of a set of \emph{function
    symbols} $f,g, \ldots$ each equipped with an
  \emph{arity}. The arity of a function symbol is a natural number
  indicating the number of arguments it has.  Nullary
  (0-ary) function symbols are called \emph{constants}.
\end{definition}
 
Given a countably infinite set $Var$ of variables, denoted $x,y,z$,
sometimes with indices $x_1, x_2, x_3, \ldots$, terms are defined as
follows.
\begin{definition}
The set $Ter(\Sigma)$ of \emph{terms} over $\Sigma$ is defined
inductively:
\begin{itemize}
\item $x \in Ter(\Sigma)$ for every $x \in Var$.
\item If $f$ is an n-ary function symbol and $t_1,\ldots
  ,t_n \in Ter(\Sigma) $, then $f(t_1,\ldots
  ,t_n) \in Ter(\Sigma)$.  
\end{itemize}  
\end{definition}

%Substitution is the operation of filling in terms for variables.

\begin{definition}\label{df:subst}
A \emph{substitution} is a function $\theta : Ter(\Sigma) \rightarrow
Ter(\Sigma) $ which satisfies
$$ \theta (f(t_1,\ldots
  ,t_n)) = f(\theta (t_1),\ldots
  ,\theta(t_n))$$
for every n-ary function symbol $f$.
\end{definition}

An \emph{alphabet} consists of a signature $\Sigma$, the set $Var$,
and a set of \emph{predicate symbols} $P_1, P_2, \ldots$, each
assigned an arity.  If $P$ is a predicate symbol of arity $n$ and
$t_1, \ldots, t_n$ are terms, then $P(t_1, \ldots, t_n)$ is a
\emph{formula}, also called an {\em atomic formula} or an
\emph{atom}.
%Complex formulae can be defined using connectives and quantifiers, but we will work only with atoms here.
The \emph{first-order language $\mathcal{L}$} given by an alphabet consists of
the
set of all formulae constructed from the symbols of the alphabet.

\begin{definition}\label{df:language}
  Given a first-order language $\mathcal{L}$, a \emph{logic program}
  consists of a finite set of {\em clauses} of the form $A \gets A_1,
  \ldots , A_n,$ where $A$ is an atom and $A_1, \ldots A_n$ ($n\geq 0$)
  are distinct atoms.  The atom $A$ is called the \emph{head} of the
  clause, and $A_1, \ldots, A_n$ is called its \emph{body}.  Clauses
  with empty bodies are called \emph{unit clauses}.
A \emph{goal} is given by $\gets A_1, \ldots A_n$, where $A_1, \ldots
A_n$ ($n \geq 0$) are distinct atoms.
\end{definition}

Logic programs of Definition \ref{df:language} are also called Horn-clause logic programs \cite{Llo88}.

\begin{example}\label{ex:stream}
  Program \texttt{Stream} from Introduction defines infinite streams of binary bits. Its
  signature consists of two constants, $0$ and $1$, and a binary
  function symbol \texttt{scons}. It involves two predicate symbols,
  \texttt{bit} and \texttt{stream}, and it has five atoms, arranged
  into three clauses, two of which are unit clauses.  The body of the
  last clause contains two atoms. 
	%\vspace{-5pt} \small{
  %  \begin{eqnarray*}
  %      \texttt{bit}(0) & \gets & \\
  %      \texttt{bit}(1) & \gets & \\
  %      \texttt{stream(scons (x,y))} & \gets & \texttt{bit(x)},\texttt{stream(y)}\\
  %  \end{eqnarray*}
%}
\end{example}

\begin{example}\label{ex:listnat}
\texttt{ListNat}
 denotes the logic program
\begin{eqnarray*}
\texttt{nat(0)} & \gets &\\
\texttt{nat(s(x))} & \gets & \texttt{nat(x)}\\
\texttt{list(nil)} & \gets & \\
\texttt{list(cons ( x, y ))} & \gets & \texttt{nat(x), list(y)}\\
\end{eqnarray*}

\end{example}

Operational semantics for logic programs is given by $SLD$-resolution,
a goal-oriented proof-search procedure.  

\begin{definition}\label{df:mgu}
  Let $S$ be a finite set of atoms. A substitution $\theta$ is called
  a \emph{unifier} for $S$ if, for any pair of atoms $A_1$ and $A_2$
  in $S$, applying the substitution $\theta$ yields $A_1\theta =
  A_2\theta$.  A unifier $\theta$ for $S$ is called a \emph{most
    general unifier} (mgu) for $S$ if, for each unifier $\sigma$ of
  $S$, there exists a substitution $\gamma$ such that $\sigma =
  \theta\gamma$. If $\theta$ is an mgu for $A_1$ and $A_2$, moreover, $A_1\theta = A_2$, then $\theta$ is a \emph{term-matcher}.
\end{definition}

We assume that, given a goal $G =\;  \gets B_1, \ldots , B_n$, there is an
algorithm that, given $B_1, \ldots , B_n$, outputs $B_i$, $i \in \{1,
\ldots , n\}$. The resulting atom $B_i$ is called the \emph{selected
  atom}. Most PROLOG implementations use the algorithm that selects
the left-most atom in the list $B_1, \ldots , B_n$ and proceeds inductively.

\begin{definition}\label{df:SLD}
Let a goal $G$ be $\gets A_1,\ldots ,A_m, \ldots, A_k$ and a clause $C$ be
$A\gets B_1, \ldots ,B_q$. Then $G'$ is \emph{derived} from $G$ and $C$ using
mgu
$\theta$ if the following conditions hold:
%\item[$\bullet$] $A_m$ is an atom, called the \emph{selected} atom, in $G$.
\item[$\bullet$] $\theta$ is an mgu of the selected
atom $A_m$  in $G$ and $A$;
\item[$\bullet$] $G'$ is the goal $\gets(A_1, \ldots, A_{m-1},B_1, \ldots
,B_q, A_{m+1},
\ldots, A_k)\theta$.
\end{definition}

A clause $C^*_i$ is a \emph{variant} of the clause $C_i$ if $C^*_i =
C_i \theta$, with $\theta$ being a variable renaming substitution such
that variables in $C_i^*$ do not appear in the derivation up to
$G_{i-1}$. This process of renaming variables is called
\emph{standardising the variables apart}; we assume it throughout the
paper without explicit mention.

\begin{definition}\label{df:SLDderiv}
  An \emph{SLD-derivation} of $P\cup \{G\}$ consists of a sequence of
  goals $G=G_0, G_1, \ldots$ called \emph{resolvents}, a sequence
  $C_1,C_2, \ldots$ of variants of program clauses of $P$, and a
  sequence $\theta_1,\theta_2,\ldots$ of mgu's such that each $G_{i+1}$
  is derived from $G_i$ and $C_{i+1}$ using $\theta_{i+1}$. An
  \emph{SLD-refutation} of $P \cup \{G\}$ is a finite $SLD$-derivation
  of $P \cup \{G\}$ for which the last goal $G_n$ is empty, denoted by
  $\Box$. If $G_n = \Box$, we say that the refutation has length
  $n$. The composite $\theta_1 \theta_2, \ldots$ is called a
  \emph{computed answer}.
\end{definition}

Traditionally, logic programming has been modelled by {\em least fixed
  point} semantics~\cite{Llo88}. Given a logic program $P$, one lets
$B_P$ (also called a \emph{Herbrand base}) denote the set of atomic
ground formulae generated by the syntax of $P$, and one defines
$T_P(I)$ on  $2^{B_P}$ by sending $I$ to the set $\{A \in B_P : A
\gets A_1,...,A_n$ is a ground instance of a clause in $P$ with
$\{A_1, ... ,A_n\} \subseteq I\}$.  The least fixed point of $T_P$ is
called the {\em least Herbrand model} of $P$ and duly satisfies
model-theoretic properties that justify that expression \cite{Llo88}.

SLD-resolution is sound and complete with respect to least
fixed point semantics~\cite{Llo88}.  
The classical theorems of soundness and completeness of this operational semantics \cite{Llo88,FalaschiLPM89,FalaschiLMP93} 
show that every atom in the set computed by the least fixed point of $T_P$ has a finite SLD-refutation, and vice versa.
Alternatively, in \cite{KP96,KP08}, we described an algebraic (fibrational) semantics
for logic programming and proved soundness and completeness results
for it with respect to SLD-resolution. Other forms of algebraic
semantics for logic programming have been given in
\cite{AmatoLM09,BruniMR01}. See also Figure \ref{pic:sem}.

However, Programs like \texttt{Stream} induce infinite SLD-derivations and require a greatest fixed point semantics. 
The greatest fixed point semantics for SLD derivations yields soundness, but not completeness results.

\begin{example}\label{ex:lp2}
The program \texttt{Stream}
is characterised by the greatest fixed point of
the $T_P$ operator, which contains $\texttt{stream}(scons^{\omega}(X,Y))$; 
whereas no infinite term can be computed via SLD-resolution.  
\end{example}

\begin{example}\label{ex:lp3}
For the program $R(x)  \gets  R(f(x))$, 
the greatest fixed point of
the $T_P$ operator contains $R(f^{\omega}(a))$,
but no infinite term is computed by SLD-resolution.  
\end{example}

There have been numerous attempts to resolve the mismatch between infinite derivations and greatest fixed point semantics 
\cite{GuptaBMSM07,Jaume02,Llo88,Majki?04coalgebraicsemantics,SimonBMG07}. 
Here, extending~\cite{KP11,KP11-2}, we give a uniform semantics of infinite SLD derivations for both finite and infinite objects, see Figure \ref{pic:sem}.
%But infinite SLD derivations of both finite and infinite objects
%did not receive a uniform semantics until we introduced the coalgebraic semantics in \cite{KP11,KP11-2}, see Figure \ref{pic:sem}.
%Prior to our work, 
Coalgebraic semantics has been used to model various aspects of programming \cite{JR97,Milner89,R2000}, in particular, logic programming~\cite{BonchiM09,CominiLM01};
here, we use it to remedy incompleteness for corecursion.
 %We applied our original coalgebraic  semantics \cite{KMP10,KP11} to remedy the incompleteness result for (co)recursion in logic programming.

\begin{figure}
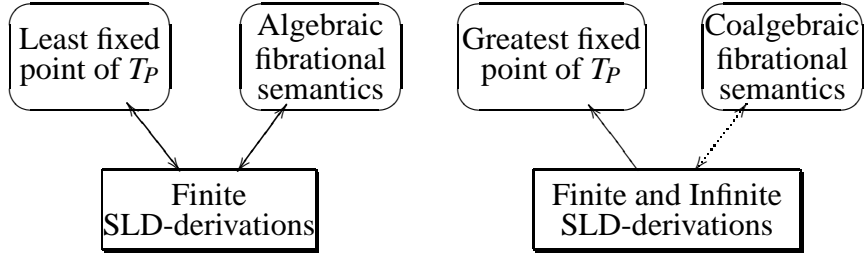

$$
\xy0;/r.12pc/: 
*[o]=<80pt,30pt>\hbox{\txt{Finite\\ SLD-derivations}}="a"*\frm<8pt>{-,},
(-30,40)*[o]=<60pt,40pt>\hbox{\txt{Least fixed\\ point of $T_P$}}="r"*\frm<8pt>{-},
(30,40)*[o]=<60pt,40pt>\hbox{\txt{Algebraic\\ fibrational\\ semantics}}="b"*\frm<8pt>{-},
(120,0)*[o]=<100pt,30pt>\hbox{\txt{Finite and Infinite\\SLD-derivations}}="c"*\frm<8pt>{-,},
(90,40)*[o]=<70pt,40pt>\hbox{\txt{Greatest fixed\\ point of $T_P$}}="d"*\frm<8pt>{-},
(150,40)*[o]=<60pt,40pt>\hbox{\txt{Coalgebraic\\ fibrational\\ semantics}}="e"*\frm<8pt>{-},
"a";"b" **\dir{-} ?>*\dir{>},
"b";"a" **\dir{-} ?>*\dir{>},
"a";"r" **\dir{-} ?>*\dir{>},
"r";"a" **\dir{-} ?>*\dir{>},
"c";"d" **\dir{-} ?>*\dir{>},
%"d";"h" **\dir{-} ?>*\dir{>},
"c";"e" **\dir{.} ?>*\dir{>},
"e";"c" **\dir{.} ?>*\dir{>}
%+<0em,0em>@+ 
\endxy
$$
\caption{\footnotesize{Alternative semantics for finite and infinite SLD-derivations. 
The arrows $\leftrightarrow$ show the semantics that are both sound and complete, and the arrow $\rightarrow$
indicates sound incomplete semantics. The dotted arrow indicates the sound and complete semantics we propose here.}}\label{pic:sem}
\end{figure}

\subsection{Tree Structures in Analysis of Derivations}\label{sec:trees2}

Coalgebraic Logic Programming (CoALP) we introduce in later sections uses a variety of tree-structures both for giving semantics to logic programming  and
for  implementation of CoALP. Here, we briefly survey the kinds of trees traditionally used in logic programming.

For a given goal $G$, there may be several possible $SLD$-derivations
as there may be several clauses with the same head. The definition of
\emph{SLD-tree} allows for this as follows.
 
%Operationally, $SLD$-derivations can be characterised by \emph{SLD-trees}. 
%and \emph{proof-trees}, the latter called proof-trees for their close relation to proof-trees in e.g. sequent calculus. 

\begin{definition}\label{df:SLDtree}
  Let $P$ be a logic program and $G$ be a goal. An \emph{SLD-tree} for
  $P\cup \{G\}$ is a possibly infinite tree $T$ satisfying the
  following:
\begin{enumerate}
\item\label{it:2} the root node is $G$
\item\label{it:1} each node of the tree is a (possibly empty) goal
\item\label{it:3} if $\gets A_1,\ldots , A_m$, $m>0$ is a node in $T$,
  and it has $n$ children, then there exists $A_k \in A_1,\ldots ,
  A_m$ such that $A_k$ is unifiable with exactly $n$ distinct clauses
  $C_1 = A^1 \gets B^1_1, \ldots , B^1_q$, ..., $C_n = A^n \gets
  B^n_1, \ldots , B^n_r$ in $P$ via mgu's $\theta_1, \ldots \theta_n$,
  and, for every $i \in \{1, \ldots n\}$, the $i$th child node is
  given by the goal
$$\gets (A_1,\ldots , A_{k-1}, B^i_1, \ldots , B^i_q, A_{k+1}, \ldots , A_m)\theta_i$$  
\item\label{it:4} nodes which are the empty clause have no children.
\end{enumerate}
\end{definition} 

Each branch of an $SLD$-tree is an $SLD$-derivation of $P\cup
\{G\}$. Branches corresponding to successful derivations are called
\emph{success branches}, branches corresponding to infinite
derivations are called \emph{infinite branches}, and branches
corresponding to failed derivations are called \emph{failure
  branches}. A distinctive feature of the SLD-trees is that they allow to exploit
alternative choices of clauses in the proof-search; for this reason, they are also known as \emph{or-trees}.	
See Figure \ref{fig:SLD1}.

In parallel logic programming~\cite{GPACH12}, or-parallelism is exploited when more than one clause unifies with
the goal. It is thus a way of efficiently searching for
solutions to a goal, by exploring alternative solutions in parallel.
%It corresponds to the parallel execution of the branches of an
%$SLD$-tree, cf.\ Definition \ref{df:SLDtree} and Example
%\ref{ex:lptree}. 
It has been implemented in Aurora~\cite{Lusk90} 
and Muse~\cite{AK91}, 
both
of which have shown good speed-up results over a considerable range of
applications.

	Each $SLD$-derivation, or, equivalently, each branch of
an $SLD$-tree, can be represented by a proof-tree, defined as follows.
 
 \begin{definition}\label{df:prooftree}
   Let $P$ be a logic program and $G =\; \gets A$ be an atomic goal. A
   \emph{proof-tree} for $A$ is a possibly infinite tree $T$ such that
\begin{itemize}
\item $A$ is the root of $T$.
\item Each node in $T$ is an atom.
\item For every node $A'$ occurring in $T$, if $A'$ has children $C_1,
  \ldots, C_m$, then there exists a clause $B \gets B_1, \ldots, B_m$
  in $P$ such that $B$ and $A'$ are unifiable with mgu $\theta$, and
  $B_1\theta = C_1$, ... ,$B_m\theta = C_m$.
\end{itemize}
\end{definition} 

Proof-trees exploit the branching occurring when one constructs derivations for several atoms in a goal; and are also known as \emph{and-trees}. 
In parallel logic programming, and-parallelism arises when more than one atom is present in
the goal.  That is, given a goal $G \ = \ \gets B_1, \ldots B_n$, an
and-parallel algorithm for $SLD$-resolution looks for
$SLD$-derivations for each $B_i$ simultaneously, subject to the
condition that the atoms must not share variables.  Such cases are
known as independent and-parallelism.
%The method corresponds to the parallel computation of all branches in a proof-tree, see Definition \ref{df:prooftree} and Example \ref{ex:lptree}.
Independent and-parallelism has been successfully exploited in
$\&$-PROLOG~\cite{HermenegildoG90}.
%\begin{example}\label{ex:lptree-SLD}

\begin{example}
Figure \ref{fig:SLD1} depicts a proof tree and an SLD-tree for the
goal \texttt{list(x)} in \texttt{ListNat}. 
\end{example}

\begin{example}\label{ex:stream1}
  \texttt{Stream}, i.e., Example~\ref{ex:stream}, allows the following
  infinite $SLD$-derivation
  \\
  \texttt{stream(x)} $\xrightarrow{x/scons(y,z)}$
  \texttt{bit(y),stream(z)} $\xrightarrow{y/0}$ \texttt{stream(z)}
  $\rightarrow$ $\ldots$
  \\
  containing an infinite repetition of \texttt{stream(x)} for various
  variables $x$. So \texttt{Stream} gives rise to infinite
  $SLD$-trees.
	\end{example}

% A similar goal
%\texttt{stream(x)} in the logic program \texttt{Stream} produces a
%conceptually different $SLD$-tree in that it does not have leaf
%nodes. The nodes alternate between \texttt{stream(x)} and
%\texttt{bit(y)},\texttt{stream(z)}, modulo variable renaming.

\begin{figure}

\begin{center}
  \begin{tikzpicture}[baseline=(current bounding box.north),grow=down,level distance=10mm,sibling distance=15mm,font=\footnotesize]
  \node {$\mathtt{list(x)}$}
    		child { node {$\mathtt{nat(y)}$}
       	child { node{$\Box$}
		%		edge from parent
				node[left] {$\theta_0\theta_1$}
				}
	%			 edge from parent
				node[right] {} }
     child { node {$\mathtt{list(z)}$}
       child { node{$\Box$}
			%edge from parent
				node[left] {$\theta_0\theta_2$}}
       }; 
  \end{tikzpicture}
\quad\quad\quad
\begin{tikzpicture}[baseline=(current bounding box.north),grow=down,level distance=8mm,sibling distance=20mm,font=\footnotesize]
    \node {$\mathtt{list(x)}$}
		child [sibling distance=40mm]{ node {$\Box$}
		edge from parent
				node[left] {$\theta_4$}}
      child [sibling distance=40mm]{ node {$\mathtt{nat(y),list(z)}$}
        child  [sibling distance=60mm] { node {$\mathtt{list(z)}$}
          child  [sibling distance=20mm]{ node{$\mathtt{\Box}$}
					edge from parent
				node[left] {$\theta_2$}}
					child [sibling distance=20mm] { node{$\mathtt{list(z1)}$}
					   child  [sibling distance=10mm] {node{$\Box$}}
					   child  [sibling distance=10mm]{node{$\vdots$}}}
					      edge from parent
				node[right] {$\theta_1$}
									}
        child  [sibling distance=40mm]{ node {$\mathtt{nat(y1), list(z)}$}
					 child  [sibling distance=40mm] { node {$\mathtt{list(z)}$}
            child  [sibling distance=20mm] { node {$\Box$}}
            child  [sibling distance=20mm] { node{$\mathtt{list(z1)}$}
					child  [sibling distance=10mm]{node{$\Box$}}
					child  [sibling distance=10mm] {node{$\vdots$}}}
								}			  
				child  [sibling distance=40mm]{ node {$\mathtt{nat(y2), list(z)}$}
					child  [sibling distance=30mm]{node{$\mathtt{list(z)}$}
					  child  [sibling distance=10mm]{node{$\Box$}}
						child  [sibling distance=10mm] {node{$\vdots$}}
						}
					child  [sibling distance=30mm] {node{$\mathtt{nat(y3), list(z)}$}
					child  [sibling distance=20mm] {node{$\vdots$}}}}
					}
					edge from parent
				node[right] {$\theta_0$}
					}; 
\end{tikzpicture}

\end{center}

\caption{\footnotesize{A proof tree and an $SLD$-tree for \texttt{ListNat} with the goal $list(x)$. 
A possible computed answer is given by the composition of $\theta_0=x/cons(y,z)$, $\theta_1= y/0$, $\theta_2= z/nil$;
Another computed answer is $\theta_4 = x/nil$.}}
\label{fig:SLD1}
\end{figure}

 The and-trees, or-trees and their combination have been used in parallel implementations of logic programming, \cite{GuptaC94,PontelliG95,GPACH12}.
The main idea was that branches in the $SLD$-trees and proof-trees can be exploited in parallel.  
For certain cases of logic programs, such as ground logic programs or
some fragments of DATALOG programs, one can do refutations for all the
atoms in the goal in parallel~\cite{Kanellakis88,UllmanG88}. But in
general, $SLD$-resolution is P-complete, and hence inherently
sequential~\cite{DKM84}.

The next definition formalises the notion of and-or parallel trees~\cite{GuptaC94,GPACH12}, but we restrict it to the ground cases, where such derivations are sound.

%Varibale dependencies
%In order to model more parallelism than that yielding $SLD$-trees,
%Gupta and Costa, in~\cite{GuptaC94} introduced a new kind of tree as
%follows.

\begin{definition}\label{df:andortree}
  Let $P$ be a ground logic program and let $\ \gets A$ be an atomic goal
  (possibly with variables). The \emph{and-or parallel derivation
    tree} for $A$ is the possibly infinite tree $T$ satisfying the
  following properties.
\begin{itemize}
\item $A$ is the root of $T$.
\item Each node in $T$ is either an and-node or an or-node.
\item Each or-node is given by $\bullet$.
\item Each and-node is an atom.
\item For every node $A'$ occurring in $T$, if $A'$ is unifiable with only one clause $B \gets B_1, \ldots , B_n$ in $P$ with mgu $\theta$,
then $A'$ has $n$ children given by and-nodes $B_1\theta, \ldots B_n\theta$.
\item For every node $A'$ occurring in $T$, if $A'$ is unifiable with exactly $m>1$ distinct clauses $C_1, \ldots , C_m$ in $P$ via mgu's $\theta_1,\ldots , \theta_m$, 
then $A'$ has exactly $m$ children given by or-nodes, such that, for every $i \in \{1, \ldots ,m \}$, if $C_i  = B^i \gets B^i_1, \ldots ,B^i_n$, then the $i$th or-node has $n$ children given by and-nodes 
$B^i_1\theta_i, \ldots ,B^i_n\theta_i$.
\end{itemize}
\end{definition} 

Examples of and-or trees are given in Figures  \ref{pic:tree0} and \ref{pic:and-or}. 
In Section \ref{sec:parallel}, we return to the questions of parallelism for CoALP.

\section{Coalgebraic Semantics}\label{sec:OS}

In this section, we develop the coalgebraic semantics of logic programming, starting from the coalgebraic calculus of infinite trees, through to the observational semantics of SLD-derivations.

\subsection{A Coalgebraic Calculus of Infinite Trees}\label{sec:trees}

For the purposes of this paper, a {\em tree} $T$ consists of a set
$T_n$ for each natural number $n$, together with a function
$\delta_n:T_{n+1}\longrightarrow T_n$, yielding

\[
\ldots\; T_{n+1} \longrightarrow T_n \longrightarrow\; \ldots\; 
\longrightarrow T_1 \longrightarrow T_0 = 1
\]

An element of $T_n$ is called a {\em node} of $T$ at height $n$. The
unique element of $T_0$ is the {\em root} of the tree; for any $x\in
T_{n+1}$, $\delta_n(x)$ is called the {\em parent} of $x$, and $x$ is
called a {\em child} of $\delta_n(x)$. Observe that trees may have
infinite height, but if all $T_n$'s are finite, the tree is {\em
  finitely branching}.

An $L$-{\em labelled tree} is a tree $T$ together with a function
$l:\bigsqcup_{n \in \mathbb{N}}T_n \longrightarrow L$. The definitions of
$SLD$-tree and proof tree, Definitions~\ref{df:SLDtree}
and~\ref{df:prooftree} respectively, are of finitely branching
labelled trees. Both satisfy a further property: for any node $x$, the
children of $x$, i.e., the elements of $\delta^{-1}(x)$, have distinct
labels. This reflects the definition of a logic program,
following~\cite{Llo88}, as a set of clauses rather than as a list, and
the distinctness of atoms in the body of a clause. We accordingly say
an $L$-labelled tree is {\em locally injective} if for any node $x$,
the children of $x$ have distinct labels. Given a set $L$ of labels,
we denote the set of finitely branching locally injective $L$-labelled
trees by $Tree_L$.

We briefly recall fundamental constructs of
coalgebra, see also~\cite{JR97}.  

\begin{definition}\label{df:coalg}
For any endofunctor $H:C\longrightarrow C$, an
$H$-{\em coalgebra} consists of an object $X$ of $C$ together with a
map $x:X\longrightarrow HX$.  A {\em map} of $H$-coalgebras from
$(X,x)$ to $(Y,y)$ is a map $f:X\longrightarrow Y$ in $C$ such that
the diagram

\begin{center}
\begin{tikzpicture}[scale=2]
\draw (1,1) node {$X$}; 
\draw[->] (1.2,1) -- (1.8,1);
\draw (1,0) node {$HX$};
\draw[->] (1,0.9) -- (1,0.1);
%\draw (1.5,0.5) node {$\leq$};
\draw (2,1) node {$Y$};
\draw[->] (1.2,0) -- (1.8,0);
\draw (2,0) node {$HY$};
\draw[->] (2,0.9) -- (2,0.1);
\end{tikzpicture}
\end{center}
commutes. 
\end{definition}
$H$-coalgebras and maps of $H$-coalgebras form a 
category $H$-$coalg$, with composition determined by that in $C$,
together with a forgetful functor $U:H$-$coalg\longrightarrow C$, taking
an $H$-coalgebra $(X,x)$ to $X$. 

\begin{example}\label{ex:Pf}
Let $P_f$ denote the endofunctor on $Set$ that sends a set $X$ to the
set of its finite subsets, and sends a function $h:X\longrightarrow
Y$ to the function $P_f(h):P_f(X)\longrightarrow P_f(Y)$ sending a
subset $A$ of $X$ to its image $f(A)$ in $Y$. A $P_f$-coalgebra
$(X,x)$ is a finitely branching transition system, one of
the leading examples of coalgebra~\cite{JR97}.
\end{example} 

For any set $L$, the set $Tree_L$ of finitely branching locally
injective $L$-labelled trees possesses a canonical $P_f$-coalgebra
structure on it, sending $(T,l)$ to the set of $L$-labelled trees
determined by the children of the root of $T$. With mild overloading
of notation, we denote this $P_f$-coalgebra by $Tree_L$.

\begin{theorem}\label{thm:tree} The functor
  $U:P_f$-$Coalg\longrightarrow Set$ has a right adjoint sending any
  set $L$ to $Tree_L$.
\end{theorem}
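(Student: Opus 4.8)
The plan is to recognise that a right adjoint to the forgetful functor $U:P_f\text{-}Coalg\longrightarrow Set$ is precisely the \emph{cofree coalgebra} construction, so it suffices to show that, for each set $L$, the coalgebra $(Tree_L,\gamma)$ — where $\gamma$ sends a tree to the set of subtrees rooted at the children of its root — is the cofree $P_f$-coalgebra on $L$. Writing $G_L$ for the endofunctor $G_L(X)=L\times P_f(X)$ on $Set$, this is equivalent to exhibiting $Tree_L$ as the terminal $G_L$-coalgebra, since the cofree $P_f$-coalgebra on $L$ is exactly the terminal coalgebra for $G_L$, whose two projections recover the $P_f$-structure and the counit. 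Accordingly, I would equip $Tree_L$ with the pairing $\langle\epsilon_L,\gamma\rangle:Tree_L\longrightarrow L\times P_f(Tree_L)$, where $\epsilon_L$ reads off the label of the root, and aim to prove this map is an isomorphism; together with finality this yields both the adjunction and the stated description of the right adjoint. Because the terminal sequence for $P_f$ is known not to converge at stage $\omega$, I would deliberately avoid Ad\'amek-style iteration and instead verify directly that the concrete tree data $(T_n,\delta_n)$ makes $\langle\epsilon_L,\gamma\rangle$ an isomorphism.

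The core of the argument is the universal property, which I would verify by tree unfolding. Given any $P_f$-coalgebra $(X,x)$ and any function $f:X\longrightarrow L$, I would define a map $\bar f:X\longrightarrow Tree_L$ corecursively: $\bar f(a)$ is the tree whose root carries the label $f(a)$ and whose set of immediate subtrees is $\{\bar f(b)\mid b\in x(a)\}$. The steps are then: (i) check that each $\bar f(a)$ is finitely branching, which follows from $x(a)$ being finite and $P_f$ preserving finiteness at every depth; (ii) check that $\bar f$ is a morphism of $P_f$-coalgebras, i.e. $\gamma\circ\bar f=P_f(\bar f)\circ x$, which is immediate from the defining equation for $\bar f$; (iii) check the counit condition $\epsilon_L\circ\bar f=f$; and (iv) establish uniqueness by finality, once $\langle\epsilon_L,\gamma\rangle$ has been shown to be an isomorphism, so that any coalgebra map into $Tree_L$ is determined by its composite with $\epsilon_L$. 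Naturality of the resulting bijection $Set(UX,L)\cong P_f\text{-}Coalg((X,x),(Tree_L,\gamma))$ in both variables is then routine, assembling the data into an adjunction $U\dashv Tree_{(-)}$.

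The hard part, and the step demanding real care, is reconciling \emph{local injectivity} with the purely set-theoretic branching of $P_f$, which must be confronted in both directions. In the forward direction I must show that the unfolding $\bar f(a)$ genuinely lands in $Tree_L$, i.e. that at each node the children have \emph{distinct labels}: since the children arise from the finite set $x(a)$, but distinct elements $b\neq b'$ may satisfy $f(b)=f(b')$, I must analyse precisely how passing from the set $x(a)$ to the set of resulting subtrees $\{\bar f(b)\mid b\in x(a)\}$ interacts with the labelling and the local injectivity constraint. In the reverse direction, for the isomorphism $\langle\epsilon_L,\gamma\rangle$ I must analyse surjectivity: given $\ell\in L$ and a finite set $S\in P_f(Tree_L)$, grafting the trees of $S$ as the children of a new root labelled $\ell$ yields a locally injective tree exactly when the members of $S$ have pairwise distinct root labels, so the subtlety is precisely how the distinct-labels condition constrains the admissible children-sets. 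I expect this reconciliation to be the main obstacle, and the place where the locally injective restriction on $Tree_L$ — motivated earlier by logic programs being sets of clauses with distinct body atoms — does genuine work; once it is settled, injectivity of $\langle\epsilon_L,\gamma\rangle$, finite branching, and the remaining coalgebraic bookkeeping are straightforward.
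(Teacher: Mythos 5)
Your proposal is incomplete at exactly the point you flag as ``the main obstacle,'' and that obstacle is not something that can be settled by more careful bookkeeping: it is a genuine obstruction to your strategy. Concretely, take $L=\{\ell\}$ a singleton and the $P_f$-coalgebra $X=\{a,b,c\}$ with $x(a)=\{b,c\}$, $x(b)=\emptyset$, $x(c)=\{b\}$, with $f:X\longrightarrow L$ the unique map. Your corecursive unfolding gives $\bar f(b)$ the one-node tree and $\bar f(c)$ the two-node chain, and then $\bar f(a)$ must have children-set $\{\bar f(b),\bar f(c)\}$: two distinct children carrying the same root label $\ell$. So $\bar f(a)$ is not locally injective, i.e.\ $\bar f$ does not land in $Tree_L$. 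Worse, this is not an artefact of your particular construction: the homomorphism square forces \emph{any} coalgebra map $g:(X,x)\longrightarrow(Tree_L,\gamma)$ to satisfy $\gamma(g(a))=\{g(b),g(c)\}$ with $g(b)\neq g(c)$ equally labelled, so no coalgebra morphism from $(X,x)$ to $(Tree_L,\gamma)$ exists at all. The same example defeats your plan to prove $\langle\epsilon_L,\gamma\rangle$ an isomorphism: the pair $(\ell,\{\text{one-node tree},\text{two-node chain}\})$ is not in its image, so the map is not surjective --- precisely the failure you yourself record when you note that grafting requires pairwise distinct root labels. You identified both halves of the problem correctly, but treated them as a subtlety to be reconciled; they are in fact a counterexample to the route you outline.

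The paper's own proof diverges from yours at exactly this point: it does not index the children of $\bar h(a)$ by the elements of $x(a)$, but by the image set $P_f(h_0)(x(a))=\{h(b):b\in x(a)\}$ --- siblings whose labels coincide are merged into a single child --- and it then iterates levelwise via $h_1=P_f(h_0)(x(-))$, and so on. This makes local injectivity hold by construction, and it is the one idea missing from your outline. You should be aware, though, that this merging is in tension with the strict homomorphism property your approach (rightly) insists on: in the example above the paper's construction assigns to $a$ a three-node chain, whose set of children-subtrees is a singleton rather than $\{\bar h(b),\bar h(c)\}$, so the defining square of a $P_f$-coalgebra map fails; the paper's claim that unicity and well-definedness are ``determined by the construction'' glosses over exactly the case of equally labelled, non-bisimilar siblings. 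The standard way to obtain a true statement is to change the carrier: take finitely branching $L$-labelled trees up to bisimilarity (equivalently, strongly extensional trees, where distinct siblings are required to be non-bisimilar rather than distinctly labelled). With that carrier your corecursive-unfolding argument, including the finality and naturality bookkeeping you describe, goes through; with $Tree_L$ as actually defined, it cannot.
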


\proof We have already seen that for any set $L$, the set $Tree_L$
possesses a canonical $P_f$-coalgebra structure given by sending an
$L$-labelled tree $(T,l)$ to the set of labelled trees determined by
the children of the root of $T$.

For the universal property, suppose we are given a $P_f$-coalgebra
$(X,x)$ and a function $h:X\longrightarrow L$. Put $h_0 =
h:X\longrightarrow L$.  For any $a\in X$, $x(a)$ is a finite subset of
$X$.  So $P_f(h_0)(x(a))$ is a finite subset of $L$.  Send $a$ to the
tree generated as follows: the root is labelled by $h_0(a)$; it has
$P_f(h_0)(x(a))$ children, each labelled by the corresponding element
of $P_f(h_0)(x(a))$; replace $h_0:X\longrightarrow L$ by $h_1 =
P_f(h_0)(x(-)):X\longrightarrow P_f(L)$, and continue inductively.

The unicity of this as a map of coalgebras is determined by its
construction together with the local injectivity condition; its
well-definedness follows from the finiteness of any element of
$P_f(X)$. $\Box$

We adapt this analysis to give a semantic account of the way in
which a logic program generates a tree of computations.

Given a set $L$ of labels, an $L$-{\em labelled} $\& \vee$-{\em tree}
is a finitely branching tree $T$ together with a function
$l:\bigsqcup_{n \in \mathbb{N}}T_{2n} \longrightarrow L$. In an
$L$-labelled $\&\vee$-tree, the nodes of even height are called
$\&$-nodes, and the nodes of odd height are called $\vee$-nodes. So
the $\&$-nodes, such as the root, are labelled, while the $\vee$-nodes
are not.

The and-or parallel derivation trees of Definition~\ref{df:andortree}
are labelled $\&\vee$-trees satisfying an additional property that
reflects logic programs consisting of sets rather than lists of
clauses and the distinctness of atoms in the body of a clause.  We
express the condition semantically as follows: an $L$-labelled
$\&\vee$-tree is {\em locally injective} if the children of any
$\vee$-node have distinct labels, and if, for any two distinct children
of an $\&$-node, the sets of labels of their children are distinct
(but may have non-trivial intersection), i.e., for any $x$, for any
$y,z\in \delta^{-1}(x)$, one has $l(\delta^{-1}(y)) \neq
l(\delta^{-1}(z))$. Given a set $L$ of labels, we denote the set of
locally injective $L$-labelled $\&\vee$-trees by $\&\vee$-$Tree_L$.

For any set $L$, the set $\&\vee$-$Tree_L$ has a canonical
$P_fP_f$-coalgebra structure on it, sending $(T,l)$ to the set of sets
of labelled $\&\vee$-trees given by the set of sets of $L$-labelled
$\&\vee$-trees determined by the children of each child of the root of
$T$. Again, we overload notation, using $\&\vee$-$Tree_L$ to denote
this coalgebra.

\begin{theorem}\label{thm:&vtree}
  The functor $U:P_fP_f$-$Coalg\longrightarrow Set$ has a right
  adjoint sending any set $L$ to $\&\vee$-$Tree_L$.
\end{theorem}

\proof A proof is given by a routine adaption of the proof of
Theorem~\ref{thm:tree}. $\Box$

There are assorted variants of Theorem~\ref{thm:&vtree}. We shall
need one for $L$-labelled $\&\vee_c$-trees, an $L$-labelled
$\&\vee_c$-tree being the generalisation of $L$-labelled $\&\vee$-tree
given by allowing countable branching at even heights, i.e., allowing
the root to have countably many children, but each child of the root
to have only finitely many children, etcetera.  Letting $P_c$ denote
the functor sending a set $X$ to the set of its countable subsets,
we have the following result.

\begin{theorem}\label{thm:&vctree}
  The functor $U:P_cP_f$-$Coalg\longrightarrow Set$ has a right
  adjoint sending any set $L$ to $\&\vee_c$-$Tree_L$.
\end{theorem}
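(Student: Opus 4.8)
The plan is to mimic the proof of Theorem~\ref{thm:tree}, incorporating the two-level structure already used for Theorem~\ref{thm:&vtree} and the refined local injectivity condition for $\&\vee$-trees. The canonical $P_cP_f$-coalgebra structure on $\&\vee_c$-$Tree_L$ has been described above, so the work lies entirely in verifying the universal property: given any $P_cP_f$-coalgebra $(X,x)$ together with a function $h:X\longrightarrow L$, I must produce a unique $P_cP_f$-coalgebra map $\hat h:(X,x)\longrightarrow \&\vee_c$-$Tree_L$ whose composite with the labelling $\&\vee_c$-$Tree_L\longrightarrow L$ recovers $h$.

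First I would build $\hat h$ by the same height-indexed induction as in Theorem~\ref{thm:tree}. Fix $a\in X$. The root of $\hat h(a)$ is an $\&$-node labelled $h(a)$. Since $x(a)\in P_c(P_f(X))$ is a countable set of finite subsets of $X$, I attach one $\vee$-node child of the root for each element $S\in x(a)$, giving countable branching at height $0$ as required for a $\&\vee_c$-tree; the $\vee$-nodes themselves carry no label. To each such $\vee$-node, corresponding to a finite $S\subseteq X$, I attach one $\&$-node child for each $b\in S$, labelled $h(b)$, giving finite branching at the odd level. Replacing $h$ by $b\mapsto x(b)$ and iterating produces the full tree, exactly paralleling the substitution of $h_0$ by $P_f(h_0)(x(-))$ in the base case.

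I would then check that $\hat h$ lands in $\&\vee_c$-$Tree_L$ and is a map of coalgebras. Countability of the even-level branching comes from $P_c$ and finiteness of the odd-level branching from $P_f$, matching the definition of $\&\vee_c$-tree verbatim. That $\hat h$ commutes with the two coalgebra structures is immediate from the construction, since the children-of-children of the root of $\hat h(a)$ are by definition the trees $\hat h(b)$ for $b$ ranging over the finite sets in $x(a)$. Uniqueness follows as before: any coalgebra map over $h$ is forced node-by-node by the recursion, and the local injectivity condition for $\&\vee$-trees — distinct labels among the children of each $\vee$-node, and distinct label-sets among the grandchildren of each $\&$-node — guarantees that no genuine choice of tree remains, so $\hat h$ is pinned down.

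The one place where the argument differs from Theorems~\ref{thm:tree} and~\ref{thm:&vtree}, and hence the point I would treat most carefully, is the well-definedness at the even levels. In the base case well-definedness rested on the finiteness of elements of $P_f(X)$; here the outer layer is $P_c$, so I must instead invoke countability to ensure that each $\&$-node receives only countably many $\vee$-children and that the resulting object is a legitimate $\&\vee_c$-tree rather than an arbitrarily branching one. Once this is observed, everything else is a verbatim transcription of the earlier proofs, so I expect the argument to be short.
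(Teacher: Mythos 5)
Your overall strategy is exactly the paper's: the paper offers no separate argument for Theorem~\ref{thm:&vctree}, presenting it as a variant of Theorem~\ref{thm:&vtree}, which in turn is proved by routine adaptation of Theorem~\ref{thm:tree}; and the one genuinely new ingredient you isolate (countable branching at even heights, supplied by $P_c$) is indeed the intended adaptation. However, your construction of $\hat h$ deviates from the paper's in a way that creates a real gap. You attach one $\vee$-child per element $S\in x(a)$ and one $\&$-child per element $b\in S$, labelled $h(b)$. The paper instead indexes children by \emph{labels}: in the proof of Theorem~\ref{thm:tree}, the root of the tree assigned to $a$ has $P_f(h_0)(x(a))$ children, one for each element of the \emph{image} of $x(a)$ under $h_0$, not one for each element of $x(a)$. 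The two constructions differ whenever $h$ is non-injective, and then yours fails: if $b\neq b'\in S$ with $h(b)=h(b')$, the $\vee$-node corresponding to $S$ acquires two children with the same label; and if two distinct $S,S'\in x(a)$ have the same image under $P_f(h)$, the corresponding $\vee$-nodes have children carrying identical label sets. Either way the resulting tree violates the local injectivity condition for $\&\vee$-trees, so it is not an element of $\&\vee_c$-$Tree_L$ at all: $\hat h$ is not well-defined into the stated codomain. Thus the step you single out as delicate (countability at even levels) is fine, but it is not where well-definedness is actually at risk; the risk is landing among the \emph{locally injective} trees, and you invoke local injectivity only for uniqueness, without noticing that your own construction can break it.

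The repair that matches the paper is to collapse along the labelling at every stage: index the $\vee$-children of an $\&$-node by the countable set $P_cP_f(h)(x(a))$ of label-sets, index the $\&$-children of a $\vee$-node by the corresponding finite set of labels, and continue with $h$ replaced by $P_cP_f(h)(x(-))$, exactly as the paper replaces $h_0$ by $h_1$. Be aware, though, that this repair shifts the burden elsewhere: once such identifications have been made, your claim that commutation with the coalgebra structures is ``immediate from the construction'' no longer stands, because distinct $b,b'$ with $h(b)=h(b')$ may generate distinct trees $\hat h(b)\neq\hat h(b')$ even though they were merged at the first level, so the equality of the set of sets of grandchild subtrees of $\hat h(a)$ with $P_cP_f(\hat h)(x(a))$ requires an argument. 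The paper's own proof of Theorem~\ref{thm:tree} is silent on this tension, so you are in good company; but a complete write-up must commit to one indexing and then verify precisely the property that that indexing puts in doubt --- local injectivity if you index by elements, or the homomorphism equation if you index by labels.
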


\subsection{Coalgebraic Semantics for Ground Programs}\label{sec:derivation0}

Using our coalgebraic calculus of trees, we now make precise, in the
ground case, the relationship between logic programs and Gupta et al's
and-or parallel derivation trees of Definition~\ref{df:andortree}.

In general, if $U:H$-$coalg\longrightarrow C$ has a right adjoint $G$,
the composite functor $UG:C\longrightarrow C$ possesses the canonical
structure of a {\em comonad} $C(H)$, called the {\em cofree} comonad
on $H$. A {\em coalgebra} for a comonad is subtly different to a coalgebra for
an endofunctor as the former must satisfy two axioms, see also~\cite{BarWel90,LS86}. We denote
the category of $C(H)$-coalgebras by $C(H)$-$Coalg$. 

\begin{theorem}~\cite{JR97}\label{comonad} For any endofunctor
  $H:C\longrightarrow C$ for which the forgetful functor
  $U:H$-$coalg\longrightarrow C$ has a right adjoint, the category
  $H$-$coalg$ is canonically isomorphic to the category
  $C(H)$-$Coalg$. The isomorphism commutes with the forgetful functors
  to $C$.
\end{theorem}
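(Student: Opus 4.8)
The plan is to produce the isomorphism as an explicit pair of mutually inverse functors over $C$, rather than to invoke an abstract comonadicity theorem. Write $\eta$ and $\epsilon$ for the unit and counit of the adjunction $U\dashv G$, so that $C(H)=UG$ has comultiplication $\delta = U\eta G$ and counit $\epsilon$. For each object $L$ of $C$, the cofree coalgebra $G(L)$ is an $H$-coalgebra whose structure map I denote $\theta_L\colon UGL\to HUGL$; this assembles into a natural transformation $\theta\colon UG\to HUG$. The universal property of $G(L)$, equivalent to the triangle identities, states that for every $H$-coalgebra $(X,x)$ and every $h\colon X\to L$ in $C$ there is a unique $k\colon X\to UGL$ with $\theta_L\circ k = Hk\circ x$ and $\epsilon_L\circ k = h$.

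I would first record the canonical comparison functor $K\colon H$-$coalg\to C(H)$-$Coalg$, which always exists for such an adjunction: it sends $(X,x)$ to $(X,\,U\eta_{(X,x)})$ and is the identity on underlying morphisms. The comonad-coalgebra axioms for $U\eta_{(X,x)}$ follow from a triangle identity (counit law) and from naturality of $\eta$ together with the definition of $\delta$ (coassociativity). Since $K$ is the identity on underlying $C$-morphisms, it automatically commutes with the two forgetful functors and is faithful, so the whole theorem reduces to showing that $K$ is bijective on objects and full.

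For the inverse I would define $J\colon C(H)$-$Coalg\to H$-$coalg$ on objects by $J(X,\gamma) = (X,\,H\epsilon_X\circ\theta_X\circ\gamma)$ and as the identity on underlying morphisms, checking by naturality of $\theta$ and $\epsilon$ that a $C(H)$-coalgebra morphism is automatically an $H$-coalgebra morphism (this also supplies fullness of $K$). One composite is easy: for $JK$, writing $\gamma = U\eta_{(X,x)}$, the fact that $\eta_{(X,x)}$ is a map of $H$-coalgebras gives $\theta_X\circ\gamma = H\gamma\circ x$, and the counit triangle identity gives $\epsilon_X\circ\gamma=\mathrm{id}_X$; substituting yields $H\epsilon_X\circ\theta_X\circ\gamma = H(\epsilon_X\circ\gamma)\circ x = x$, so $JK=\mathrm{id}$.

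The main obstacle is the other composite $KJ=\mathrm{id}$, i.e.\ showing that every $C(H)$-coalgebra structure $\gamma\colon X\to UGX$ is of the form $U\eta_{(X,x)}$ for $x=H\epsilon_X\circ\theta_X\circ\gamma$. By the uniqueness in the universal property of $G(X)$ (taken with $h=\mathrm{id}_X$), it suffices to prove that $\gamma$ is a morphism of $H$-coalgebras $(X,x)\to G(X)$, namely $\theta_X\circ\gamma = H\gamma\circ x$; the second condition $\epsilon_X\circ\gamma=\mathrm{id}_X$ is just the counit axiom of the coalgebra. This identity is the one genuine diagram chase: starting from naturality of $\theta$ at $\gamma$ and the coassociativity axiom $UG\gamma\circ\gamma=\delta_X\circ\gamma$, and using that $\eta_{GX}$ is an $H$-coalgebra map to rewrite $\theta_{UGX}\circ\delta_X$, one reaches $H\delta_X\circ\theta_X\circ\gamma = HUG\gamma\circ\theta_X\circ\gamma$; postcomposing with $H\epsilon_{UGX}$ and invoking the counit law $\epsilon_{UGX}\circ\delta_X=\mathrm{id}$ together with the naturality identity $\epsilon_{UGX}\circ UG\gamma = \gamma\circ\epsilon_X$ collapses this to $\theta_X\circ\gamma = H(\gamma\circ\epsilon_X)\circ\theta_X\circ\gamma$, which is exactly $H\gamma\circ x$ by the definition of $x$. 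With $J$ and $K$ mutually inverse and both identities on underlying $C$-data, the isomorphism commutes with the forgetful functors, completing the proof.
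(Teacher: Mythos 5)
Your proof is correct, but note that the paper offers no proof of this statement to compare against: Theorem~\ref{comonad} is quoted from~\cite{JR97} as a known result and used as a black box. What you have written is a complete, self-contained verification that the forgetful functor $U$ is (strictly) comonadic when it has a right adjoint, via the standard route: the comparison functor $K(X,x)=(X,U\eta_{(X,x)})$ and the explicit inverse $J(X,\gamma)=(X,H\epsilon_X\circ\theta_X\circ\gamma)$, both identities on underlying $C$-data. I checked the three nontrivial steps and they all go through: (i) the coalgebra axioms for $U\eta_{(X,x)}$ follow from the triangle identity and naturality of $\eta$ as you say; (ii) the direction $JK=\mathrm{id}$ is immediate from $\eta_{(X,x)}$ being an $H$-coalgebra map plus the triangle identity; (iii) the diagram chase for $KJ=\mathrm{id}$ is valid --- naturality of $\theta$ at $\gamma$, coassociativity, and the fact that $\delta_X=U\eta_{GX}$ is itself a coalgebra map yield $H\delta_X\circ\theta_X\circ\gamma=HUG\gamma\circ\theta_X\circ\gamma$, and postcomposing with $H\epsilon_{UGX}$ collapses this, via the comonad counit law and naturality of $\epsilon$, to $\theta_X\circ\gamma=H\gamma\circ x$, so the uniqueness clause of the cofree universal property forces $\gamma=U\eta_{(X,x)}$. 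Two points are glossed over but harmless: functoriality of $K$ on morphisms (naturality of $\eta$ applied to a coalgebra map) deserves a line, and your framing in terms of ``fullness and bijectivity on objects'' is redundant once you have mutually inverse functors --- the explicit $J$ and $K$ already give the isomorphism, and since both are identities on underlying data it visibly commutes with the forgetful functors, which is exactly the final clause of the theorem. A pleasant feature of your argument, worth noting, is that it delivers a genuine isomorphism of categories (not merely an equivalence, which is all that abstract comonadicity theorems such as the dual of Beck's theorem would give), and that is precisely the strength of statement the paper needs.
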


Theorem~\ref{comonad} implies that for any $H$-coalgebra $(X,x)$,
there is a unique $C(H)$-coalgebra structure corresponding to it on
the set $X$.

Recall from the Introduction that, in the ground case, a logic program
can be identified with a coalgebra for the endofunctor $P_fP_f$ on
$Set$. By Theorem~\ref{thm:&vtree}, the forgetful functor
$U:P_fP_f$-$coalg\longrightarrow Set$ has a right adjoint taking a set
$L$ to the coalgebra $\&\vee$-$Tree_L$.  Thus the cofree comonad
$C(P_fP_f)$ on $P_fP_f$ sends the set $L$ to the set
$\&\vee$-$Tree_L$.

So Theorem~\ref{comonad} tells us that every ground logic program $P$
seen as a $P_fP_f$-coalgebra induces a canonical $C(P_fP_c)$-coalgebra
structure on the set $At$ of atoms underlying $P$, i.e., a function
from $At$ to $\&\vee$-$Tree_{At}$. 
\begin{theorem}\label{constr:Gcoalg}
  Given a $P_fP_f$-coalgebra $p:At\longrightarrow P_fP_f(At)$, the corresponding
  $C(P_fP_f)$-coalgebra has underlying set $At$ and action ${\bar
    p}:At\longrightarrow \&\vee$-$Tree_{At}$ as follows:

  For $A\in At$, the root of the tree ${\bar p}(A)$ is labelled by
  $A$.  If $p(A)\in P_fP_f(At)$ consists of $n$ subsets of $P_f(At)$,
  then the root of ${\bar p}(A)$ has $n$ children. The number and
  labels of each child of each of those $n$ children are determined by
  the number and choice of elements of $At$ in the corresponding
  subset of $P_f(A)$. Continue inductively.
\end{theorem}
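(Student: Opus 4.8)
The plan is to make the abstract isomorphism of Theorem~\ref{comonad} concrete and then unwind it for the coalgebra $(At,p)$. Write $U\dashv G$ for the adjunction of Theorem~\ref{thm:&vtree}, with unit $\eta:\mathrm{id}\to GU$ and counit $\epsilon:UG\to\mathrm{id}$, so that $C(P_fP_f)=UG$ and the comparison isomorphism $P_fP_f$-$coalg\cong C(P_fP_f)$-$Coalg$ of Theorem~\ref{comonad} sends the coalgebra $(At,p)$ to the $C(P_fP_f)$-coalgebra whose carrier is $At$ and whose structure map is $U\eta_{(At,p)}$. The whole task thus reduces to computing the component $\eta_{(At,p)}$ of the unit explicitly and checking that it agrees with the inductively defined tree map $\bar p$ in the statement.

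Next I would pin down $\eta_{(At,p)}$ using the universal property that defines $G$. Since $GU(At,p)=G(At)=\&\vee$-$Tree_{At}$, the counit $\epsilon_{At}:\&\vee$-$Tree_{At}\to At$ is the root-label map, and the triangle identity $\epsilon_{At}\circ U\eta_{(At,p)}=\mathrm{id}_{At}$ shows that $U\eta_{(At,p)}$ is exactly the coalgebra map the adjunction associates to the labelling $\mathrm{id}_{At}:At\to At$. By the universal property established in the proof of Theorem~\ref{thm:&vtree} --- the routine adaptation of the construction in the proof of Theorem~\ref{thm:tree} --- there is a \emph{unique} coalgebra map out of $(At,p)$ whose induced root-labelling is a prescribed function $h:At\to L$, and it is obtained by labelling the root with $h$ and generating the tree level by level via the functorial action of $P_fP_f$ along $p$, beginning with $h_0=h$ and passing to $P_fP_f(h_0)\circ p$. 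Instantiating $L=At$ and $h=\mathrm{id}_{At}$ yields a tree whose root carries the label $A$ itself.

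It then remains to read off the branching shape from this iteration and match it line by line with the statement. With $h_0=\mathrm{id}$, the root of $\eta_{(At,p)}(A)$ is labelled $A$; the first step of the construction uses $p(A)\in P_fP_f(At)$, whose outer finite set has $n$ elements, producing the $n$ (unlabelled) $\vee$-children of the root; each such element is an inner finite subset of $At$ whose members, under $h_0=\mathrm{id}$, label the corresponding $\&$-grandchildren; continuing with the successive $h_n$ reproduces the clause ``continue inductively''. Because a logic program is a set (not a list) of clauses and the atoms in any clause body are distinct, the $n$ inner subsets are distinct and the atoms within each are distinct, so the resulting tree satisfies the local-injectivity condition and genuinely lies in $\&\vee$-$Tree_{At}$; finiteness of each set in $P_fP_f(At)$ gives finite branching and hence well-definedness, exactly as in Theorem~\ref{thm:tree}.

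The step I expect to require the most care is the bookkeeping that aligns the two nested layers of $P_fP_f$ with the two alternating layers ($\vee$ then $\&$) of the $\&\vee$-tree: one must check that the \emph{outer} $P_f$ indexes the or-branching ($\vee$-nodes, i.e.\ alternative clauses) while the \emph{inner} $P_f$ indexes the and-branching ($\&$-nodes, i.e.\ body atoms of a fixed clause), and that this is precisely the correspondence built into the canonical $P_fP_f$-coalgebra structure on $\&\vee$-$Tree_L$. Identifying the adjunction unit with the concrete root-labelling-$\mathrm{id}$ map, rather than some other coalgebra map, is the other point needing explicit justification; once both are settled, the description of $\bar p$ in the statement is simply the unwound form of $U\eta_{(At,p)}$.
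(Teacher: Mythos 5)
Your proposal is correct and follows essentially the same route as the paper: both identify the induced $C(P_fP_f)$-coalgebra structure on $At$ as $U$ applied to the adjoint transpose of $\mathrm{id}_{At}$ (the unit component at $(At,p)$, which the paper denotes $\epsilon_{(At,p)}$), and then read off the explicit tree description by unwinding the construction in the proof of Theorem~\ref{thm:&vtree}, itself the adaptation of Theorem~\ref{thm:tree}. Your additional details --- the triangle-identity argument identifying the unit with the transpose of the identity, and the explicit matching of the outer and inner $P_f$ with the $\vee$- and $\&$-layers --- are correct elaborations of steps the paper leaves implicit.
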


\proof In general, for any endofunctor $H$ for which the forgetful functor
$U:H$-$coalg\longrightarrow C$ has a right adjoint $G$, the
$C(H)$-coalgebra induced by an $H$-coalgebra $(X,x)$ is given as follows:
$U(X,x) = X$, so the identity map $id:X\longrightarrow X$ can be
written as $id:U(X,x)\longrightarrow X$. By the definition of adjoint,
it corresponds to a map of the form
$\epsilon_{(X,x)}:(X,x)\longrightarrow GX$. Applying $U$ to
$\epsilon_{(X,x)}$ gives the requisite coalgebra map
$U\epsilon_{(X,x)}:X\longrightarrow C(H)X$.

Applying this to $H = P_fP_f$, this 
$C(P_fP_f)$-coalgebra structure is determined by the construction in
the proof of Theorem~\ref{thm:&vtree}, which is rewritten as the
assertion of this theorem. $\Box$

Comparing Theorem~\ref{constr:Gcoalg} with
Definition~\ref{df:andortree}, subject to minor reorganisation, given
a logic program $P$ seen as a $P_fP_f$-coalgebra, the corresponding
$C(P_fP_f)$-coalgebra structure on $At$ sends an atom $A$ to Gupta et
al's and-or parallel derivation tree, characterising their
construction in the ground case.

\begin{example}\label{ex:lptree}
Consider the ground logic program
%\vspace{-5pt} 
\begin{eqnarray*}
\texttt{q(b,a)} & \gets & \\ 
\texttt{s(a,b)} & \gets & \\
\texttt{p(a)} & \gets & \texttt{q(b,a)},\texttt{s(a,b)} \\
\texttt{q(b,a)}  & \gets &  \texttt{s(a,b)}\\
\end{eqnarray*}

The program has three atoms, namely \texttt{q(b,a)}, \texttt{s(a,b)}
and \texttt{p(a)}.\\
So $At = \{ \texttt{q(b,a)},\texttt{s(a,b)},\texttt{p(a)}\}$. 
The program can be identified with the $P_fP_f$-coalgebra structure on $At$ given by\\ 
$p(\texttt{q(b,a)}) = \{ \{\empty\}, \{\texttt{s(a,b)}\}\}$,
where $\{\empty \}$ is the empty set.\\
$p(\texttt{s(a,b)}) = \{ \{\empty\}\}$, i.e., the one element set consisting of the empty set.\\
$p(\texttt{p(a)}) = \{\{\texttt{q(b,a),s(a,b)}\}\}$.

The corresponding $C(P_fP_f)$-coalgebra sends \texttt{p(a)} to the
parallel refutation of \texttt{p(a)} depicted on the left side of
Figure~\ref{pic:tree0}. Note that the nodes of the tree alternate
between those labelled by atoms and those labelled by $\bullet$. The
set of children of each $\bullet$ represents a goal, made up of the
conjunction of the atoms in the labels. An atom with multiple children
is the head of multiple clauses in the program: its children represent
these clauses. We use the traditional notation $\Box$ to denote
$\{\empty\}$.

Where an atom has a single $\bullet$-child, we can elide that node
without losing any information; the result of applying this
transformation to our example is shown on the right side of
Figure~\ref{pic:tree0}. The resulting tree is precisely the and-or
parallel derivation tree for the atomic goal
$\gets\mathtt{p(a)}$. %as we shall explain in Section
%\ref{sec:parallel}.
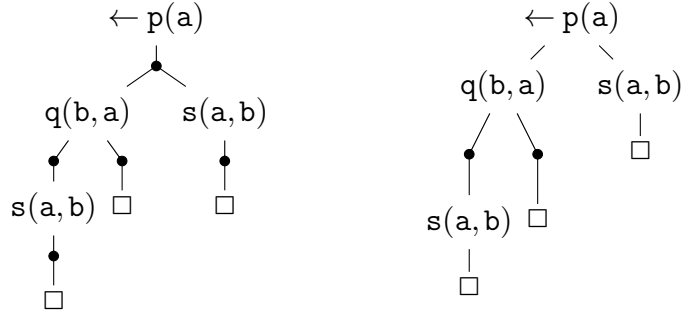
\begin{figure}[!h]
\begin{center}
  \begin{tikzpicture}[scale=0.9,baseline=(current bounding box.north),grow=down,level distance=7mm, sibling distance
    = 10mm]
\node {$\gets\mathtt{p(a)}$}
child { [fill] circle (2pt)
  child { node {$\mathtt{q(b,a)}$}
     child {[fill] circle (2pt)
         child {node {$\mathtt{s(a,b)}$}
            child {[fill] circle (2pt)
               child { node {$\Box$} }
            }
         }
     } 
     child {[fill] circle (2pt)
       child { node {$\Box$}} 
     }
  } 
  child[fill=none] {edge from parent[draw=none]} 
  child {node {$\mathtt{s(a,b)}$}
    child {[fill] circle (2pt)
       child { node {$\Box$} }
    }
  }
}; 
  \end{tikzpicture}
\quad\quad\quad\quad
  \begin{tikzpicture}[scale=0.9, baseline=(current bounding box.north),grow=down,level distance=10mm, sibling distance = 10mm]
\node {$\gets\mathtt{p(a)}$}
  child { node {$\mathtt{q(b,a)}$}
     child {[fill] circle (2pt)
         child {node {$\mathtt{s(a,b)}$}
                    child { node {$\Box$} }
              }
     } 
     child {[fill] circle (2pt)
       child { node {$\Box$}} 
     }
  } 
  child[fill=none] {edge from parent[draw=none]} 
  child {node {$\mathtt{s(a,b)}$}
       child { node {$\Box$} }
  }; 
  \end{tikzpicture}
\end{center}
\caption{\footnotesize{The action of $\overline{p}: \At \longrightarrow
  C(P_fP_f)(\At)$ on $\texttt{p(a)}$ and the corresponding and-or
  parallel derivation tree.}}
%$\overline{p}: \At \longrightarrow
% C(P_fList)(\At)$ on $\texttt{p(a)}$.}
\label{pic:tree0} 
\end{figure}
\end{example}
 
\subsection{Coalgebraic Semantics for Arbitrary Programs}\label{sec:coalg} 

Extending from ground logic programs to first-order programs is not
routine. Following normal category theoretic practice, we model the
first-order language underlying a logic program by a Lawvere
theory~\cite{AmatoLM09,BonchiM09,BruniMR01}.

\begin{definition}\label{Law}
  Given a signature $\Sigma$ of function symbols, the {\em Lawvere
    theory} $\mathcal{L}_{\Sigma}$ generated by $\Sigma$ is the
  following category: $\texttt{ob}(\mathcal{L}_{\Sigma})$ is the set
  of natural numbers.  For each natural number $n$, let $x_1,\ldots
  ,x_n$ be a specified list of distinct variables. Define
  $\texttt{ob}(\mathcal{L}_{\Sigma})(n,m)$ to be the set of $m$-tuples
  $(t_1,\ldots ,t_m)$ of terms generated by the function symbols in
  $\Sigma$ and variables $x_1,\ldots ,x_n$. Define composition in
  $\mathcal{L}_{\Sigma}$ by substitution.
%The interpretation $\|\ \|_{\Sigma}$
%sends an $n$-ary function symbol $f$ to $f(x_1,\ldots ,x_n)$.
%
\end{definition}
%
%One can readily check that these constructions satisfy the axioms for
%a category and for an interpretation.
%

One can describe $\mathcal{L}_{\Sigma}$ without the need for a
specified list of variables for each $n$: in a term $t$, a variable
context is always implicit, i.e., $x_1,\ldots ,x_m\vdash t$, and the
variable context may be considered as a binder.

For each signature $\Sigma$, we extend the set
$At$ of atoms for a ground logic program to the functor $At:\ls^{op}
\rightarrow Set$ that sends a natural number $n$ to the set of all
atomic formulae generated by $\Sigma$, variables among a fixed set
$x_1,\ldots ,x_n$, and
the predicate symbols appearing in the logic program. A map $f:n
\rightarrow m$ in $\ls$ is sent to the function $At(f):At(m)
\rightarrow At(n)$ that sends an atomic formula $A(x_1, \ldots,x_m)$
to $A(f_1(x_1, \ldots ,x_n)/x_1, \ldots ,f_m(x_1, \ldots ,x_n)/x_m)$,
i.e., $At(f)$ is defined by substitution.

Given a logic program $P$ with function symbols in $\Sigma$, we would
like to model $P$ by the putative $[\ls^{op},P_fP_f]$-coalgebra
$p:At\longrightarrow P_fP_fAt$ on the category $[\ls^{op},Set]$ whose
$n$-component takes an atomic formula $A(x_1,\ldots ,x_n)$ with at
most $n$ variables, considers all substitutions of clauses in $P$
whose head agrees with $A(x_1,\ldots ,x_n)$, and gives the set of sets
of atomic formulae in antecedents. Unfortunately, it does not work.

Consider the logic program \texttt{ListNat} of Example~\ref{ex:listnat}. There
is a map in $\mathcal{L}_{\Sigma}$ of the form $0\rightarrow 1$ that
models the nullary function symbol $0$. Naturality of the map
$p:At\longrightarrow P_fP_fAt$ in $[\mathcal{L}_{\Sigma}^{op},Set]$
yields commutativity of the diagram

\begin{center}
\begin{tikzpicture}[scale=2]
\draw (1,1) node {$At(1)$}; 
\draw[->] (1.3,1) -- (1.5,1);
\draw (1,0) node {$At(0)$};
\draw[->] (1,0.8) -- (1,0.2);
\draw (2,1) node {$P_fP_fAt(1)$};
\draw[->] (1.3,0) -- (1.5,0);
\draw (2,0) node {$P_fP_fAt(0)$};
\draw[->] (2,0.8) -- (2,0.2);
\end{tikzpicture}
\end{center}
There being no clause of the form $\mathtt{nat(x)}\gets \, $ in
\texttt{ListNat}, commutativity implies
that there cannot be a clause in \texttt{ListNat} of the form
$\mathtt{nat(0)}\gets \, $ either, but in fact there is one. 

We resolve this by relaxing the naturality condition on $p$ to a
subset condition, yielding lax naturality. To define it, we extend
$At:\ls^{op}\rightarrow Set$ to have codomain $Poset$, which we do by
composing $At$ with the inclusion of $Set$ into $Poset$. Mildly
overloading notation, we denote the composite by
$At:\ls^{op}\rightarrow Poset$.

$Poset$ canonically possesses the structure of a locally ordered
category, i.e., there is a canonical partial order on each homset
$Poset(P,Q)$ and it is respected by composition. It is given
pointwise: $f\leq g$ if and only if for all $x\in P$, one has
$f(x)\leq g(x)$ in $Q$.  The category $\mathcal{L}_{\Sigma}$ also has
a canonical locally ordered structure given by the discrete
structure, i.e., $f\leq g$ if and only if $f = g$. Any functor
from $\mathcal{L}_{\Sigma}^{op}$ to $Poset$ is trivially locally ordered,
i.e., preserves the partial orders.

%Given any small locally
%ordered category $D$, we shall refer to a locally ordered functor from
%$D$ to $Poset$ as an ~\emph{indexed poset}.

\begin{definition}
  Given locally ordered functors $H,K:D \longrightarrow C$, a {\em lax
    natural transformation} from $H$ to $K$ is the assignment to each
  object $d$ of $D$, of a map $\alpha_d: Hd \longrightarrow Kd$ such
  that for each map $f:d \longrightarrow d'$ in $D$, one has
  $(Kf)(\alpha_d) \leq (\alpha_{d'})(Hf)$.
\end{definition}
%
%\begin{definition}
%An \emph{indexed poset} over a small category $C$ is a locally ordered functor
%$r: C \rightarrow Poset$. An \emph{indexed
%map of posets} from $r$ to $q$ is a lax natural transformation $\tau:
%r \Rightarrow q: C \rightarrow Poset$. An indexed map of posets $\tau$ is {\em strict} if it is a natural transformation (not just lax).
%\end{definition}
%
Locally ordered functors and lax natural transformations, with
pointwise composition and pointwise ordering, form a locally ordered
category we denote by $Lax(D,C)$.  

A final problem arises in regard to the finiteness of the outer
occurrence of $P_f$ in $P_fP_f$. The problem is that substitution can
generate infinitely many instances of clauses with the same head. For
instance, if one extends \texttt{ListNat} with a clause of the form $A
\gets \mathtt{nat(x)}$ with no occurrences of $x$ in $A$, substitution
yields the clause $A \gets \mathtt{nat(s^n(0))}$ for every natural
number $n$, giving rise to a countable set of clauses with head
$A$. Graph connectivity, \texttt{GC}, gives another example, see
Example~\ref{ex:lp}.

We address this issue by replacing $P_fP_f$ by $P_cP_f$, where $P_c$
is the countable powerset functor, extending $P_cP_f$ from $Set$ to a
locally ordered endofunctor on $Poset$, upon which composition yields
the locally ordered endofunctor we seek on $Lax(\ls^{op},Poset)$.

\begin{definition}\label{df:po} Define $P_f:Poset\longrightarrow Poset$ 
  by letting $P_f(P)$ be the partial order given by the set of finite
  subsets of $P$, with $A\leq B$ if for all $a \in A$, there exists $b
  \in B$ for which $a\leq b$ in $P$, with behaviour on maps given by
  image. Define $P_c$ similarly but with countability replacing
  finiteness.
\end{definition}

A cofree comonad $C(P_cP_f)$ exists on $P_cP_f$ and, by
Theorem~\ref{thm:&vctree}, we can describe it: $C(P_cP_f)(P) =
\&\vee_c$-$Tree_P$, with partial order structure generated by
Definition~\ref{df:po}. In order to extend the correspondence between
$P_cP_f$-coalgebras $p:At\longrightarrow P_cP_fAt$ and
$C(P_cP_f)$-coalgebras ${\bar p}:At\longrightarrow C(P_cP_f)At$ from
$Poset$ to $Lax(\ls^{op},Poset)$, we need to do some abstract category
theory.

Let $H$ be an arbitrary locally ordered endofunctor on
an arbitrary locally ordered category $C$.  Denote by
$H\mbox{-}coalg_{oplax}$ the locally ordered category whose objects
are $H$-coalgebras and whose maps are oplax maps of $H$-coalgebras,
meaning that, in the square

\begin{center}
\begin{tikzpicture}[scale=2]
\draw (1,1) node {$X$}; 
\draw[->] (1.2,1) -- (1.8,1);
\draw (1,0) node {$HX$};
\draw[->] (1,0.9) -- (1,0.1);
\draw (1.5,0.5) node {$\leq$};
\draw (2,1) node {$Y$};
\draw[->] (1.2,0) -- (1.8,0);
\draw (2,0) node {$HY$};
\draw[->] (2,0.9) -- (2,0.1);
\end{tikzpicture}
\end{center}
%
%\begin{diagram}[nohug]
%X & \rTo& Y\\
%\dTo& &   \dTo\\
%EX& \rTo& EY
%\end{diagram}
%
the composite via $HX$ is less than or equal to the composite via
$Y$. Since $C$ and $H$ are arbitrary, one can replace $C$ by
$Lax(D,C)$, for any category $D$; and replace $H$ by $Lax(D,H)$, yielding the locally ordered
category $Lax(D,H)\mbox{-}coalg_{oplax}$.

\begin{proposition}\label{prop:2}
  The locally ordered category $Lax(D,H)\mbox{-}coalg_{oplax}$ is
  canonically isomorphic to $Lax(D,H\mbox{-}coalg_{oplax})$.
\end{proposition}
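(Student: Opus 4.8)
The plan is to prove a \emph{strict} isomorphism of locally ordered categories by exhibiting a correspondence that is the identity on underlying data and then checking that it is functorial and order-preserving in both directions. The entire content is a ``Fubini''-style interchange: a lax natural transformation valued in coalgebras is the very same thing as a coalgebra in the lax functor category, because in each case the coherence condition unwinds to one and the same family of inequalities in $C$. Throughout I write $U$ for the forgetful functor $H\mbox{-}coalg_{oplax}\longrightarrow C$ and take as given that the local order on $H\mbox{-}coalg_{oplax}$ is the one inherited from $C$ along $U$, so that $g\leq g'$ between parallel oplax maps means $Ug\leq Ug'$ in the relevant homset of $C$; similarly the local orders on $Lax(D,C)$ and on $Lax(D,H\mbox{-}coalg_{oplax})$ are computed pointwise.

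First I would unwind the objects. An object of $Lax(D,H)\mbox{-}coalg_{oplax}$ is a pair $(X,x)$ with $X$ a locally ordered functor $D\longrightarrow C$ and $x:X\longrightarrow Lax(D,H)(X)=H\circ X$ a lax natural transformation; concretely this is a family $x_d:Xd\longrightarrow HXd$ satisfying, for every $f:d\longrightarrow d'$ in $D$, the inequality $H(Xf)\circ x_d\leq x_{d'}\circ Xf$. An object of $Lax(D,H\mbox{-}coalg_{oplax})$ is a locally ordered functor $F:D\longrightarrow H\mbox{-}coalg_{oplax}$; writing $Xd=U(Fd)$ and letting $x_d$ be the coalgebra structure of $Fd$, functoriality of $F$ composed with $U$ makes $X=UF$ a locally ordered functor $D\longrightarrow C$, while the requirement that each $Ff$ be an oplax coalgebra map is exactly $H(Xf)\circ x_d\leq x_{d'}\circ Xf$. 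Hence the two classes of objects are literally the same data subject to the same condition, and I take the object assignment to be the identity.

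Next I would treat morphisms in the same way. A map $(X,x)\longrightarrow(Y,y)$ in $Lax(D,H)\mbox{-}coalg_{oplax}$ is a lax natural transformation $g:X\longrightarrow Y$, i.e.\ a family $g_d:Xd\longrightarrow Yd$ with $Yf\circ g_d\leq g_{d'}\circ Xf$, whose oplax coalgebra square holds in $Lax(D,C)$; since that order is pointwise and $(Lax(D,H)g)_d=H(g_d)$, the square is the family of inequalities $H(g_d)\circ x_d\leq y_d\circ g_d$. A map $F\longrightarrow F'$ in $Lax(D,H\mbox{-}coalg_{oplax})$ is a lax natural transformation $\gamma$ whose components $\gamma_d$ are oplax coalgebra maps; setting $g_d=U\gamma_d$, oplaxness of $\gamma_d$ gives $H(g_d)\circ x_d\leq y_d\circ g_d$, and lax naturality of $\gamma$ read through $U$ is precisely $Yf\circ g_d\leq g_{d'}\circ Xf$. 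So morphisms on the two sides coincide as families $g_d$ subject to the identical pair of inequalities, and the assignment is again the identity on underlying data. Functoriality is immediate because composition is computed componentwise in $C$ on both sides, and both local orders reduce to the same componentwise order in $C$, so the identity assignment is order-preserving and order-reflecting; thus it is a strict isomorphism commuting with the forgetful functors to $Lax(D,C)$.

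The step I expect to require the most care is the bookkeeping of the order structures and forgetful functors that legitimises this quantifier swap: one must check that the inherited order on $H\mbox{-}coalg_{oplax}$, the pointwise order on $Lax(D,C)$, and the pointwise order on $Lax(D,H\mbox{-}coalg_{oplax})$ all collapse to the same componentwise inequalities in $C$, and in particular that the oplax-coalgebra inequality taken in $Lax(D,C)$ really does decompose pointwise, which relies on $Lax(D,H)$ acting by postcomposition so that $(Lax(D,H)g)_d=H(g_d)$. Everything else is definitional unwinding, and once these identifications are in place the isomorphism is established.
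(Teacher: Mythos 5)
Your proposal is correct and follows essentially the same route as the paper's own proof: both arguments unwind the definitions to observe that the lax-naturality inequality for the structure map $x:X\longrightarrow H\circ X$ is literally the same family of inequalities in $C$ as the oplaxness condition on the coalgebra maps $Xf$, yielding an identity-on-data isomorphism. The only difference is one of completeness in your favour: the paper carries out this unwinding at the level of objects and then asserts that the bijection ``canonically extends'' to an isomorphism of locally ordered categories, whereas you also spell out the morphism-level coincidence and the bookkeeping of the pointwise and inherited orders.
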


\proof
%\begin{proof}
Unwinding the definitions, to give a functor $J:D\longrightarrow
H$-$coalg_{oplax}$ is, by definition, to give, for each object $d$ of
$D$, a map in $C$ of the form $Jd:J_0d\longrightarrow HJ_0d$, and,
for each map $f:d\longrightarrow d'$ in $D$, a map in $C$ of
the form $J_0f:J_0d\longrightarrow J_0d'$, such that

%\begin{diagram}
%Kd & \rTo^{Kf} & Kd' \\
%\dTo<{Hd} & \leq & \dTo>{Hd'} \\
%EKd & \rTo_{EKf} & EKd'
%\end{diagram}

$$
\xymatrix@R=1.0pc@C=1.5pc{
*{J_0d}\ar[rrrr]^{J_0f}\ar[dddd]_{Jd}&&&& *{J_0d'}\ar[dddd]^{Jd} \\
&&&&\\
&&\leq&&\\
&&&&\\
*{HJ_0d}\ar[rrrr]_{HJ_0f} &&&& *{HJ_0d'}}
$$
subject to locally ordered functoriality equations.

These data and axioms can be re-expressed as a locally ordered functor
$J_0:D\longrightarrow C$ together with a lax natural transformation
$j:J_0\longrightarrow HJ_0$, the condition for lax naturality of $j$ in
regard to the map $f$ in $D$ being identical to the condition that
$J_0f$ be an oplax map of coalgebras from $Jd$ to $Jd'$. 

This yields a canonical bijection between the sets of
objects of \mbox{$Lax(D,H$-$coalg_{oplax})$} and
$Lax(D,H)$-$coalg_{oplax}$, that bijection canonically extending to
a canonical isomorphism of locally ordered categories.
%\end{proof}
$\Box$

\begin{proposition}\label{Prop:3}
  Given a locally ordered comonad $G$ on a locally ordered category
  $C$, the data given by $Lax(D,G):Lax(D,C) \rightarrow Lax(D,C)$ and
  pointwise liftings of the structural natural transformations of $G$
  yield a locally ordered comonad we also denote by $Lax(D,G)$ on
  $Lax(D,C)$.
\end{proposition}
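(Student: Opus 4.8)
The plan is to verify that $Lax(D,G)$ together with the pointwise-lifted structure genuinely satisfies the comonad axioms, which is the content of Proposition~\ref{Prop:3}. First I would make explicit what the data are. Given the locally ordered comonad $G = (G, \epsilon, \delta)$ on $C$, with counit $\epsilon:G\Rightarrow \mathrm{id}_C$ and comultiplication $\delta:G\Rightarrow GG$, I form the locally ordered endofunctor $Lax(D,G):Lax(D,C)\longrightarrow Lax(D,C)$ by postcomposition: it sends a locally ordered functor $H:D\longrightarrow C$ to $G\circ H$, and a lax natural transformation $\alpha:H\longrightarrow K$ to the whiskered transformation $G\alpha$ whose $d$-component is $G(\alpha_d)$. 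The candidate counit $\hat\epsilon$ and comultiplication $\hat\delta$ are defined pointwise: for each object $H$ of $Lax(D,C)$ and each $d$ in $D$, set $(\hat\epsilon_H)_d = \epsilon_{Hd}$ and $(\hat\delta_H)_d = \delta_{Hd}$.

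The key steps are then, in order: (i)~check that $Lax(D,G)$ really is a locally ordered endofunctor, i.e.\ that it preserves composition, identities, and the pointwise order on homsets of $Lax(D,C)$ --- the last point uses that $G$ is locally ordered, since $\alpha\leq\beta$ pointwise forces $G\alpha_d\leq G\beta_d$; (ii)~check that $G\alpha$ is actually lax natural, i.e.\ that $(Kf)\circ G(\alpha_d)\leq G(\alpha_{d'})\circ (Hf)$ follows by applying the locally ordered functor $G$ to the lax naturality inequality $(Kf)\circ\alpha_d\leq\alpha_{d'}\circ(Hf)$ that $\alpha$ already satisfies; (iii)~verify that $\hat\epsilon$ and $\hat\delta$ are themselves lax natural transformations (here between the appropriate composites of endofunctors on $Lax(D,C)$), which again reduces componentwise to the ordinary naturality of $\epsilon$ and $\delta$ in $C$; and (iv)~verify the three comonad equations --- counit on both sides and coassociativity --- for $(Lax(D,G),\hat\epsilon,\hat\delta)$. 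Because every structure map is defined pointwise and every equation is an equation of lax natural transformations, each comonad axiom reduces to the corresponding axiom for $G$ evaluated at $Hd$ for each object $H$ of $Lax(D,C)$ and each $d$ in $D$; thus it holds for $Lax(D,G)$ precisely because it holds for $G$.

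The one genuinely non-formal point, and what I expect to be the main obstacle, is step~(iii): checking that the pointwise-defined $\hat\epsilon$ and $\hat\delta$ assemble into \emph{lax} natural transformations in the locally ordered $2$-categorical sense, and interact correctly with the whiskered functor $Lax(D,G)$ on morphisms of $Lax(D,C)$. The subtlety is that naturality of $\hat\epsilon$ and $\hat\delta$ must be tested against \emph{lax} natural transformations $\alpha$ rather than strictly natural ones, so the relevant squares are inequalities rather than equalities, and I must confirm that the inequalities point the correct way and are compatible with the oplax/lax conventions fixed in Definition~\ref{df:po} and the discussion preceding it. Once the direction of the order is pinned down, each such inequality is obtained by whiskering the genuine naturality square for $\epsilon$ or $\delta$ with the lax naturality inequality for $\alpha$ and invoking that $G$ is locally ordered, so the obstacle is one of bookkeeping about the $2$-cell directions rather than of mathematical depth. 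Everything else is routine, so I would state those verifications as pointwise and leave them to the reader, exactly as the companion Proposition~\ref{prop:2} treats its analogous unwinding.
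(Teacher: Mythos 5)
Your proposal is correct and takes essentially the same approach as the paper: the paper's proof of Proposition~\ref{Prop:3} simply asserts that the result ``holds by tedious but routine checking of all the axioms in the definition of locally ordered comonad,'' which is precisely the pointwise verification you lay out. The one obstacle you anticipate in step~(iii) in fact dissolves: when the naturality squares for $\hat\epsilon$ and $\hat\delta$ are tested against a lax natural transformation $\alpha$, they reduce componentwise to the strict naturality of $\epsilon$ and $\delta$ in $C$ at the maps $\alpha_d$, so they commute on the nose rather than merely up to inequality.
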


\proof This holds by tedious but routine checking of all the axioms in the
definition of locally ordered comonad. $\Box$\\
%
%\begin{definition}
Given a locally ordered comonad $G$, denote by
$G\mbox{-}Coalg_{oplax}$ the locally ordered category whose objects
are $G$-coalgebras and whose maps are oplax maps of $G$-coalgebras.
%\end{definition}

\begin{proposition}\label{prop:4}
  Given a locally ordered comonad $G$, $Lax(D,G)\mbox{-}Coalg_{oplax}$
  is canonically isomorphic to $Lax(D,G\mbox{-}Coalg_{oplax})$.
\end{proposition}

\proof A proof is given by routine extension of the proof of
Proposition~\ref{prop:2}. $\Box$

\begin{theorem}~\cite{Kelly74}\label{thm:5} Given a locally ordered
  endofunctor $H$ on a locally ordered category with finite colimits
  $C$, if $C(H)$ is the cofree comonad on $H$, then
  $H\mbox{-}coalg_{oplax}$ is canonically isomorphic to
  $C(H)\mbox{-}Coalg_{oplax}$.
\end{theorem}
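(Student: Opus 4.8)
The plan is to promote the $1$-categorical isomorphism of Theorem~\ref{comonad} to the locally ordered setting, the only genuinely new content being the treatment of the local orders (the $2$-cells) and of the oplax, rather than strict, coalgebra maps. Since the hypothesis furnishes the cofree comonad, I take as given the right adjoint $G$ to $U:H\mbox{-}coalg\longrightarrow C$ with $C(H)=UG$, and I write $\Phi$ for the canonical isomorphism $H\mbox{-}coalg\cong C(H)\mbox{-}Coalg$ of ordinary categories supplied by Theorem~\ref{comonad}, which commutes with the forgetful functors to $C$. On objects $\Phi$ acts exactly as in the proof of Theorem~\ref{constr:Gcoalg}: it sends $(X,x)$ to the $C(H)$-coalgebra with the same carrier $X$ and structure map $\bar x = U\epsilon_{(X,x)}:X\longrightarrow UGX = C(H)X$, the transpose under $U\dashv G$ of the identity on $U(X,x)$.

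First I would check that $\Phi$ is a locally ordered isomorphism on the underlying strict categories, i.e.\ an order-isomorphism on each homset. The local order on a homset of $H\mbox{-}coalg$ is, by definition, inherited from $C$ through the faithful forgetful functor, and likewise for $C(H)\mbox{-}Coalg$; since $\Phi$ commutes with both forgetful functors and is the identity on carriers and on underlying maps, two coalgebra maps satisfy $f\le g$ on one side iff $Uf\le Ug$ in $C$ iff $f\le g$ on the other side. Thus $\Phi$ both preserves and reflects the local order, and the adjunction $U\dashv G$ is an adjunction of locally ordered categories: $G$, and hence $C(H)=UG$, are locally ordered functors, and the transposition bijection $H\mbox{-}coalg(W,GX)\cong C(UW,X)$ is an isomorphism of posets, natural in $W$ and $X$.

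The heart of the enriched statement is to extend $\Phi$ to the oplax maps, and this is the step I expect to be the main obstacle. An oplax map of $H$-coalgebras from $(X,x)$ to $(Y,y)$ is a map $f:X\longrightarrow Y$ in $C$ with $Hf\circ x\le y\circ f$, whereas an oplax map of $C(H)$-coalgebras from $\Phi(X,x)$ to $\Phi(Y,y)$ is a map $f:X\longrightarrow Y$ with $C(H)f\circ\bar x\le\bar y\circ f$; I must show these two inequalities are equivalent for every $f$. Concretely, $C(H)f\circ\bar x$ and $\bar y\circ f$ are the data built from $Hf\circ x$ and $y\circ f$ by the cofree-comonad construction, so the equivalence reduces to checking that this construction, carried out by Kelly's transfinite limit tower~\cite{Kelly74}, is performed by locally ordered functors and limits and therefore transports the pointwise order faithfully in both directions. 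The forward direction follows because the unfold defining $\bar x$ is monotone in $x$; the backward direction follows by projecting the $C(H)$-coalgebra inequality along the comonad counit $C(H)X\longrightarrow X$ and the canonical one-step map $C(H)X\longrightarrow HX$, so as to recover $Hf\circ x\le y\circ f$.

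The strict maps sit inside this correspondence as the case of equality, which recovers Theorem~\ref{comonad} as its core and confirms that no information is lost. Once this monotonicity bookkeeping is discharged, $\Phi$ is seen to be an isomorphism of locally ordered categories $H\mbox{-}coalg_{oplax}\cong C(H)\mbox{-}Coalg_{oplax}$, which is the assertion; combined with Propositions~\ref{prop:2}--\ref{prop:4} it yields the correspondence we need over $Lax(\ls^{op},Poset)$.
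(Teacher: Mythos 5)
First, a point of reference: the paper offers no proof of Theorem~\ref{thm:5} at all --- it is quoted from Kelly~\cite{Kelly74} and then used as a black box to obtain Theorem~\ref{thm:good}. So your attempt can only be measured against Kelly's argument, not against anything in the text. Within your proposal, the reduction of the claim to the equivalence of the two inequalities $Hf\circ x\le y\circ f$ and $C(H)f\circ\bar x\le\bar y\circ f$, and the order-theoretic bookkeeping for the strict part, are fine. Your backward direction is also correct and essentially complete: the cofree structure supplies a canonical projection $\alpha:C(H)\Rightarrow H$ with $x=\alpha_X\circ\bar x$ and $y=\alpha_Y\circ\bar y$, $\alpha$ is natural, and post-composing the assumed inequality with $\alpha_Y$, using monotonicity of composition, recovers $Hf\circ x\le y\circ f$.

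The gap is in the forward direction, which is precisely the content of Kelly's theorem. Your one concrete justification --- that ``the unfold defining $\bar x$ is monotone in $x$'' --- is not the statement you need: monotonicity in the structure map compares two coalgebra structures on the \emph{same} carrier, whereas the required implication concerns a map $f$ between \emph{different} carriers. More seriously, the needed inequality cannot be extracted from the universal property of $C(H)$ alone: the adjunction $U\dashv G$ gives an order-isomorphism $H\mbox{-}coalg((X,x),GY)\cong C(X,Y)$, and $C(H)f\circ\bar x = U(Gf\circ\eta_{(X,x)})$ does lie in its image, but when $f$ is merely oplax the composite $\bar y\circ f$ need not underlie \emph{any} strict coalgebra map into $GY$, so enriched adjointness simply never applies to it. What does work is exactly what you gesture at but do not carry out: an induction along the terminal sequence $V_0=Id$, $V_{\alpha+1}=Id\times HV_\alpha$ (limits at limit stages), proving $V_\alpha f\circ\bar x_\alpha\le\bar y_\alpha\circ f$ at every stage using local orderedness of $H$, monotonicity of composition and pairing, and order-enrichment of the limits, and then passing to the limit. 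But that argument presupposes that $C(H)$ \emph{is} the convergent limit of this tower, which requires enriched products, cochain limits and a convergence argument --- infrastructure not supplied by the stated hypotheses (finite colimits plus bare existence of the cofree comonad; note the hypotheses are anyway tuned to Kelly's monad-side statement, since a comonadic tower needs limits, not colimits) and which is exactly what Kelly builds. As it stands, the key step is asserted rather than proved; to close it you must either run the tower induction under hypotheses guaranteeing convergence to $C(H)$, or invoke Kelly's coherence theorem, which is the very statement at issue.
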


Combining Proposition~\ref{prop:2}, Proposition~\ref{prop:4} and
Theorem~\ref{thm:5}, we can conclude the following:

\begin{theorem}\label{thm:good}
  Given a locally ordered endofunctor $H$ on a locally ordered
  category with finite colimits $C$, if $C(H)$ is the cofree comonad
  on $H$, then there is a canonical isomorphism
\[
Lax(D,H)\mbox{-}Coalg_{oplax} \simeq Lax(D,C(H))\mbox{-}Coalg_{oplax}
\]
\end{theorem}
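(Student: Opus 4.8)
The plan is to chain together the three results the excerpt has just assembled, treating Theorem~\ref{thm:good} as a purely formal consequence. The target isomorphism is
\[
Lax(D,H)\mbox{-}Coalg_{oplax} \simeq Lax(D,C(H))\mbox{-}Coalg_{oplax}.
\]
First I would invoke Theorem~\ref{thm:5}, which says that under the stated hypotheses (namely $C$ a locally ordered category with finite colimits, $H$ a locally ordered endofunctor, and $C(H)$ the cofree comonad on $H$) there is a canonical isomorphism $H\mbox{-}coalg_{oplax}\simeq C(H)\mbox{-}Coalg_{oplax}$ of locally ordered categories. The idea is then to apply the 2-functor $Lax(D,-)$ to this isomorphism, obtaining
\[
Lax(D,H\mbox{-}coalg_{oplax})\simeq Lax(D,C(H)\mbox{-}Coalg_{oplax}),
\]
since $Lax(D,-)$, sending a locally ordered category to the locally ordered category of functors-and-lax-transformations out of $D$, preserves isomorphisms of locally ordered categories.

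Next I would rewrite each side using the two isomorphisms proved just above. Proposition~\ref{prop:2}, with $C$ and $H$ arbitrary, gives $Lax(D,H\mbox{-}coalg_{oplax})\simeq Lax(D,H)\mbox{-}coalg_{oplax}$, identifying the left-hand side with $Lax(D,H)\mbox{-}Coalg_{oplax}$ (modulo the coalg/Coalg notation for the functor versus comonad cases). Proposition~\ref{prop:4}, the comonadic analogue, gives $Lax(D,C(H))\mbox{-}Coalg_{oplax}\simeq Lax(D,C(H)\mbox{-}Coalg_{oplax})$, identifying the right-hand side. Composing the three canonical isomorphisms — Proposition~\ref{prop:2} (reversed), the image of Theorem~\ref{thm:5} under $Lax(D,-)$, and Proposition~\ref{prop:4} — yields exactly the asserted isomorphism, and since each constituent is canonical and commutes with the relevant forgetful functors, so does the composite.

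\textbf{Anticipated obstacle.} The delicate point is not the bookkeeping but checking that the hypotheses of Theorem~\ref{thm:5} transport correctly, since that theorem is applied to $H$ on $C$ but the conclusion concerns the \emph{comonad} $C(H)$ rather than its pointwise lift. I must confirm that the comonad appearing on the right of the final isomorphism is genuinely $Lax(D,C(H))$ and that this is the cofree comonad structure intended — here Proposition~\ref{Prop:3} does the essential work, guaranteeing that $Lax(D,C(H))$ carries a canonical locally ordered comonad structure obtained by pointwise lifting, so that $Lax(D,C(H))\mbox{-}Coalg_{oplax}$ is well-defined as written. One should verify that Proposition~\ref{prop:4} is being applied to precisely this lifted comonad $G=Lax(D,C(H))$, not to some a~priori different comonad, and that the canonical isomorphism of Theorem~\ref{thm:5} is natural enough in $H$ to survive application of $Lax(D,-)$.

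Since every step reduces to invoking a result already established in the excerpt, the proof is essentially a diagram-chase at the level of (iso)functors, and the write-up can reasonably be compressed to the single sentence ``Combine Proposition~\ref{prop:2}, Proposition~\ref{prop:4} and Theorem~\ref{thm:5}'', which is indeed how the authors frame it immediately before the statement. The only genuine content to expose is the alignment of comonad structures handled by Proposition~\ref{Prop:3}.
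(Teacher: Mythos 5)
Your proposal is correct and follows exactly the paper's route: the paper's entire ``proof'' is the sentence ``Combining Proposition~\ref{prop:2}, Proposition~\ref{prop:4} and Theorem~\ref{thm:5}, we can conclude the following,'' and your chain --- Proposition~\ref{prop:2} on the left, the image of Theorem~\ref{thm:5} under $Lax(D,-)$ in the middle, Proposition~\ref{prop:4} (with the comonad structure supplied by Proposition~\ref{Prop:3}) on the right --- is precisely the intended combination, spelled out in more detail than the paper gives. Your side remark about the comonad in Proposition~\ref{prop:4} being the pointwise lift $Lax(D,C(H))$ is the right point to flag, and it is handled exactly as you say.
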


\begin{corollary}\label{main}
For any locally ordered endofunctor $H$ on $Poset$, if $C(H)$ is
the cofree comonad on $H$, then there is a canonical isomorphism
\[
Lax(\ls^{op},H)\mbox{-}Coalg_{oplax} \simeq
Lax(\ls^{op},C(H))\mbox{-}Coalg_{oplax}
\]
\end{corollary}

Putting $H = P_cP_f$, Corollary~\ref{main} gives us the abstract
result we need. The lax natural transformation $p:At\longrightarrow
P_cP_fAt$ generated by a logic program $P$, evaluated at a natural
number $n$, sends an atomic formula $A(x_1,\ldots ,x_n)$ to the set of
sets of antecedents in substitution instances of clauses in $P$ for
which the head of the substituted instance agrees with $A(x_1,\ldots
,x_n)$.  That in turn yields a lax natural transformation ${\bar p}:At
\longrightarrow C(P_cP_f)At$, which, evaluated at $n$, is the function
from the set $At(n)$ to the set $\&\vee_c$-$Tree_{At(n)}$ determined
by the construction of Theorem~\ref{constr:Gcoalg} if one treats the
variables $x_1,\ldots ,x_n$ as constants. See also \cite{BonchiZ13} for a Laxness-free semantics for CoALP.

\begin{example}\label{ex:treefo}
  Consider \texttt{ListNat} as in Example~\ref{ex:listnat}. Suppose we
  start with $A(x,y)\in At(2)$ given by the atomic formula
  $\mathtt{list(cons(x,cons(y,x)))}$. Then ${\bar p}(A(x,y))$ is the
  element of $C(P_cP_f)At(2) = \&\vee_c$-$Tree_{At(2)}$ expressible by
    the tree on the left hand side of Figure~\ref{pic:tree}.

    This tree agrees with the start of the and-or parallel
    derivation tree for $\mathtt{list(cons(x,cons(y,x)))}$. It has
    leaves \texttt{nat(x)}, \texttt{nat(y)} and \texttt{list(x)},
    whereas the and-or parallel derivation tree follows those nodes,
    using substitutions determined by mgu's that might not be
    consistent with each other, e.g., there is no consistent
    substitution for \texttt{x}.

  Lax naturality means a substitution potentially yields two different
  trees: one given by substitution into the tree, then pruning to
  remove redundant branches, the other given by substitution into the
  root, then applying ${\bar p}$.

  For example, we can substitute $s(z)$ for both $x$ and $y$ in
  $\mathtt{list(cons(x,cons(y,x)))}$. This substitution is given by
  applying $At$ to the map $(s,s):1 \longrightarrow 2$ in $\ls$. So
  $At((s,s))(A(x,y))$ is an element of $At(1)$. Its image under ${\bar
    p} (1):At(1)\longrightarrow C(P_cP_f)At(1)$ is the element of
  $C(P_cP_f)At(1) = \&\vee_c$-$Tree_{At(1)}$ given by the tree in
  the middle of Figure~\ref{pic:tree}.
%
% Observe that the tree generated by the image of $A(x,y)$ above contains the image of the tree generated by $A(x,y)$:
%
\begin{figure}
\begin{center}
%\footnotesize{
  \begin{tikzpicture}[scale=0.35,baseline=(current bounding box.north),grow=down,level distance=16mm,sibling distance=50mm,font=\footnotesize]
  \node {$\mathtt{list(c(x,c(y,x)))}$}
   child {[fill] circle (4pt)
     child { node {$\mathtt{nat(x)}$}}
       child { node {$\mathtt{list(c(y,x))}$}
        child {[fill] circle (4pt)
       child { node{$\mathtt{nat(y)}$}}
         child { node{$\mathtt{list(x)}$}}}
       }}; 
  \end{tikzpicture}
$\rightarrow$
\begin{tikzpicture}[scale=0.35,baseline=(current bounding box.north),grow=down,level distance=16mm,sibling distance=60mm,font=\footnotesize ]
  \node {$\mathtt{list(c(s(z),c(s(z),s(z))))}$}
   child {[fill] circle (4pt)
     child [sibling distance=70mm]{ node {$\mathtt{nat(s(z))}$}
     child {[fill] circle (4pt)
     child { node {$\mathtt{nat(z)}$}
     }}}
       child { node {$\mathtt{list(c(s(z),s(z)))}$}
        child {[fill] circle (4pt)
          child [sibling distance=50mm]{ node{$\mathtt{nat(s(z))}$}
            child {[fill] circle (4pt)
             child { node {$\mathtt{nat(z)}$}}}}
             child [sibling distance=50mm]{ node{$\mathtt{list(s(z))}$}}}
       }}; 
  \end{tikzpicture}
	$\rightarrow$
  \begin{tikzpicture}[scale=0.35,baseline=(current bounding box.north),grow=down,level distance=16mm,sibling distance=35mm,font=\footnotesize]
  \node {$\mathtt{list(c(s(0),c(s(0),s(0))))}$}
   child {[fill] circle (4pt)
     child [sibling distance=75mm]{ node {$\mathtt{nat(s(0))}$}
       child {[fill] circle (4pt)
         child { node {$\mathtt{nat(0)}$}
            child {[fill] circle (4pt)
               child { node {$\Box$}}}}}}
      child [sibling distance=55mm] { node {$\mathtt{list(c(s(0),s(0)))}$}
        child {[fill] circle (4pt)
          child [sibling distance=50mm]{ node{$\mathtt{nat(s(0))}$}
            child {[fill] circle (4pt)
             child { node {$\mathtt{nat(0)}$}
               child {[fill] circle (4pt)
                child { node {$\Box$}}}}}}
             child[sibling distance=50mm] { node{$\mathtt{list(s(0))}$}}}
       }}; 
  \end{tikzpicture}
\end{center}
\caption{\footnotesize{The left hand tree depicts
  ${ \bar p}(\mathtt{list(cons(x,cons(y,x)))})$, the middle tree 
depicts ${\bar p} At(s,s)(\mathtt{list(cons(x,cons(y,x)))})$, i.e., 
${ \bar p}(\mathtt{list(cons(s(z),cons(s(z),s(z))))})$, and the right tree
depicts ${ \bar p} At(0)At(s,s)(\mathtt{list(cons(x,cons(y,x)))})$;
%, where
%the argument is given by the image of $\mathtt{list(cons(x,cons(y,x)))}$ in
 % $At(1)$ determined by the map $s_2:1\longrightarrow 2$
%$1 \stackrel{s(y)/y,
%    s(y)/x}{\longrightarrow} 2$
\texttt{cons} is abbreviated by \texttt{c}.}}
\label{pic:tree} 
\end{figure}
The laxness of the naturality of ${\bar p}$ is indicated by the
increased length, in two places, of this tree. Before those two
places, the two trees have the same structure.

Now suppose we make the further substitution of $0$ for $z$. This
substitution is given by applying $At$ to the map $0:0\rightarrow 1$
in $\ls$.  In Figure~\ref{pic:tree}, we depict ${\bar p} (0)
At(0)At((s,s))(A(x,y))\in \&\vee_c$-$Tree_{At(1)}$ on the right.
Two of the leaves of the latter tree are labelled by $\Box$, but one
leaf, namely $\mathtt{list(s(0))}$ is not, so the tree does not yield
a proof. Again, observe the laxness.
\end{example}

This requires care. Consider the following example,
studied in~\cite{SS86}.

\begin{example}[GC] \label{ex:lp}
Let \texttt{GC}
(for graph connectivity) denote the logic program
\begin{eqnarray*}
\texttt{connected(x,x)} & \gets &\\
\texttt{connected(x,y)} & \gets & \texttt{edge(x,z)},
\texttt{connected(z,y)}.\\
\end{eqnarray*}
There may be additional function symbols, such as a unary $s$, and
additional clauses to give a database, such as $\texttt{edge(0,s(0))}\gets$
and $\texttt{edge(s(0),s(s(0)))}\gets\;$.
Note the presence of a variable $z$ in the body but not the head of
the clause 
\[
\texttt{connected(x,y)} \gets \texttt{edge(x,z)},
\texttt{connected(z,y)}
\]  
That allows derivations involving
infinitely many variables, thus not directly yielding a subtree of
${\bar p}(\texttt{connected(x,y)})\in \&\vee_c$-$Tree_{At(n)}$ for any $n$.
\end{example}

The subtle relationship between the finite and the infinite illustrated
by Example~\ref{ex:lp} is fundamental to the idea of coalgebraic
logic programming, which we develop in the latter sections of the paper. See also Figure \ref{pic:stream}.

%\subsection{Coalgebraic Semantics and Infinite Derivations}\label{sec:infinite}

\begin{definition}\label{df:derivsub}
  Let $P$ be a logic program, $G$ be an atomic goal, and $T$ be the
  $\&\vee_c$-tree  determined by $P$ and $G\in At(n)$.  A subtree
  $T'$ of $T$ is called a \emph{derivation subtree} of $T$ if it
  satisfies the following conditions:
\begin{itemize}
\item the root of $T'$ is the root of $T$ (up to variable renaming);
\item if an and-node belongs to $T'$, then at most one of its children belongs
  to $T'$.
\item if an or-node belongs to $T'$, then all its children belong to $T'$.
\end{itemize}
A finite derivation tree is {\em successful} if its leaves are all
or-nodes (equivalently, they are followed only by $\Box$ in the usual
pictures). 
\end{definition}

By Example~\ref{ex:lp}, derivations need not directly yield derivation
subtrees. Nevertheless, all subderivations of finite length of a
derivation do form derivation subtrees.

\begin{theorem}[Soundness and Completeness of $SLD$-refutations]\label{th:sc}
Let $P$ be a logic program, and $G$ be an atomic goal.
\begin{enumerate}
\item Soundness. If there is an $SLD$-refutation for $G$ in $P$ with
  computed answer $\theta$, then for some $n$ with $G\theta\in At(n)$,
  the $\&\vee_c$-tree for $G\theta$ contains a successful derivation
  subtree.
\item Completeness. If the $\&\vee_c$-tree for $G\theta\in At(n)$
  contains a successful derivation subtree, then there exists an
  $SLD$-refutation for $G$ in $P$, with computed answer $\lambda$ for
  which $ \lambda \sigma = \theta $ for some $\sigma$.
\end{enumerate}
\end{theorem}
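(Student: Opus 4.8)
The plan is to prove both directions by induction on the length of the $SLD$-refutation (for soundness) and on the structure of the successful derivation subtree (for completeness), exploiting the correspondence established in the preceding development between $SLD$-resolution steps and the and-or structure of the $\&\vee_c$-tree ${\bar p}(G\theta)$. The key conceptual bridge is the observation already recorded after Definition~\ref{df:derivsub}: although a full $SLD$-derivation need not globally embed as a derivation subtree (because it may introduce infinitely many fresh variables, as in Example~\ref{ex:lp}), every finite initial segment of a refutation does yield a finite derivation subtree. Since refutations are finite by definition, this obstruction does not arise, and the entire refutation will correspond to a finite derivation subtree.

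\textbf{Soundness.} First I would proceed by induction on the length $n$ of the $SLD$-refutation of $P\cup\{G\}$. In the base case $n=0$ the goal $G$ is already empty, which is degenerate; the first substantive case uses a unit clause, producing a single $\vee$-node followed only by $\Box$. For the inductive step, suppose the refutation begins by resolving the atom in $G$ against a clause $A\gets B_1,\ldots,B_q$ via mgu $\theta_1$, yielding a resolvent $G_1$ with remaining computed answer $\theta'$ where $\theta=\theta_1\theta'$. By the construction of ${\bar p}$ in Theorem~\ref{constr:Gcoalg} (read off in the first-order setting via Corollary~\ref{main}, treating the variables $x_1,\ldots,x_n$ as constants), the clause used corresponds to an $\vee$-child of the root of ${\bar p}(G\theta)$, whose $\&$-children are labelled by $B_1\theta_1,\ldots,B_q\theta_1$. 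Applying the induction hypothesis to the sub-refutations of each $B_i\theta_1$ and assembling the resulting successful derivation subtrees beneath that single $\vee$-child produces a finite successful derivation subtree of ${\bar p}(G\theta)$; the conditions of Definition~\ref{df:derivsub} (one child selected at each and-node, all children retained at each or-node, leaves all or-nodes) are exactly what the assembly guarantees. The only delicate point is reconciling the choice of selected atom in the refutation with the and-branching of the tree, and checking that the laxness of ${\bar p}$ (the possible \emph{lengthening} of branches observed in Example~\ref{ex:treefo}) does not obstruct the embedding: here one uses that substitution into the root followed by ${\bar p}$ dominates substitution into the tree, so the refutation's steps are always available as subtree nodes.

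\textbf{Completeness.} Conversely, given a finite successful derivation subtree $T'$ of the $\&\vee_c$-tree for $G\theta\in At(n)$, I would induct on the height (or node-count) of $T'$ to extract an $SLD$-refutation. At the root, $T'$ selects exactly one $\vee$-child, which by the construction of ${\bar p}$ corresponds to a clause $A\gets B_1,\ldots,B_q$ of $P$ and an mgu used to match it against $G\theta$; reading this matching back as a resolution step against a standardised-apart variant of the clause yields the first step of the refutation and a most general $\lambda_1$, with the subtree's mgu factoring as an instance of $\lambda_1$. Each $\&$-child of the selected $\vee$-node carries a successful derivation subtree of strictly smaller height, so the induction hypothesis supplies refutations with computed answers; composing these (and accounting for the shared variable contexts) gives a computed answer $\lambda$ for $G$ itself, with $\lambda\sigma=\theta$ for the residual substitution $\sigma$ that specialised the general refutation to the instance recorded in $T'$. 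The factorisation $\lambda\sigma=\theta$ is forced by the mgu property of Definition~\ref{df:mgu}, since the refutation uses most general unifiers at each step while $T'$ records a particular instance.

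\textbf{Main obstacle.} The hard part will be the bookkeeping of substitutions and variable contexts across the two directions—precisely the interaction between the mgu's chosen by $SLD$-resolution and the substitutions implicit in the nodes of the lax coalgebra ${\bar p}$. In soundness one must track that composing the per-branch answers yields a coherent global $\theta$; in completeness one must produce a \emph{most general} $\lambda$ and exhibit the specialising $\sigma$, rather than merely reconstructing the instance. Managing standardisation of variables apart (so that independent and-branches do not spuriously share variables) and controlling the laxness of ${\bar p}$—the gap between substituting-then-unfolding and unfolding-then-substituting illustrated in Example~\ref{ex:treefo}—is where the genuine content lies; the inductive skeleton itself is routine.
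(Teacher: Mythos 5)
Your proposal is correct and takes essentially the same route as the paper: the paper's own proof consists of the single observation that the finiteness of refutations makes the statement a routine adaptation of the classical soundness and completeness of the collectivity of SLD-trees for SLD-refutation, which is precisely the two inductions you spell out, including the key point (recorded just after Definition~\ref{df:derivsub}) that finite subderivations always yield derivation subtrees, so the infinite-variable obstruction of Example~\ref{ex:lp} cannot arise for a refutation. The only difference is one of explicitness: where you carry out the inductions and flag the mgu/substitution bookkeeping and the laxness issue as the remaining work, the paper compresses all of this into the word ``routine.''
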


%\proof
\begin{proof}
The finiteness of refutations makes this a routine adaptation of
the soundness and completeness of the collectivity of $SLD$-trees
for $SLD$-refutation.
\end{proof}
%$\Box$

\subsection{Coalgebraic Semantics and the Theory of Observables}\label{sec:TO}

Our coalgebraic analysis relates closely to the \emph{Theory of
  Observables} for logic programming developed in
\cite{CominiLM01}. In that theory, the traditional characterisation of
logic programs in terms of input/output behaviour and successful
derivations is not sufficient for the purposes of program analysis and
optimisation. One requires more complete information about
$SLD$-derivations, specifically the sequences of goals and most
general unifiers used. Infinite derivations can be meaningful.  The
following observables are critical to the theory
\cite{GLM95,CominiLM01}.

\begin{definition}\label{df:observe}
\begin{enumerate}
\item A \emph{call pattern} is a sequence of atoms selected in an
  $SLD$-derivation; a \emph{correct call pattern} is a sequence of
  atoms selected in an $SLD$-refutation.
\item A \emph{partial answer} is a substitution associated with a
  resolvent in an $SLD$-derivation; a \emph{correct partial answer}
  is a substitution associated with a resolvent in an $SLD$-refutation.
%\item A \emph{computed answer} is a substitution associated with an $SLD$-refutation.
%\item \emph{A successful derivation} is the observation of successful
%  termination.
\end{enumerate}
\end{definition}

As explained in \cite{GLM95,CominiLM01}, semantics of logic
programs aims to identify observationally equivalent logic programs and to
distinguish logic programs that are not observationally equivalent.
So the definitions of observation and semantics are interdependent. 
Observational equivalence was defined in~\cite{GLM95} as follows. 

\begin{definition}\label{df:observeq}
  Let $P_1$ and $P_2$ be logic programs with the same alphabet. Then
  $P_1$ is {\em observationally equivalent} to $P_2$, written $P_1
  \approx P_2$, if for any goal $G$, the following conditions
  hold:
\begin{enumerate}
\item $G$ has an $SLD$-refutation in $P_1$ if and only if $G$ has an
  $SLD$-refutation in $P_2$.
\item $G$ has the same set of computed answers in $P_1$ as in $P_2$.
\item $G$ has the same set of (correct) call patterns in $P_1$ as in $P_2$.
\item $G$ has the same set of (correct) partial answers in $P_1$ as in $P_2$.
\end{enumerate}

\end{definition}

\begin{theorem}[Correctness]\label{th:observeq}
  For logic programs $P_1$ and $P_2$, if the
  $Lax(\ls^{op},C(P_cP_f))$-coalgebra structure ${\bar p_1}$ generated
  by $P_1$ is isomorphic to the $Lax(\ls^{op},C(P_cP_f))$-coalgebra
  structure ${\bar p_2}$ generated by $P_2$ (denoted $\bar{p_1} \cong \bar{p_2}$), then $P_1 \approx P_2$.
\end{theorem}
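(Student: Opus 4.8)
The plan is to prove correctness by showing that each of the four observational equivalence conditions in Definition~\ref{df:observeq} is preserved under isomorphism of the generated $Lax(\ls^{op},C(P_cP_f))$-coalgebra structures. The central idea is that the coalgebra $\bar{p}$ generated by a program $P$ records, in the $\&\vee_c$-tree $\bar{p}(A)$ assigned to each atom $A$, exactly the tree of all possible goal/clause/mgu choices arising in $SLD$-derivations from $A$ --- this is precisely the content of Theorem~\ref{constr:Gcoalg} together with the first-order extension via Corollary~\ref{main}. So an isomorphism $\bar{p_1}\cong\bar{p_2}$ is a natural bijection of these trees commuting with the coalgebra structure, and the task reduces to reading off each observable from the tree and checking it is transported by the isomorphism.

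\medskip

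First I would make explicit what the data of an isomorphism in $Lax(\ls^{op},C(P_cP_f))\mbox{-}Coalg_{oplax}$ amounts to: by the structure of this category it is a lax natural iso $\phi:At\to At$ whose components $\phi_n$ are order-isomorphisms of the posets $At(n)$, commuting (up to the laxness) with $\bar{p_1}$ and $\bar{p_2}$. Since $P_1$ and $P_2$ share an alphabet, $At$ is literally the same functor for both, and I would argue that $\phi$ must be the identity on labels, or at least a variable-renaming, so that the two trees $\bar{p_1}(A)$ and $\bar{p_2}(A)$ are isomorphic as labelled $\&\vee_c$-trees for every atom $A$ in every context $n$. The key structural fact to extract is that this labelled-tree isomorphism sends the and-nodes, or-nodes, and their labels of one tree to those of the other, preserving heights and the parent-child relation.

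\medskip

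Next I would establish a dictionary between the observables of Definition~\ref{df:observe} and features of the $\&\vee_c$-tree, invoking Theorem~\ref{th:sc} (soundness and completeness of $SLD$-refutations) and the remark that all finite subderivations form derivation subtrees. Concretely: $SLD$-refutations of $G$ correspond to successful finite derivation subtrees; computed answers are recovered by composing the mgu's read off the selected-atom labels along such a subtree; call patterns are the sequences of atom-labels along branches of derivation subtrees; and partial answers are the substitutions accumulated along prefixes of derivation subtrees. Each of these is a property of the labelled $\&\vee_c$-tree alone. Since $\phi$ induces an isomorphism of these trees respecting labels and structure, it carries successful derivation subtrees to successful derivation subtrees and preserves the associated answers and patterns, giving conditions (1)--(4) of Definition~\ref{df:observeq} and hence $P_1\approx P_2$.

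\medskip

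The main obstacle I expect is the bookkeeping around laxness and variable contexts: because naturality holds only laxly, the tree $\bar{p}(A)$ over context $n$ can be strictly longer than the substituted tree (as illustrated in Example~\ref{ex:treefo}), so I must verify that the isomorphism interacts correctly with substitution --- i.e., that transporting a successful subtree under $\phi$ and then substituting agrees, up to the pruning of redundant branches, with substituting first and then transporting. Closely related is checking that computed answers match only \emph{up to} a renaming $\sigma$, matching the ``$\lambda\sigma=\theta$'' slack already present in Theorem~\ref{th:sc}; this is exactly why the equivalence is stated in terms of \emph{sets} of answers and patterns rather than on the nose. Handling this laxness carefully, rather than any single hard inequality, is where the real work lies, after which each of the four conditions follows by the tree-isomorphism argument above.
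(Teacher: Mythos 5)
A preliminary remark: the paper states Theorem~\ref{th:observeq} with no proof at all --- it appears bare, in the same ``routine'' spirit as the one-line arguments given for Theorems~\ref{th:sc} and~\ref{thm:fullabstract} --- so your proposal is not being measured against a written argument; it is an attempt to supply the missing one, and its overall shape (a dictionary between the observables of Definition~\ref{df:observe} and the structure of the $\&\vee_c$-trees, established via Theorem~\ref{th:sc} and transported along the isomorphism) is surely what the authors intend. Within that shape, however, two steps break. The lesser one is your claim that the isomorphism $\phi$ ``must be the identity on labels, or at least a variable-renaming''. That is not provable: whenever two predicate symbols share an arity, permuting them is a nontrivial natural automorphism of $At$, and it can perfectly well intertwine $\bar{p_1}$ and $\bar{p_2}$. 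For instance, swapping $\texttt{p}$ and $\texttt{q}$ gives an isomorphism between the coalgebras of $\{\texttt{p(a)}\gets\ ;\ \texttt{q(a)}\gets\texttt{q(a)}\}$ and $\{\texttt{q(a)}\gets\ ;\ \texttt{p(a)}\gets\texttt{p(a)}\}$, yet the goal $\gets\texttt{p(a)}$ has a refutation only in the first program. So the hypothesis must simply be \emph{read} as asserting that the identity on $At$ is an isomorphism $\bar{p_1}\cong\bar{p_2}$; that reading cannot be derived inside the proof, as you attempt to do.

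The serious gap concerns conditions (2)--(4) of Definition~\ref{df:observeq}. You correctly notice the $\lambda\sigma=\theta$ slack in Theorem~\ref{th:sc}, but your way of discharging it is wrong on both counts: $\sigma$ there is an arbitrary substitution, not a renaming, and equality of the \emph{sets} of computed answers is strictly stronger than mutual instantiation, so phrasing things in terms of sets absorbs nothing. In fact the key assertion of your third paragraph --- that computed answers, call patterns and partial answers ``are a property of the labelled $\&\vee_c$-tree alone'' --- is false. The coalgebra $p$ records only term-matching \emph{instance} data, so it cannot distinguish a program from the same program with a redundant instance of one of its clauses adjoined, whereas the SLD observables can. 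Take $P_1=\{\texttt{p(x)}\gets\texttt{q(x)};\ \texttt{q(x)}\gets\}$ and $P_2=P_1\cup\{\texttt{p(a)}\gets\texttt{q(a)}\}$: then $p_1=p_2$ on the nose (the extra clause contributes no new head-instances or body-sets), hence $\bar{p_1}=\bar{p_2}$; yet $\gets\texttt{p(x)}$ has computed answer $\{x/a\}$, a call pattern containing $\texttt{q(a)}$, and partial answer $\{x/a\}$ in $P_2$, while in $P_1$ every mgu merely renames $x$. So conditions (2)--(4), read literally, do not follow from $\bar{p_1}\cong\bar{p_2}$, and no argument of the proposed form can establish them; a correct proof requires either taking the observables up to instantiation, or an extra hypothesis tying the clause sets themselves together. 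This defect is shared by the unproved theorem itself (and by the equally terse Theorem~\ref{thm:fullabstract}), but as a proof attempt, this is where yours fails.
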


The converse of Theorem \ref{th:observeq}, \emph{full abstraction},
does not hold, i.e., with the above definition of observational
equivalence, there are observationally equivalent programs that have
different $\&\vee_c$-$Tree$s.

\begin{example}
  Consider logic programs $P_1$ and $P_2$ with the same clauses except
  for one: $P_1$ contains $A \gets B_1,\texttt{false},B_2$; and $P_2$
  contains the clause $A \gets B_1,\texttt{false}$ instead.  The atoms
  in the clauses are such that $B_1$ has a refutation in
  $P_1$ and $P_2$, and \texttt{false} is an atom that has no
  refutation in the programs.  In this case, assuming a left-to-right
  sequential evaluation strategy, all derivations that involve the two
  clauses in $P_1$ and $P_2$ will always fail on \texttt{false}, and
  $P_1$ will be observationally equivalent to $P_2$, but they generate
different trees because of $B_2$.
\end{example}

We can recover full abstraction if we adapt Definitions
\ref{df:observe} and \ref{df:observeq} so that they do not rely upon
an algorithm to choose a selected atom but rather allow arbitrary
choices. This
is typical of coalgebra, yielding essentially an instance of
bisimulation~\cite{JR97}. In order to do that, we need to modify
Definitions~\ref{df:SLD} and~\ref{df:SLDderiv} to eliminate the
algorithm used in the definitions leading to $SLD$-derivations. 

\begin{definition}\label{df:coind-res}
  Let a goal $G$ be $\gets A_1,\ldots, A_k$ and a clause $C$ be
  $A\gets B_1, \ldots ,B_q$. Then $G'$ is \emph{non-deterministically derived}
  from $G$ and $C$ using mgu $\theta$ if the following conditions
  hold:
\item[$\bullet$] $\theta$ is an mgu of some
atom $A_m$  in the body of $G$ and $A$;
\item[$\bullet$] $G'$ is the goal $\gets(A_1, \ldots, A_{m-1},B_1, \ldots
,B_q, A_{m+1},
\ldots, A_k)\theta$.
\end{definition}

Definition~\ref{df:coind-res} differs from Definition~\ref{df:SLD}
in precisely one point: the former refers to ``some atom'' where the
latter refers to ``the selected atom'', with the selection being
determined by an algorithm. The distinction means that
Definition~\ref{df:coind-res} has nondeterminism built into the choice
of atom, which in turn implies the possibility of parallelism in
implementation. We will exploit that later. 
It further implies that a verbatim restatement of Definition~\ref{df:observeq}
but with ``$SLD$-derivation'' replaced by ``coinductive derivation'' also
implies the possibility of implementation based on parallelism.

\begin{definition}\label{df:coind-der}
  A \emph{non-deterministic derivation} of $P\cup \{G\}$ consists of a
  sequence of goals $G=G_0, G_1, \ldots$ called \emph{non-deterministic
    resolvents}, a sequence $C_1,C_2,\ldots$ of variants of program
  clauses of $P$, and a sequence $\theta_1,\theta_2,\ldots$ of mgu's
  such that each $G_{i+1}$ is derived from $G_i$ using
  $\theta_{i+1}$. A \emph{non-deterministic refutation} of $P \cup \{G\}$ is
  a finite non-deterministic derivation of $P \cup \{G\}$ such that its last
  goal is empty, denoted by $\Box$.  If $G_n = \Box$, we say that the
  refutation has length $n$. The composite $\theta_1\theta_2\ldots$
is called a {\em computed answer}.
\end{definition}

%\begin{example}
Figure  \ref{pic:stream-n}  exhibits a non-deterministic derivation for the goal $G = \texttt{stream(x)}$ 
and the program \texttt{Stream} from Example \ref{ex:stream}.
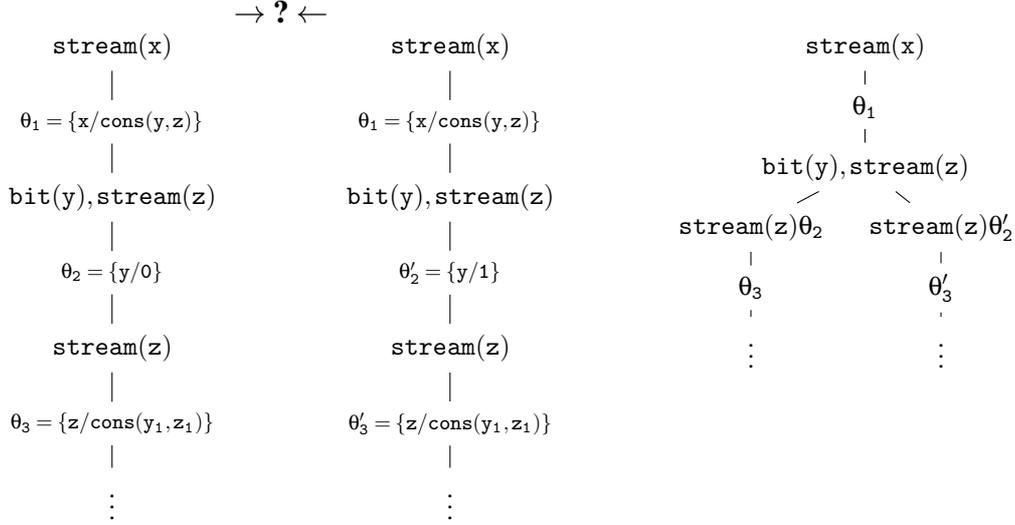
\begin{figure}
\begin{center}

\begin{tikzpicture}[scale=1,baseline=(current bounding box.north),grow=down,level distance=10mm,sibling distance=15mm,font=\footnotesize]
  \node {$\mathtt{stream(x)}$}
        child { node {\emph{\scriptsize{$\mathtt{\theta_1=\{x/cons(y,z)\}}$}}}
     child { node {$\mathtt{bit(y),stream(z)}$}
        child { node {\emph{\scriptsize{$\mathtt{\theta_2=\{y/0\}}$}}}
		   child { node {$\mathtt{stream(z)}$}
		                        child {node {\emph{\scriptsize{$\mathtt{\theta_3=\{z/cons(y_1,z_1)\}}$}}}
								child {node {$\vdots$}}}}}}}; 
  \end{tikzpicture}
  %\quad \quad 
  $\rightarrow \textbf{?} \leftarrow$
\begin{tikzpicture}[scale=1,baseline=(current bounding box.north),grow=down,level distance=10mm,sibling distance=15mm,font=\footnotesize]
  \node {$\mathtt{stream(x)}$}
        child { node {\emph{\scriptsize{$\mathtt{\theta_1=\{x/cons(y,z)\}}$}}}
     child { node {$\mathtt{bit(y),stream(z)}$}
        child { node {\emph{\scriptsize{$\mathtt{\theta_2\rq{}=\{y/1\}}$}}}
		   child { node {$\mathtt{stream(z)}$}
		                        child {node {\emph{\scriptsize{$\mathtt{\theta_3\rq{}=\{z/cons(y_1,z_1)\}}$}}}
								child {node {$\vdots$}}}}}}}; 
  \end{tikzpicture}
    \quad\quad\quad
\begin{tikzpicture}[baseline=(current bounding box.north),grow=down,level distance=8mm,sibling distance=20mm,font=\footnotesize]
    \node {$\mathtt{stream(x)}$}
     child { node {\emph{\footnotesize{$\mathtt{\theta_1}$}}}
      child [sibling distance=20mm]{ node {$\mathtt{bit(y),stream(z)}$}
         child  [sibling distance=30mm] { node {$\mathtt{stream(z)\theta_2}$}
              						child [sibling distance=20mm] { node{\emph{\footnotesize{$\mathtt{\theta_3}$}}}
							   child  [sibling distance=20mm]{node{$\vdots$}}}}
					            child  [sibling distance=20mm]{ node {$\mathtt{stream(z)\theta_2\rq{}}$}
					            		                        child {node {\emph{\footnotesize{$\mathtt{\theta_3\rq{}}$}}}
					child  [sibling distance=15mm] {node{$\vdots$}}}
					}				}}; 
\end{tikzpicture}
\end{center}
\caption{\footnotesize{\textbf{Left:} Two possible choices for non-deterministic derivation for the goal $G = \texttt{stream(x)}$ and 
the program \texttt{Stream}, with $\theta_1 = \{x/scons(y,z)\}$, $\theta_2 = \{y/0\}$, $\theta_2\rq{}=\{y/1\}$ and $\theta_3 = \theta_3\rq{}=\{y/scons(y_1,z_1)\}$. \textbf{Right:} the two non-deterministic derivations shown in the form of an SLD-tree.}}
\label{pic:stream-n} 
\end{figure}
%\end{example}

Given logic programs $P_1$ and $P_2$ over the same alphabet, we write
$P_1\approx_n P_2$ if, consistently replacing $SLD$-derivation and
$SLD$-refutation by non-deterministic derivation and nondeterministic refutation
in Definitions~\ref{df:observe} and~\ref{df:observeq}, $P_1$ and $P_2$
are observationally equivalent.

\begin{theorem}[Full abstraction]\label{thm:fullabstract}
  For any logic programs $P_1$ and $P_2$ with the same alphabet, $P_1
  \approx_n P_2$ if and only if ${\bar p_1}\cong {\bar p_2}$.
 \end{theorem}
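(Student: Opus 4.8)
The plan is to reduce the statement to a comparison of the $\&\vee_c$-trees generated by $P_1$ and $P_2$, and then to identify $\approx_n$ with a bisimulation on the underlying coalgebras. By Corollary~\ref{main}, isomorphism of the $Lax(\ls^{op},C(P_cP_f))$-coalgebras $\bar{p_1}$ and $\bar{p_2}$ is equivalent to isomorphism of the corresponding $Lax(\ls^{op},P_cP_f)$-coalgebras $p_1$ and $p_2$, the isomorphism of categories commuting with the forgetful functors as in Theorem~\ref{comonad}; and by Theorem~\ref{constr:Gcoalg}, $\bar{p_i}(A)$ is exactly the $\&\vee_c$-tree whose $\vee$-nodes record the clauses of $P_i$ matching each atom and whose $\&$-children record the body atoms of those clauses. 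So it suffices to show that $P_1\approx_n P_2$ holds if and only if the trees $\bar{p_1}(A)$ and $\bar{p_2}(A)$ assemble into an isomorphism of $\&\vee_c$-trees, compatibly with the lax-natural structure in the variable context.

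First I would set up the correspondence between non-deterministic derivations and explorations of these trees. A step of Definition~\ref{df:coind-res} picks \emph{some} atom $A_m$ in the current goal and \emph{some} clause whose head unifies with it; in tree terms this is the choice of an $\&$-node together with one of its $\vee$-children, the resulting resolvent being the $\&$-grandchildren after the mgu. Thus a non-deterministic derivation traces a selection-unconstrained exploration of $\bar{p_i}(A)$: its sequence of selected atoms is a (correct) call pattern, its accumulated mgu is the corresponding partial answer, and it is a refutation exactly when every branch it follows terminates in $\Box$. Because selection is now unconstrained, the totality of such derivations visits every node and every branch, not merely those reachable under a fixed left-to-right strategy. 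For the backward direction I would then argue as in Theorem~\ref{th:observeq}: an isomorphism $\bar{p_1}\cong\bar{p_2}$ carries $\bar{p_1}(A)$ to $\bar{p_2}(A)$ preserving labels, branching and substitution labels, hence preserves the sets of paths just described, giving equality of all four families of observables and so $P_1\approx_n P_2$.

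For the forward direction — the substantive half — I would define, for atoms $A$ and $A'$, the relation $A\mathrel{R}A'$ meaning that the goal $\gets A$ in $P_1$ and the goal $\gets A'$ in $P_2$ generate identical sets of non-deterministic (correct) call patterns and partial answers. Using the step-by-step correspondence above, one checks that $R$ is a coalgebraic bisimulation between $p_1$ and $p_2$: agreement of one-step call patterns and partial answers forces the $\vee$-children (clause matches, recorded through their heads and mgu's) to correspond, and agreement of the residual observables obtained after one step forces the $\&$-children (body atoms) to correspond. This is the point at which removing the selection algorithm is essential, and it is exactly what repairs the failure of full abstraction illustrated by the $B_2$ example preceding Definition~\ref{df:coind-res}: an unconstrained derivation may now select $B_2$ first and thereby observe the branch that a left-to-right strategy hides.

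The hard part will be upgrading this bisimulation to an honest isomorphism of $\&\vee_c$-trees, and here the \emph{local injectivity} condition is exactly what is needed. Local injectivity guarantees that distinct $\vee$-children of an $\&$-node carry distinct label-sets among their children, and that distinct $\&$-children of a $\vee$-node carry distinct labels; consequently $R$ cannot identify two genuinely different branches, so it restricts to a bijection on children at every node and assembles into the required tree isomorphism, which then transports back through Corollary~\ref{main} to an isomorphism $\bar{p_1}\cong\bar{p_2}$. The two places where I expect to have to work carefully are (i) showing that the \emph{finite} partial observables suffice to separate the possibly \emph{countably branching} trees arising from clauses with extra body variables, as in \texttt{GC} of Example~\ref{ex:lp} — this uses that two distinct locally injective trees already differ at some finite depth along some finite branch, which a partial call pattern can witness; and (ii) ensuring the reconstructed isomorphism is laxly natural in the context $n$, which follows because the partial-answer substitutions recorded by the observables are precisely the mgu labels that the lax-naturality squares of Corollary~\ref{main} must respect.
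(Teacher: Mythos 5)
Your overall route is the same as the paper's: the paper's entire proof of this theorem is the single remark that, once the selection algorithm is removed, ``observational equivalence accounts for all branches,'' together with an implicit appeal to the correctness direction (Theorem~\ref{th:observeq}); your backward direction matches this and is fine, modulo one imprecision --- the $\&\vee_c$-trees carry no mgu or substitution labels (they are built by term matching), so observables are recovered from the whole family $\bar{p}(G\theta)$ indexed by substitutions $\theta$, as in Theorem~\ref{th:sc2}, rather than read off a single tree. The substantive problem is in your forward direction, at the step where you claim that agreement of one-step call patterns and partial answers ``forces the $\vee$-children to correspond'' and that agreement of the residual observables ``forces the $\&$-children to correspond.'' That forcing fails: an atom occurring in a clause body that unifies with no clause head can never be the selected atom of any step of a non-deterministic derivation (Definition~\ref{df:coind-der} requires an mgu with some clause to exist), so it leaves no trace whatsoever in call patterns, partial answers, computed answers or refutations --- yet it, and the way body atoms are grouped into clauses, is visible in the trees $\bar{p}(A)$.

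Concretely, let $A$, $B$, $C$ be distinct nullary predicates, let $P_1$ consist of the single clause $A \gets B, C$, and let $P_2$ consist of the two clauses $A \gets B$ and $A \gets C$. In either program the only possible derivation step from any goal resolves $A$ with the trivial mgu, after which the derivation is stuck; so for every goal the call patterns are the empty sequence and $(A)$, the partial answers are trivial, and there are no refutations or computed answers. Hence $P_1 \approx_n P_2$. But $\bar{p}_1(A)$ has one $\vee$-node with two $\&$-children labelled $B$ and $C$, whereas $\bar{p}_2(A)$ has two $\vee$-nodes with one $\&$-child each; both are legitimate locally injective trees, they are not isomorphic, and no coalgebra isomorphism exists. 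So your relation $R$ relates $A$ to $A$ yet is not a bisimulation, and local injectivity cannot repair this, since the defect is that $R$ fails the one-step correspondence, not that it conflates branches. The same observation shows that non-deterministic selection repairs the paper's $B_2$ example only when $B_2$ itself unifies with some clause head, which the paper does not assume. To push your argument through you would need either an extra hypothesis (every body atom is resolvable) or strictly richer observables (e.g., recording selected-but-stuck atoms, or the resolvents themselves). In fairness, this is precisely the subtlety that the paper's one-sentence ``routine'' proof glosses over; but since your proposal, unlike the paper, commits to an explicit forcing step, that step is where the argument breaks.
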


%\proof
\begin{proof}
This is routine: as we have allowed any choice of atom rather than
depending upon an algorithm to choose a selected atom, observational
equivalence accounts for all branches.

$\Box$
\end{proof}

The way in which coalgebra models nondeterministic derivations is necessarily complex for a few reasons:

\begin{enumerate}
	\item a non-deterministic derivation might involve infinitely many variables, but each $At(n)$ only allows for a finite number of variables.

\item a non-deterministic derivation could involve an infinite chain of substitutions, but an element of $At(n)$ does not allow for that. Consider e.g. Example \ref{ex:lp3}.
\end{enumerate}
 %[Consider for example the program R(f(x)) <- R(x) and the goal R(x)?]

So, within coalgebra, one can only give a chain of finite approximants to a nondeterministic derivation.
Theorem~\ref{th:sc} extends routinely from $SLD$-refutations to non-deterministic
refutations. We can further extend it to non-deterministic derivations too,
with due care for the possibility of derivations involving infinitely many
variables as induced by Example~\ref{ex:lp}.

\begin{theorem}[Soundness and Completeness of non-deterministic
  derivations]\label{th:sc2}
  Let $P$ be a logic program, with $p$ its induced
  $Lax(\ls^{op},P_cP_f)$-coalgebra, and let $G$ be an atomic goal.
\begin{enumerate}
\item Soundness. Given any finite subderivation of a non-deterministic
  derivation of $P\cup \{G\}$ with partial answer $\theta$, the
  subderivation generates a derivation subtree of ${\bar p}(G\theta)$
  for some $n$ with $G\theta \in At(n)$.
\item Completeness. Given a list $\theta_0,\theta_1,\ldots$ of
  substitutions, and a list $T_0,T_1,\ldots$ of finite derivation
  subtrees of ${\bar p}(G\theta_0)$, ${\bar p}(G\theta_0\theta_1)$,
  etcetera, with $T_n\theta_n$ a subtree of $T_{n+1}$ for each $n$,
  there is a non-deterministic derivation of $P\cup \{G\}$ that generates
  the $T_n$'s.
\end{enumerate}
\end{theorem}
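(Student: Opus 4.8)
The plan is to prove soundness and completeness separately, in each case proceeding by induction on the length of the finite (sub)derivation, and at each inductive step relating a single application of the non-deterministic resolution step (Definition~\ref{df:coind-res}) to a single layer of growth of a derivation subtree (Definition~\ref{df:derivsub}) of the tree $\bar p(G\theta)$. The central observation underpinning both directions is the explicit description of $\bar p$ coming from Theorem~\ref{constr:Gcoalg} (transported to the first-order setting via Corollary~\ref{main}): the tree $\bar p(A)$ has, below its root $A$, one $\vee$-node (or-node) for each clause whose head unifies with $A$, and below each such $\vee$-node one $\&$-node (and-node) for each atom in the correspondingly substituted clause body. Thus choosing a clause $C$ and an mgu $\theta$ in a resolution step corresponds exactly to choosing one or-child of $A$ in the tree and including all of its and-children; and a derivation subtree is precisely a record of such choices, since it takes \emph{all} children of each or-node but \emph{at most one} child of each and-node. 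I would make this correspondence the backbone of the argument.

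\textbf{Soundness.} First I would fix a finite subderivation $G = G_0, G_1, \ldots, G_k$ with clauses $C_1,\ldots,C_k$ and mgu's $\theta_1,\ldots,\theta_k$, and let $\theta = \theta_1\cdots\theta_k$ be its partial answer. I would build the derivation subtree $T' \subseteq \bar p(G\theta)$ by induction on $k$. For $k=0$, $T'$ is the single root. For the inductive step, a resolution of $G_i$ into $G_{i+1}$ selects some atom $A_m$ in $G_i$ and some clause whose head unifies with $A_m$; in the tree this atom sits at an and-node, and the chosen clause corresponds to one of its or-children, whose own and-children are exactly the body atoms of $G_{i+1}$. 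I would append that or-node (taking all its children, as the or-node condition demands) to the partially-built subtree, and observe that the remaining atoms of $G_i$ contribute the at-most-one-child-of-each-and-node branches. The accumulated substitution is handled by appealing to laxness: since substitution only lengthens the tree $\bar p$ (as illustrated in Example~\ref{ex:treefo}), the subtree built using $\theta_1\cdots\theta_i$ embeds, after applying the further substitution, into $\bar p(G\theta)$. Care with Example~\ref{ex:lp} is needed because a clause may introduce a fresh variable $z$ not occurring in the head, pushing the atom into $At(n')$ for a larger $n'$; I would simply choose $n$ large enough to accommodate all variables appearing in the finite subderivation, which is possible precisely because the subderivation is finite.

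\textbf{Completeness.} Conversely, given the lists $\theta_0,\theta_1,\ldots$ and finite derivation subtrees $T_0, T_1, \ldots$ of $\bar p(G\theta_0), \bar p(G\theta_0\theta_1),\ldots$ with $T_n\theta_n$ a subtree of $T_{n+1}$, I would read off a non-deterministic derivation layer by layer. Each derivation subtree $T_n$, being finite, records a finite sequence of clause-and-mgu choices (one or-node included in full for each resolution, with its and-children specifying the new body atoms); traversing $T_n$ in a height-compatible order yields goals $G_0, G_1, \ldots$ and the clauses and mgu's witnessing each $G_{i+1}$ as non-deterministically derived from $G_i$ per Definition~\ref{df:coind-der}. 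The compatibility condition $T_n\theta_n \subseteq T_{n+1}$ guarantees that the derivation read off from $T_{n+1}$ extends the one read off from $T_n$, so the choices assemble coherently into a single (possibly infinite) non-deterministic derivation that generates exactly the $T_n$'s.

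\textbf{The main obstacle} is the bookkeeping of variables and substitutions under laxness, i.e., reconciling the finitary constraint that each $At(n)$ carries only finitely many variables with derivations that, as in Example~\ref{ex:lp}, may introduce unboundedly many fresh variables. The clean way around this is to exploit finiteness at every inductive step: a finite subderivation or a finite derivation subtree mentions only finitely many variables and finitely many substitutions, so one can always work inside a single $At(n)$ of sufficiently large arity and invoke the comparison $T_n\theta_n \subseteq T_{n+1}$ to pass between successive approximants. Beyond this, everything is, as the theorem statement anticipates and as for Theorem~\ref{th:sc}, a routine adaptation of the soundness and completeness argument for $SLD$-refutations, now phrased so that \emph{any} choice of selected atom is permitted rather than one dictated by a fixed selection algorithm.
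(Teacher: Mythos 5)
Your proposal is correct and follows essentially the same route as the paper's own proof: soundness by induction on the length of the finite subderivation, where at each step the new mgu is applied to the whole of the previously built subtree (your appeal to laxness is exactly the paper's ``apply $\theta_{n+1}$ to the whole of $T_n$'') before extending at the selected leaf with the chosen or-node and all of its and-children; and completeness by realising each finite derivation subtree as a finite derivation from $G$ followed by a substitution, with the condition that $T_n\theta_n$ is a subtree of $T_{n+1}$ guaranteeing that these derivations extend one another coherently. One imprecision is worth fixing in your ``central observation'': the or-children of an and-node $A$ in $\bar p$ are given by \emph{term matching} (substitution instances of clauses whose head \emph{equals} $A$), not by clauses whose head merely \emph{unifies} with $A$ --- for instance, in \texttt{ListNat} the node $\mathtt{nat(x)}$ is a leaf of $\bar p$, cf.\ Figure~\ref{pic:tree} --- so the or-node you append only exists \emph{after} the relevant mgu has been applied to the tree; your construction survives precisely because your laxness step performs that application, which is also why the theorem is stated for $\bar p(G\theta)$ rather than $\bar p(G)$.
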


%\proof
\begin{proof}
The soundness claim follows from induction on the length of
a finite subderivation. For length $0$, the statement is trivial.
Assume it is true for length $n$, with derivation subtree
$T_n$ of ${\bar p}(G\theta)$. Suppose $G_{n+1}$ is derived
from $G_n$ using $\theta_{n+1}$ and clause $C_{n+1}$, with respect to the
atom $A_m$ in $G_n$. Apply $\theta_{n+1}$
to the whole of $T_n$, yielding a derivation subtree of
${\bar p}(G\theta\theta_{n+1})$, and extend the tree at the leaf
$A_m\theta_{n+1}$ by applying $\theta_{n+1}$ to each atom in the body of
the $C_{n+1}$ to provide the requisite $and$-nodes.

Completeness also holds by induction. For $n = 0$, given a finite derivation
subtree $T_0$ of ${\bar p}(G\theta_0)$, if follows from the finiteness of $T_0$
and the fact that it is a subtree of ${\bar p}(G\theta_0)$ that it can be
built from a finite sequence of derivation steps starting from $G$, followed
by a substitution.

Now assume that is the case for $T_n$, and we are
given $T_{n+1}$ subject to the conditions stated in the theorem. By our
inductive hypothesis, we have a finite derivation from $G$, followed by
a substitution, that yields the tree $T_n$. That is therefore also true
for $T_n\theta_{n+1}$. As $T_{n+1}$ is a finite extension of $T_n$ and is a subtree
of ${\bar p}(G\theta_0\ldots \theta_{n+1})$, one can make a finite extension
of the finite derivation from $G$ that, followed by a substitution, yields
$T_{n+1}$.
%\qed
$\Box$
\end{proof}

\section{Corecursion in Logic Programming}\label{sec:corec}

We now move from abstract theory towards the development of
coalgebraic logic programming. Central to this is the relationship
between the finite and the infinite. We introduce a new kind of tree
in order to make the subtle relationship precise and underpin our
formulation of CoALP, a variant of logic programming based on our
coalgebraic semantics.

\subsection{Coinductive derivations}\label{sec:coindtree}

We first return to our running example of program \texttt{Stream}. In Section \ref{sec:lp1} and Figure \ref{pic:stream-n}, we have seen that this program gives rise to non-terminating
SLD-derivations and infinite SLD-trees; moreover, the conventional greatest fixed point semantics is unsound for such cases. 
Coalgebraic semantics of Section \ref{sec:OS} suggests the following tree-based semantics of derivations in \texttt{Stream}, see Figure \ref{pic:stream}.
Comparing Figure \ref{pic:stream-n} and Figure \ref{pic:stream}, we see that computations described by $\&\vee_c$-$Tree$s
suggest parallel branching, much like and-or parallel trees \cite{GPACH12}, and also -- finite height trees in the case of \texttt{Stream}.
These two features will guide us in this Section, when we develop the computational algorithms for CoALP, and then follow them with implementation in Section \ref{sec:impl}. 

We suggest the following definition of \emph{coinductive tree} as a close computational counterpart of the $\&\vee_c$-$Tree$s of the previous section. 
\begin{definition}\label{df:coindt}
Let $P$ be a logic program and $G= A$ be an atomic goal.
The \emph{coinductive tree} for $A$ is
  a possibly infinite tree $T$ satisfying the following properties.
\begin{itemize}
\item $A$ is the root of $T$.
\item Each node in $T$ is either an and-node or an or-node.
\item Each or-node is given by $\bullet$.
\item Each and-node is an atom.
%\item For every and-node $A'$ occurring in $T$, if there exists a substitution $\theta$ 
%such that there is only one clause $C$ in $P$ of the form $B \gets B_1, \ldots , B_n$, 
%such that $A' = B\theta $, then $A'$ has one child given by an or-node, which has $n$ children given by and-nodes $B_1\theta, \ldots B_n\theta$.
\item For every and-node $A'$ occurring in $T$, if there exist exactly $m>0$ 
distinct  clauses $C_1, \ldots , C_m$ in $P$ (a clause $C_i$ has the form $B_i
  \gets B^i_1, \ldots , B^i_{n_i}$, for some $n_i$),  such that $A' = B_1\theta_1 =
  ... = B_m\theta_m$, for mgu's $\theta_1, \ldots , \theta_m$,  then $A'$ has exactly $m$ children given by
  or-nodes, such that, for every $i \in m$, the $i$th or-node has $n_i$
  children given by and-nodes $B^i_1\theta_i, \ldots ,B^i_{n_i}\theta_i$.
\end{itemize}
\end{definition}

%**Definition of coinductive derivations by coinductive trees here***

%\begin{example} Figure  \ref{pic:stream} shows three steps in coinductive derivation for the goal $\mathtt{stream(x)}$.
%\end{example}

\begin{figure}
\begin{center}
  \begin{tikzpicture}[scale=0.8,baseline=(current bounding box.north),grow=down,level distance=8mm,sibling distance=15mm,font=\footnotesize]
  \node {$\mathtt{stream(x)}$}; 
  \end{tikzpicture}
 $\stackrel{\theta_1}{\rightarrow}$
\begin{tikzpicture}[scale=0.8,baseline=(current bounding box.north),grow=down,level distance=8mm,sibling distance=25mm,font=\footnotesize]
  \node {$\mathtt{stream(scons(z,y))}$}
   child {[fill] circle (2pt)
     child { node {$\mathtt{bit(z)}$}}
     child { node {$\mathtt{stream(y)}$}
     }}; 
  \end{tikzpicture}
	 $\stackrel{\theta_2}{\rightarrow} \ldots \stackrel{\theta_3}{\rightarrow} $
\begin{tikzpicture}[scale=0.8,baseline=(current bounding box.north),grow=down,level distance=8mm,sibling distance=30mm,font=\footnotesize]
  \node {$\mathtt{stream(scons(0,scons(y_1,z_1)))}$}
   child {[fill] circle (2pt)
     child { node {$\mathtt{bit(0)}$}
		   child {[fill] circle (2pt)
			   child {node {$\Box$}}}}
          child { node {$\mathtt{stream(scons(y_1,z_1))}$}
		        child {[fill] circle (2pt)
                child {node {$\mathtt{bit(y_1)}$}}
								child {node {$\mathtt{stream(z_1)}$}}}}   }; 
  \end{tikzpicture}
\end{center}
\caption{\footnotesize{According to the coalgebraic semantics of the previous section, the left hand tree depicts
  ${ \bar p}(\mathtt{stream(x)})$, the middle tree 
depicts ${\bar p}\ At(scons)(\mathtt{stream(x)})$, 
 and the right tree
depicts ${ \bar p}\ At(scons)At(0)At(scons)(\mathtt{stream(x)})$.
The same three trees represent a coinductive derivation for the goal $G = \texttt{stream(x)}$ and the program \texttt{Stream}, with $\theta_1 = x/scons(z,y)$, 
$\theta_2 = z/0$ and$\theta_3 = y/scons(y_1,z_1)$.}}
\label{pic:stream} 
\end{figure}
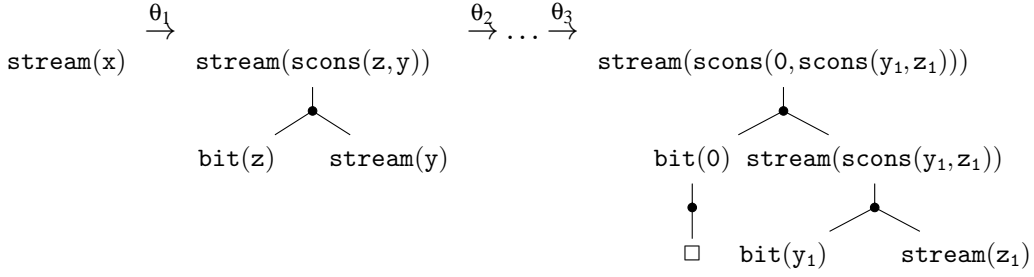

Three coinductive trees for program \texttt{Stream} are shown in Figure~\ref{pic:stream}. 
%In implementation of CoALP, we restrict  the choice of substitutions $\theta_1, \ldots \theta_m$ in Definition
%\ref{df:coindt} to the most general unifiers only.  In this case, when constructing the coinductive derivations according to Definition \ref{df:coind-res2}, 
%$T$ is uniquely determined by $A$.
In contrast to $SLD$-trees, coinductive trees restrict
unification to term matching, i.e., we have $A' = B\theta$, rather
than $A'\theta = B\theta$.  Unification in general is inherently
sequential~\cite{DKM84}, but term matching is parallelisable. 
At the same time,  this
restriction provides a powerful tool for implicit resource
control as it allows one to unfold coinductive trees lazily, keeping
each individual tree at a finite size, 
provided the program is
well-founded; 
as we discuss in detail in Section \ref{sec:guard}.
In our implementation,  we assume that every branch of the coinductive tree can by 
constructed in parallel to other branches, that is, no extra algorithm coordinating the variable substitutions is needed.
See also Sections \ref{sec:parallel} and \ref{sec:impl}.

%If we restrict the choice of substitutions $\theta_1, \ldots
%\theta_m$ to most general unifiers, $T$ will be uniquely determined
%by $A$. 
%Restriction to mgu's can be used for implicit resource
%control, but we do not make that restriction in general. 

%We define the \emph{depth of a coinductive tree} inductively as follows. The
%root of a coinductive tree has depth $0$.  For an and-node $x$, if
%its immediate parent and-node has depth $d$, then $x$ has depth $d+1$.
%The depth of a tree is defined to be the depth of its deepest branch.

As can be seen from Figures \ref{pic:tree}  and \ref{pic:stream}, one coinductive tree $T$ may not 
produce the answer corresponding to a refutation by the SLD-resolution. Instead, a sequence of coinductive 
trees may be needed to advance the derivation. 
   We introduce a new derivation algorithm that allows
proof search using coinductive trees.  We modify the definition of
a goal by taking it to be a pair $<A,T>$, where $A$ is an atom, and
$T$ is the coinductive tree determined by $A$. 

\begin{definition}\label{df:coind-res2}
  Let $G$ be a goal given by an atom $\gets A$ and the 
  coinductive tree $T$ induced by $A$, and let $C$ be a clause $H \gets
  B_1, \ldots , B_n$.  Then goal $G'$ is \emph{coinductively derived}
  from $G$ and $C$ using mgu $\theta$ if the following conditions
  hold:
\item[$\bullet$] $A'$ is an atom in $T$.
\item[$\bullet$] $\theta$ is an \emph{mgu} of $A'$ and $H$.
\item[$\bullet$] $G'$ is given by the atom $\gets A\theta$ and the
  coinductive tree $T\theta$. % determined by $A\theta$.
\end{definition}

Coinductive derivations resemble \emph{tree rewriting}. They produce the ``lazy'' corecursive effect:  derivations are given by potentially infinite number of steps, where each individual step is executed in finite time.
 
\begin{example}\label{ex:stream2}
  %\texttt{Stream}, i.e., Example~\ref{ex:stream}, allows the following
  %infinite $SLD$-derivation
  %\\
  %\texttt{stream(x)} $\xrightarrow{x/scons(y,z)}$
  %\texttt{bit(y),stream(z)} $\xrightarrow{y/0}$ \texttt{stream(z)}
  %$\rightarrow$ $\ldots$
  %\\
  %containing an infinite repetition of \texttt{stream(x)} for various
  %variables $x$. 
	Figure \ref{pic:stream-n} shows how  \texttt{Stream} gives rise to infinite
  $SLD$-trees. But it only gives rise to finite
  coinductive trees because of the term matching condition in the
  definition of coinductive tree.
	%: all clauses in \texttt{Stream} are
  %guarded. 
	Moreover, there is only one coinductive tree for any goal.
  An infinite derivation can be modelled by an infinite 
  coinductive derivation, as illustrated in
  Figure~\ref{pic:stream}.
\end{example}

%In applying
%SLD-derivation, one's primary interest lies in derivations of atomic
%goals. But in order to make the induction work, one must generalise
%goals from being atoms to being lists of atoms, see
%Definition~\ref{df:SLD}. 

\begin{example}\label{ex:ct-list}
  \texttt{ListNat}, i.e., Example~\ref{ex:listnat}, also gives rise to
  infinite $SLD$-trees, see Figure~\ref{fig:SLD1}, but it also
  only gives rise to finite coinductive trees as, again, all clauses
  in \texttt{ListNat} are guarded by constructors $\mathtt{0, s, nil, cons}$. A coinductive derivation for \texttt{ListNat} and 
	the goal 
	$\mathtt{list(cons(x,cons(y,x)))}$ is illustrated in Figure~\ref{pic:tree}.
		Again, an infinite derivation can
  be modelled by an infinite chain of finite coinductive trees.
	%, as
  %illustrated in Figure~\ref{pic:tree}.
\end{example}

Note that the definition of coinductive derivation allows for non-deterministic choice of the leaf atoms;  compare e.g. with previously seen 
non-deterministic derivations from Definition \ref{df:coind-res}.
%Every coinductive tree is parallel, but
  Transitions between coinductive trees can be done in a sequential or 
  parallel manner. That is, if there are several non-empty leaves in
  a tree, any such leaf can be unified with some clause in $P$. Such
  leaves can provide substitutions for sequential or parallel tree
  transitions. In Figure \ref{pic:stream}, the substitution
  $\theta' = \theta_2 \theta_3$ is derived by considering mgu's for two leaves in $G_1 =
  <\texttt{stream(scons(z,y))},T_1>$; but, although two separate and-leaves were used
  to compute $\theta'$, $\theta'$ was computed by composing the two substitutions sequentially, and only
  one tree, $T_3$, was produced. However, we could concurrently derive two trees from $T_2$
  instead, $G'_2 = <\texttt{stream(scons(0,y))},T_2>$ and 
	$G''_2 = <\texttt{stream(scons(z,scons($y_1$,$z_1$)))},T'_2>$. We  exploit parallelism of such transitions in Sections \ref{sec:parallel} and \ref{sec:impl}.

\begin{definition}\label{df:derivsub2}
  Let $P$ be a logic program, $G$ be an atomic goal, and $T$ be a
  coinductive tree  determined by $P$ and $G$.  A subtree
  $T'$ of $T$ is called a \emph{coinductive subtree} of $T$ if it
  satisfies the following conditions:
\begin{itemize}
\item the root of $T'$ is the root of $T$ (up to variable renaming);
\item if an and-node belongs to $T'$, then one of its children belongs
  to $T'$.
\item if an or-node belongs to $T'$, then all its children belong to $T'$.
\end{itemize}
A finite coinductive (sub)tree is called a {\em success (sub)tree} if its leaves are empty goals (equivalently, they are followed only by $\Box$ in the usual
pictures). 
\end{definition}

Note that coinductive subtrees are not themselves coinductive trees: coinductive trees give account to all possible and-or-parallel proof choices given the 
terms determined by the goal, whereas a coinductive subtree corresponds to one possible sequential SLD-derivation for the given goal, where unification in the SLD-derivation 
is restricted to term-matching, cf. Definition~\ref{df:coindt}. 

In what follows, we will assume that the goal in Definition~\ref{df:coind-res2} is given by an atom  $\gets A$, and $T$ is implicitly
assumed. 
%This works when we use mgus to find substitutions in Definition~\ref{df:coindt}, and thus the coinductive trees are uniquely determined by the goals and programs.
This convention agrees with the standard logic programming practice, where goals are given by first-order atoms.
For example,  we  say that the goal \texttt{stream(x)} generates the coinductive derivation of Figure~\ref{pic:stream}.
The next definition formalises this convention.  

%To define \emph{coinductive subtree} and \emph{success subtree}, we use exactly the formulation of Definition \ref{df:derivsub}, in which  ``$\&\vee_c$-$Tree$'' is substituted by ``\emph{coinductive} tree''.

\begin{definition}
A \emph{coinductive derivation} of $P\cup \{G\}$ consists of a
sequence of goals $G=G_0, G_1, \ldots$ called \emph{coinductive resolvents} and a sequence
$\theta_1,\theta_2,\ldots$ of mgu's such that each $G_{i+1}$ is
derived from $G_i$ using $\theta_{i+1}$. A
\emph{coinductive refutation} of $P \cup \{G\}$ is a finite
coinductive derivation of $P \cup \{G\}$ such that its last tree contains a success subtree. 
If $G_n$ contains a success subtree, we say
that the refutation has length $n$.
\end{definition}

	%Following the results of Section \ref{sec:TO}, we can define non-feterministic derivations by coinductive trees. For this, Definition \ref{df:coind-res} needs to be modified
	%in that ``a leaf atom'' replaces ``the selected atom'', thus removing the algorithm making the selection. For comparison, see Definition \ref{df:coind-res}. 
	
	We now modify Definitions \ref{df:observe} and \ref{df:observeq} of observational equivalence.
	Suppose the definitions of  a (correct) call pattern and a (correct) partial answer from Definition~\ref{df:observe} are re-formulated
	with respect to coinductive derivations, rather than SLD-derivations. Unlike SLD-derivations, coinductive derivations perform computations in 
	``two dimensions'' -- at the level of coinductive trees and at the level of transitions between coinductive trees.  
Both dimensions of computations can be observed. The next definition formalises this.
	
%Given logic programs $P_1$ and $P_2$, we write
%$P_1\approx_c P_2$ if, consistently replacing $SLD$-derivation and
%$SLD$-refutation by coinductive derivation and coinductive refutation
%in Definitions~\ref{df:observe} and~\ref{df:observeq}, $P_1$ and $P_2$
%are observationally equivalent.

\begin{definition}\label{df:observeq2}
  Let $P_1$ and $P_2$ be logic programs with the same alphabet. Then
  $P_1$ is {\em coinductively observationally equivalent} to $P_2$, written $P_1
  \approx_c P_2$, if for any goal $G$, the following conditions
  hold:
\begin{itemize}
\item[1.-4.] Conditions of Definition ~\ref{df:observeq}, but with coinductive derivations
replacing SLD-derivations.
\item[5.] The coinductive tree $T_1$ for $G$ and $P_1$ contains a coinductive subtree $C$ iff the coinductive tree $T_2$ for $G$ and $P_2$ contains $C$, modulo variable renaming.
\end{itemize}

\end{definition}

For ground programs, all coinductive derivations will have length $0$, and the coinductive tree generated for a given goal will account for
 all alternative derivations by SLD-resolution. %In the ground case, and-or trees and coinductive trees will coincide.
Therefore, conditions [1.-4.] of coinductive observational equivalence will be trivially satisfied for all ground logic programs.
However, condition [5.] will be able to distinguish different logic programs in such cases.

\begin{theorem}[Full abstraction]\label{thm:fullabstract2}
  For any logic programs $P_1$ and $P_2$ with the same alphabet, $P_1
  \approx_c P_2$ if and only if ${\bar p_1}\cong {\bar p_2}$.
 \end{theorem}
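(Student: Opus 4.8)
The plan is to prove both implications after first identifying the coinductive tree generated by a program with the coalgebra it induces. By construction, Definition~\ref{df:coindt} is the computational reading of the coalgebra action $\bar p$ of Theorem~\ref{constr:Gcoalg}, now transported to the first-order lax setting furnished by Corollary~\ref{main}; the caption of Figure~\ref{pic:stream} already records this identification, namely that for every atomic goal $G\in At(n)$ the coinductive tree for $G$ under $P_i$ is exactly the element $\bar{p_i}(G)\in \&\vee_c$-$Tree_{At(n)}$. Hence the whole statement reduces to showing that coinductive observational equivalence of Definition~\ref{df:observeq2} is equivalent to $\bar{p_1}\cong\bar{p_2}$, and the bridge will be the slogan ``two programs are coinductively observationally equivalent iff they generate isomorphic coinductive trees, naturally in $\ls^{op}$''.

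For the direction $\bar{p_1}\cong\bar{p_2}\Rightarrow P_1\approx_c P_2$, I would argue exactly as in the proofs of Theorem~\ref{th:observeq} and Theorem~\ref{thm:fullabstract}. An isomorphism of coalgebras is natural, so it yields for each goal $G$ an isomorphism of the coinductive trees generated by $P_1$ and $P_2$, and it commutes with the substitution action so that entire coinductive derivations correspond step for step. Every observable in Definition~\ref{df:observeq2} --- existence of a coinductive refutation, computed answers, call patterns, partial answers (conditions 1.--4.), and the set of coinductive subtrees (condition 5.) --- is read off the tree structure and the transitions between trees, hence is invariant under this isomorphism up to the variable renaming it carries. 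This direction is routine.

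The substantive direction is $P_1\approx_c P_2\Rightarrow\bar{p_1}\cong\bar{p_2}$, where condition 5. does the essential work. The key lemma I would isolate is a \emph{reconstruction principle}: a coinductive tree is determined up to isomorphism by its set of coinductive subtrees. By Definition~\ref{df:derivsub2} a coinductive subtree picks at each and-node exactly one or-child and retains at every or-node all of its and-children, so every or-child of every and-node occurs as the chosen child in at least one coinductive subtree; the local injectivity condition on $\&\vee_c$-trees then guarantees that distinct or-children are distinguished by the label-sets of their and-children, so no distinct or-children are conflated during overlaying. Consequently, overlaying all coinductive subtrees recovers, height by height via the presentation $\ldots\,T_{n+1}\to T_n\to\ldots$ of Section~\ref{sec:trees}, all or-children and all their and-children, i.e. the whole tree; computed levelwise, this argument applies verbatim to the infinite, countably branching trees arising from programs such as \texttt{GC} of Example~\ref{ex:lp}. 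Since condition 5. states that for every goal $G$ the trees $\bar{p_1}(G)$ and $\bar{p_2}(G)$ have the same coinductive subtrees modulo renaming, the reconstruction principle forces $\bar{p_1}(G)\cong\bar{p_2}(G)$ for every $G$.

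It then remains to promote this family of pointwise tree isomorphisms to an isomorphism of coalgebras in $Lax(\ls^{op},C(P_cP_f))\mbox{-}Coalg_{oplax}$, and I expect this assembly --- rather than the reconstruction itself --- to be the main obstacle. One must verify that the tree isomorphisms, indexed by atoms in each $At(n)$, are laxly natural in $\ls^{op}$, i.e. compatible with substitution along maps of $\ls$, and that they respect the $C(P_cP_f)$-comonad-coalgebra structure, so as to define an invertible oplax coalgebra map. The ``modulo variable renaming'' clause of condition 5. is precisely what supplies the nontrivial automorphism of $At$ that the lax setting permits, and checking that these renamings cohere across substitutions --- exactly where the subtlety flagged by Example~\ref{ex:lp} about infinitely many variables intrudes --- is the delicate bookkeeping. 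Once naturality and comonad-coalgebra compatibility are confirmed, Corollary~\ref{main} identifies the resulting map as the desired isomorphism $\bar{p_1}\cong\bar{p_2}$, completing the proof.
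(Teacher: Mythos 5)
Your proposal is correct and takes essentially the same route as the paper's own proof: the direction from $\bar{p_1}\cong\bar{p_2}$ to $P_1\approx_c P_2$ is dispatched as routine (exactly as in Theorem~\ref{thm:fullabstract}), and the converse hinges on condition~5 of Definition~\ref{df:observeq2}, which forces the coinductive trees generated by $P_1$ and $P_2$ to agree and hence the images of every goal under $\bar{p_1}$ and $\bar{p_2}$ to be isomorphic. The only difference is explicitness: what you isolate as the reconstruction principle (recovering a tree from its set of coinductive subtrees via local injectivity) and the subsequent lax-naturality assembly of the pointwise isomorphisms are both left implicit in the paper's terse argument, so your version fills genuine gaps in detail while following the same skeleton.
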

\begin{proof}

Similarly to Theorem \ref{thm:fullabstract}, we allowed any choice of resolvents, and observational
equivalence accounts for all branches. This accounts for conditions [1.-4.] in Definition \ref{df:observeq2}.

For condition [5.] of coinductive observational equivalence, consider coinductive trees: their structure and labels account for all possible clauses that can be \emph{matched}
with the current goal and subgoals via mgu's. If, for any goal $G$ with $n$ distinct variables, $P_1$ and $P_2$ produce equivalent coinductive trees, then the image of $G$ under ${\bar p_1}$
% the $\&\vee_c$-$Tree_{At(n)}$ for $P_1$ 
will be isomorphic to the image of $G$ under ${\bar p_2}$.

The other direction is straightforward.
%the $\&\vee_c$-$Tree_{At(n)}$ for $P_2$.
$\Box$
\end{proof}

In general, the definition of the coinductive tree  permits generation of coinductive trees containing
infinitely many variables. So a coinductive tree for a goal
$A$ need not be a subtree of ${\bar p}(A)\in \&\vee_c$-$Tree_{At(n)}$
for any $n$. But every finite one must be, and establishment or
otherwise of finiteness of coinductive trees is critical
for us.

\begin{example}\label{ex:ct-GC}
  \texttt{GC}, i.e., Example~\ref{ex:lp}, has a clause
\begin{eqnarray*}
  \texttt{connected(x,y)} \gets
  \texttt{edge(x,z)},\texttt{connected(z,y)}
\end{eqnarray*}
in which there is a variable in the body but not the head. If one
includes a unary function symbol $s$ in \texttt{GC}, the clause
induces infinite coinductive trees, all subtrees of\\
 ${\bar   p}(\texttt{connected(x,y)})\in \&\vee_c$-$Tree_{At(2)}$, as there are
infinitely many possible substitutions for $z$. The clause also induces
infinitely many coinductive trees that do not lie in ${\bar
  p}(\texttt{connected(x,y)})\in \&\vee_c$-$Tree_{At(n)}$ for any
$n$.
\end{example}

Note that, in Section \ref{sec:OS}, we established two different kinds of soundness and completeness results: one related the coalgebraic semantics to 
finite SLD-refutations (cf Theorem \ref{th:sc}), another -- to potentially infinite non-deterministic derivations (cf Theorem \ref{th:sc2}).
The second theorem subsumed the first for special cases of logic programs.
As we explain in the next section, one of the main advantages of CoALP is graceful handling of corecursive programs and coinductive definitions.
This is why we consider derivations of arbitrary size  in our next statement of soundness and completeness for CoALP, as follows.

\begin{theorem}[Soundness and Completeness of coinductive
  derivations]\label{th:sc3}
  Let $P$ be a logic program, with $\bar{p}$ its induced
  $Lax(\ls^{op},P_cP_f)$-coalgebra, and let $G$ be an atomic goal.
\begin{enumerate}
\item Soundness. Given a coinductive tree $\mathcal{T}$ resulting from a coinductive
  derivation of $P\cup \{G\}$ with partial answer $\theta$, there is a coinductive subtree $C$ of $\mathcal{T}$,
  such
that $C$ is a derivation subtree of $\bar p(G\theta)$ for some $n$ for which $G\theta \in
At(n)$.
   % and there is
   %a derivation subtree $S$ of ${\bar p}(G\theta)$
  %for some $n$ with $G\theta \in At(n)$, such that $C$ and $S$ are equivalent up  to variable renaming.
	
\item	Completeness. Given a list $\theta_0,\theta_1,\ldots$ of
  substitutions, and a list $T_0,T_1,\ldots$ of finite derivation
  subtrees of ${\bar p}(G\theta_0)$, ${\bar p}(G\theta_0\theta_1)$,
  etcetera, with $T_n\theta_n$ a subtree of $T_{n+1}$ for each $n$,
  there is a coinductive derivation of $P\cup \{G\}$, involving computed substitutions $\theta_0, \theta_1, \ldots$ and
	coinductive trees $\mathcal{T}_0, \mathcal{T}_1, \ldots$ such that, for each $n$, $\mathcal{T}_n$ contains a 
	coinductive subtree $C_n$, such that $C_n$  contains $T_n$, modulo variable renaming. 
	
	%a coinductive tree $\mathcal{T}$, such that  $\mathcal{T}$ contains a coinductive subtree $C$ that generates
  %the $T_n$'s.
	
% The last version: \item Completeness.  Given a finite derivation subtree $T'$ of ${\bar p}(G)$, such that $G \in At(n)$,
%there is a coinductive derivation of $P\cup \{G\}$ that generates a coinductive tree $T$ containing the coinductive subtree $T'$.

% The old version %Given a list $\theta_0,\theta_1,\ldots$ of
%  substitutions, and a list $T_0,T_1,\ldots$ of finite derivation
%  subtrees of ${\bar p}(G\theta_0)$, ${\bar p}(G\theta_0\theta_1)$,
%  etcetera, with $T_n\theta_n$ a subtree of $T_{n+1}$ for each $n$,
%  there is a coinductive derivation of $P\cup \{G\}$ that generates a coinductive tree $T$ containing a coinductive subtree $T_n$.
\end{enumerate}
\end{theorem}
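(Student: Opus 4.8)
The plan is to reduce this statement, as far as possible, to the already-established Theorem~\ref{th:sc2} (soundness and completeness of non-deterministic derivations), exploiting the fact that a coinductive derivation is a particular disciplined way of organising the non-deterministic derivations of Definition~\ref{df:coind-der}: at each stage one works inside a coinductive tree and the transitions between trees are governed by term-matching mgu's applied to leaf atoms. The key conceptual bridge is the relationship, spelled out after Definition~\ref{df:derivsub2}, between a coinductive tree $\mathcal{T}$ and the $\&\vee_c$-tree $\bar p(G\theta)$: every \emph{finite} coinductive tree embeds as a subtree of $\bar p(G\theta)$ for an appropriate $n$ with $G\theta\in At(n)$, because both are built by the same term-matching construction of Theorem~\ref{constr:Gcoalg}, and a coinductive \emph{subtree} in the sense of Definition~\ref{df:derivsub2} is exactly a derivation subtree in the sense of Definition~\ref{df:derivsub} (the only difference being the ``at most one'' versus ``one'' child condition at and-nodes, which coincide once one restricts to the success/refutation-carrying subtree).

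For \textbf{soundness}, I would proceed by induction on the length of the coinductive derivation, mirroring the induction in the proof of Theorem~\ref{th:sc2}. At length $0$ the claim is immediate: the coinductive tree $\mathcal{T}_0$ for $G$ is, when finite, already a subtree of $\bar p(G)$, and any coinductive subtree $C$ selecting single and-children is a derivation subtree of $\bar p(G)$. For the inductive step, given that $\mathcal{T}_n$ carries a coinductive subtree $C$ that is a derivation subtree of $\bar p(G\theta)$, a coinductive derivation step applies an mgu $\theta_{n+1}$ (via term matching at some leaf atom $A'$ against a clause head) to obtain $\mathcal{T}_{n+1}$; applying $\theta_{n+1}$ to the whole of $C$ yields a derivation subtree of $\bar p(G\theta\theta_{n+1})$, and the leaf $A'\theta_{n+1}$ is extended by the matched clause body exactly as in the soundness argument of Theorem~\ref{th:sc2}. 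The content to check is that the term-matching condition $A'=B\theta$ of coinductive trees produces precisely the and-nodes demanded by the derivation-subtree definition, which it does by construction.

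For \textbf{completeness}, I would again follow the inductive scheme of Theorem~\ref{th:sc2}, but now read its conclusion through the coinductive-tree discipline. Given the ascending list of finite derivation subtrees $T_0,T_1,\ldots$ of $\bar p(G\theta_0),\bar p(G\theta_0\theta_1),\ldots$ with $T_n\theta_n$ a subtree of $T_{n+1}$, Theorem~\ref{th:sc2} already yields a non-deterministic derivation generating the $T_n$'s; the task is to organise this as a genuine coinductive derivation in the sense of Definition~\ref{df:coind-res2}, producing coinductive trees $\mathcal{T}_0,\mathcal{T}_1,\ldots$ and substitutions $\theta_0,\theta_1,\ldots$ so that each $\mathcal{T}_n$ contains a coinductive subtree $C_n$ containing $T_n$ up to variable renaming. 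One builds $\mathcal{T}_n$ as the coinductive tree determined by $G\theta_0\cdots\theta_n$, observes that $T_n$ (being a finite derivation subtree of $\bar p(G\theta_0\cdots\theta_n)$) is realised as a coinductive subtree $C_n$ of this tree, and checks that the term-matching steps recovered from the non-deterministic derivation are exactly the steps of Definition~\ref{df:coind-res2}.

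The \textbf{main obstacle} I anticipate is the potential infinitude of variables and substitutions flagged in the two numbered remarks preceding Theorem~\ref{th:sc2}, made concrete by \texttt{GC} in Example~\ref{ex:ct-GC}: a coinductive tree may involve infinitely many variables and so fail to sit inside $\bar p(A)\in\&\vee_c$-$Tree_{At(n)}$ for \emph{any} fixed $n$. This is why the theorem is phrased in terms of \emph{finite} derivation subtrees $T_n$ and coinductive \emph{subtrees} $C_n$ rather than whole coinductive trees, and the delicate step is to verify that every finite stage lands in some $At(n)$ with the indices $n$ allowed to grow along the chain, so that the subtree-embedding $T_n\theta_n\subseteq T_{n+1}$ is tracked coherently by the $C_n$'s modulo variable renaming. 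Once this finiteness-at-each-stage bookkeeping is handled, the remainder is a routine transcription of the Theorem~\ref{th:sc2} argument into the two-dimensional coinductive setting.
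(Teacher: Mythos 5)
Your proposal is correct in substance, and for completeness it coincides with the paper's own argument: both delegate the role of the non-deterministic derivations of Theorem~\ref{th:sc2} to coinductive subtrees $C_n$ of the trees $\mathcal{T}_n$, with the containment $T_n \subseteq C_n$ (rather than equality) absorbing the mismatch between derivation subtrees, which may truncate at and-nodes, and coinductive subtrees, which may not. Where you genuinely diverge is soundness: you run an induction on the length of the derivation, mirroring Theorem~\ref{th:sc2}, whereas the paper argues directly, exploiting the fact that a coinductive tree is uniquely determined by its root atom. The final tree $\mathcal{T}$ is just the coinductive tree of $G\theta$, so the derivation history is irrelevant: one picks a coinductive subtree $C$ of $\mathcal{T}$, takes $n$ to be the number of distinct variables occurring in $C$, and reads each term-matching step of $C$ as a substitution-instance step of ${\bar p}(G\theta)$ inside $\&\vee_c$-$Tree_{At(n)}$, so that $C$ itself is the required derivation subtree. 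Your induction is workable --- the embedding of $\mathcal{T}_k\theta_{k+1}$ into $\mathcal{T}_{k+1}$ that it needs does hold --- but it is extra machinery that the uniqueness of coinductive trees makes unnecessary, and it hides the fact that the quantifier ``for some $n$'' is discharged by the variable count of the chosen subtree rather than by bookkeeping along the derivation; it also led you to the spurious base-case caveat ``when finite'': what matters is finitely many variables, not finiteness of the tree, since Example~\ref{ex:ct-R} gives an infinite coinductive tree lying inside ${\bar p}(R(x))$. Finally, regarding what you call the main obstacle, you should know the paper does not overcome it either: its proof merely \emph{supposes} that the chosen subtree contains $n$ distinct variables, which for \texttt{GC}-type programs (Example~\ref{ex:ct-GC}) need not be achievable by any coinductive subtree; and in completeness it treats the non-guarded case (Example~\ref{ex:lp3}) by letting the whole chain $T_0, T_1, \ldots$ approximate a single infinite $C_n$ --- a situation your phrase ``is realised as a coinductive subtree'' glosses over, since there $T_n$ is a proper fragment of $C_n$, never equal to it.
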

\begin{proof}
Soundness. Consider a coinductive derivation of $P\cup \{G\}$ with partial answer $\theta = \theta_0, \theta_1, \ldots , \theta_k$:
it contains a sequence of coinductive trees $\mathcal{T}_0, \mathcal{T}_1, \ldots , (\mathcal{T}_k =\mathcal{T})$ for\\ $G\theta_0, G\theta_1, \ldots , G\theta_k$. %Assuming, as before, that mgus are used to compute substitutions in Definition \ref{df:coindt},
Each $\mathcal{T}_i$ is uniquely determined by $G\theta_i$, although $\mathcal{T}_i$  may have infinite branches (cf. Example \ref{ex:lp3}). 
%By assumption of the theorem statement, $T$ is finite. 
In general case, $\mathcal{T}$ may contain several coinductive subtrees, each giving an account to one possible combination of clauses determining or-nodes.
Consider one such coinductive subtree $C$, and suppose it contains $n$ distinct variables. Then, by construction of $\&\vee_c$-$Tree_{At(n)}$ and Definition \ref{df:derivsub}, there will be a derivation subtree
in $\bar p(G\theta)$ corresponding to $C$.

Completeness. %*********to do when we agree on the statement**************
The proof is similar to the proof of Theorem \ref{th:sc2}, but here, we also note that each step in a coinductive derivation is given by a coinductive tree, rather than by a resolvent.
The role of a non-deterministic SLD-derivation is now delegated to a coinductive subtree $C_n$ of the coinductive tree $\mathcal{T}_n$.
Note that coinductive trees may be finite for guarded clauses like \texttt{Stream} (cf. Example \ref{ex:stream2}), and hence a sequence of coinductive trees $\mathcal{T}_1, \mathcal{T}_2, \ldots$ will yield all $T_n$\rq{}s (cf. Theorem \ref{th:sc2}). However, non-guarded clauses (cf. Example \ref{ex:lp3}) give rise to infinite coinductive trees, in which case $T_n$ will be only a fragment of a coinductive subtree $C_n$ of the coinductive  tree $\mathcal{T}_n$. In that case,  an infinite sequence of $T_n$s would approximate one $C_n$, similarly to Theorem \ref{th:sc2}.  
$\Box$
\end{proof}

Discussion of the constructive componenent of the completeness results for CoALP and the constructive reformulation of the above completeness theorem can be found in \cite{KSH13}.
The problem of distinguishing cases with finite and infinite coinductive trees will be the main topic of the next section.

\subsection{Guarding Corecursion}\label{sec:guard}
%\subsection{Explicit corecursion in Logic Programming}
In this section, we consider various methods used in logic programming to \emph{guard} (co-) recursion,
and introduce our own method for guarding corecursion in CoALP.

As Example \ref{ex:stream1} illustrates, $SLD$-derivations may yield
looping infinite derivation chains for programs like stream. In \emph{Coinductive Logic Programming} (Co-LP)~\cite{GuptaBMSM07,SimonBMG07},
such were addressed by introducing a procedure allowing one to terminate derivations with the flag
``\texttt{stream(x)} is proven'',  whenever
such a loop was detected.  Extending this ``rule'' to
inductive computations leads to unsound results: in the inductive
case, infinite loops normally indicate lack of progress in a
derivation rather than ``success''. Thus, explicit annotation of
predicates was required. Consider the following example.

\begin{example}\label{ex:stream3}
The annotated logic program below comprises both inductive and coinductive clauses.
\vspace{-10pt}
\small{
\begin{eqnarray*}
\texttt{bit}^{i}(0) & \gets &\\
\texttt{bit}^i(1) & \gets & \\
\texttt{stream}^c(\texttt{scons} (x,y)) & \gets &  \texttt{bit}^i(x),\texttt{stream}^c(y) \\
\texttt{list}^i(\texttt{nil}) & \gets & \\
\texttt{list}^i(\texttt{cons} (x,y)) & \gets & \texttt{bit}^i(x),\texttt{list}^i(y) \\
\end{eqnarray*}
}
%\end{example}
%
Only infinite loops produced for corecursive goals (marked by $c$) are
gracefully terminated; others are treated as ``undecided''
proof branches.
\end{example}

%Compare this coinductive program with the inductive program LN from Example \ref{ex:listnat}.
In practice, these annotations act as locks and keys in resource logics,
allowing or disallowing infinite data structures.  There are
drawbacks:

\begin{itemize}
	%\item the need in explicit annotation
\item[$\bigstar$] some predicates may behave inductively or coinductively
depending on the arguments provided, and such cases need to be
resolved dynamically, not statically, in which case predicate
annotation fails.

\item[$\bigstar\bigstar$] the coinductive algorithm~\cite{GuptaBMSM07,SimonBMG07} is not in essence a
lazy infinite (corecursive) computation. Instead, it substitutes an
infinite proof by a finite derivation, on the basis of guarantees of
the data regularity in the corecursive loops.  But such guarantees
cannot always be given: consider computing the number $\pi$.
%; and neither the new syntax nor the algorithms are adapted to this.  
\end{itemize}

The coinductive derivations we introduced in the previous section give an alternative solution to the problem of guarding corecursion. 
We have already seen that Definition~\ref{df:coindt} determined finite coinductive trees both 
for the coinductive program \texttt{Stream} and inductive program \texttt{ListNat}; and no explicit annotation was needed to handle this. 
These two programs were \emph{well-founded}, however, not all programs will
give rise to finite coinductive trees. 
This leads us to the following definition of well-foundness of logic programs.

%In general case, Definition~\ref{df:coindt} may result in a set of coinductive trees even for a single goal;
%each such tree would correspond to a particular set of substitutions used in item 5.
  
\begin{definition}
A logic program $P$ is well-founded if, for any goal $G$, 
 $P\cup \{G\}$ generates the coinductive tree of finite size.
	\end{definition}

There are logic programs that allow infinite coinductive trees.
\begin{example}\label{ex:ct-R}
  Consider $R(x) \gets R(f(x))$. The coinductive tree arising from
  this program contains a chain of alternating $\bullet$'s and atoms
  $R(x)$, $R(f(x))$, $R(f(f(x)))$, etcetera, yielding an infinite coinductive
  tree. This tree is a subtree of ${\bar p}(R(x))\in
  \&\vee_c$-$Tree_{At(1)}$.
\end{example}	

%\begin{definition}\label{df:coindt-forest}
%If $P$ is a logic program and $A$ be an atomic goal,
%the \emph{coinductive forest} $F$ for $A$ is
%  the set of all coinductive trees for $A$. 
%\end{definition}

%\begin{definition}
%A logic program $P$ is well-founded if, for any goal $G$, the following holds:

%\begin{itemize}
%\item $P\cup \{G\}$ generates a coinductive forest $F$ of finite size;
%	\item $F$ contains only finite height coinductive trees.

%\end{itemize}  
%	\end{definition}
	
	In line with the existing practice of functional languages, we want the notion of well-foundness to be transformed into programming practices.
	For this, a set of syntactic \emph{guardedness} conditions needs to be introduced, compare e.g. with \cite{BK08,Coq94,Gimenez98}. 
	Coinductive trees we introduced in the previous section allow us to  
	formulate similar guardedness conditions. 
They correspond to the method of 
%One way to achieve well-foundedness of corecursion in practice is to
\textbf{guarding} (co)recursive function applications \textbf{by constructors} in \cite{Coq94,Gimenez98}. In our running examples, function
symbols \texttt{0}, \texttt{1}, \texttt{s}, \texttt{cons}, \texttt{scons}, \texttt{f} play the role of guarding constructors.
%\section{Concurrent Logic Programming and and-or parallel derivation trees.}

%Instead of introducing explicit
%annotations into the programming language, we introduce a new
%derivation algorithm, using lazy rewriting of
%coinductive trees. 

%\subsection{Coalgebraic Derivations: Implicit Corecursion}

%\item 
%There are choices concerning how to treat infinite such
%trees. One may view them as ``bad" cases, in
%which (co)recursion is not guarded by constructors. When such arise,
%one could halt the derivation and amend the program before
%proceeding. Alternatively, one could prune infinite branches and
%continue to look for derivations in other or-branches without changing
%the program.

\textbf{Guardedness check 1 (Presence of Constructors):} 
\emph{If a clause has the form $P(\bar{t}) \gets [atoms], P(\bar{t'}), [atoms]$, where $P$ is a predicate, $\bar{t}$, $\bar{t'}$ are  lists of terms,
and $[atoms]$ are finite (possibly empty) lists of first-order atoms, %If a clause head has the form $P(\bar{t})$, where $P$ is a predicate, %and $\bar{t}$ is a list of terms $t_1, \ldots , t_n$, then, if $P$ appears in the body of that clause,
then at least one term $t_i \in \bar{t}$  must contain  a function symbol $f$.}

%\emph{if a predicate appears in the clause head, at least one of its arguments should contain a function symbol. } (Recall that constants are nullary function symbols.) 

%This
%amounts to requiring that, for the predicate appearing in the clause
%head, at least one of its arguments is the term formed by means of the
%function symbol.
For example, \texttt{Stream} is guarded. But Check-1 is not sufficient to guarantee well-foundness of coinductive trees. Consider the following examples. 
%but the following program would not be:

% Guardedness is a common way to ensure sound
%implicit handling of corecursive resources in functional languages, but it requires care.

\begin{example}\label{ex:ct-R1}
  Consider the variant of Example~\ref{ex:ct-R} given by $R(f(x))
  \gets R(f(f(x)))$. It generates an infinite
  coinductive tree, given by a chain of alternating $\bullet$'s and
  atoms $R(f(x))$, $R(f(f(x)))$, etcetera.
\end{example}

\begin{example}[\texttt{Stream2}]
Another non-well-founded program that satisfies Guardedness check 1 is given below:\\
\footnotesize{
\texttt{stream2(scons(x,y))} $\gets$ \texttt{bit(x), stream2(scons(x,y))}}
\end{example} 

To address such problems, a second guarding condition is needed. % to be introduced.

\textbf{Guardedness check 2 (Constructor Reduction):}
\emph{If a clause has the form $P(\bar{t}) \gets  [body]$, where $P$ is an $n$-ary predicate, $\bar{t}$ is a list of terms $t_1, \ldots , t_n$, and $[body]$ is a finite non-empty list of first-order atoms,
then, for each occurence of $P(\bar{t'})$ (with some $\bar{t'} = t_1', \ldots , t_n'$) in $[body]$, the following conditions must be satisfied.
% $t_1, \ldots , t_n$, 
There should exist a  term $t_i \in \bar{t}$ such that, there is a function $f$ that occurs in $t_i$ $m$ times ($m\geq 1$) and occurs in $t'_i$ \ $k$ times  with $k<m$.
Moreover, if $f \in t_i$ has arguments containing variables $\bar{x_i}$, then $f \in t_i\rq{}$ must have arguments containing variables $\bar{x_i\rq{}}$, with $\bar{x_i\rq{}} \subseteq \bar{x_i}$; 
if $f$ occurs in $t_i$ but not in $t_i'$, then all  variables $\bar{x_i\rq{}} \in t_i'$ must satisfy $\bar{x_i\rq{}} \subseteq \bar{x_i}$.
}

   \begin{example}
   Suppose we want to define a program that computes and infinite stream
of natural numbers:
$0,1,2,3,4,5,...$

The corresponding logic program will be given
by\\ 
%\begin{eqnarray*}
{\footnotesize{\texttt{nats(scons(x,scons(y,z)))} $\gets$ \texttt{nat(x), nat(s(x)), nats(scons(y,z))}}}
%\end{eqnarray*}

It is a well-founded and guarded program, so will result in potentially infinite coinductive derivations featuring
 coinductive trees of finite size. 
This program will satisfy Guardedness conditions 1 and 2: the function symbol (constructor) \texttt{scons} reduces in the body.
   \end{example}  
% if a function symbol $f$ appears in the head of a clause composed with itself  $n$ times  ($n\geq$ 1), 
%it must appear in the body of the clause nested no more than $n-1$ times.

\begin{example}
In Example \ref{ex:ct-R1}, function symbol $f$ appears twice in the body, while appearing only once in the head; this breaks the guardedness condition 2.
\end{example}

Note that Guardedness check 2 imposes strict discipline on argument positions at which constructors reduce, and on variables appearing as arguments to the constructors. The next example explains why these restrictions matter.

\begin{example}
Consider the following clause:\\
\texttt{Q(s(x),y)} $\gets$ \texttt{Q(y,y)}.

The constructor \texttt{s} clearly reduces, and the clause could pass the guardedness check if it was checking only the constructor reduction. However, the goal $Q(s(x),s(x))$ would result in an infinite coinductive tree.
The problem here is the new variable \texttt{y} in the body, in the same argument position as \texttt{s(x)}: it allows to form the goals like $Q(s(x),s(x))$ falling into infinite loops.
To avoid such cases, Guardedness check 2 imposes the restriction on the argument positions and variables. Therefore, the programmer would be forced to change the clause to\\
\texttt{Q(s(x),y)} $\gets$ \texttt{Q(x,y)}\\
to pass the guardedness checks.

\end{example}

Finally, the (co-)recursive nature of the predicates may show only via several clauses in the program. Consider the following example.

\begin{example}
Consider programs
\texttt{P1} and \texttt{P2} below. For both programs, Guardedness conditions 1 and 2 are satisfied for every single clause, but the programs 
give rise to infinite coinductive trees. 

% test(cons(X,Y)):-test2(cons(cons(X,Y)).
% test2(cons(cons(X,Y)):-test(Y).

%> is fine.
%because a) and b) -- via two steps.

\texttt{P1}:\\
\texttt{ Q(cons(x,y))} $\gets$ \texttt{Q2(cons(z,cons(x,y)))}.\\
\texttt{ Q2(cons(z,cons(x,y))} $\gets$ \texttt{Q(cons(x,y))}.

\texttt{P2}:\\
 \texttt{Q(cons(x,y))} $\gets$ \texttt{Q2(cons(z,cons(x,y)))}.\\
 \texttt{Q2(y)} $\gets$ \texttt{Q(y)}.

% is not fine by any standard.
\end{example}

To address the problem above, a further guardedness check needs to be introduced.

\begin{definition}
Given  a logic program $P$, a goal $G$, and the coinductive tree for $P\cup \{G\}$, we say \emph{$T$ contains a loop} if there exists a coinductive subtree $C$ of $T$, such that:\\
there exists a predicate $Q \in P$ such that  $Q(\bar{t})$ appears as an and-node in $C$, and also  $Q(\bar{t'})$ appears as a child and-node of that node in $C$, for some  $\bar{t}$ and $\bar{t'}$.

In this case, we say that atom $Q(\bar{t})$ is a head loop factor, and $Q(\bar{t'})$ is a tail loop factor.
\end{definition}

\textbf{Guardedness check 3 (Detection of Non-guarded Loops):} \emph{If a program $P$ satisfies guardedness conditions 1 and 2, do the following.
For every clause $C \in P$, such that $C$ has the shape $A \gets B_1, \ldots B_n$, 
construct the coinductive tree $T$ for $A$, imposing the following termination conditions during the tree construction:}

\begin{itemize}
	\item[\emph{i.}] If $T$ contains a loop with the head and tail factors $Q(\bar{t})$ and $Q(\bar{t'})$, apply Guardedness 
	conditions \textbf{1} and \textbf{2} to  $Q(\bar{t}) \gets Q(\bar{t'})$.
	%\item[\emph{ii.}] 
	If the Guardedness conditions \textbf{1} and \textbf{2} are violated for $Q(\bar{t}) \gets Q(\bar{t'})$, terminate the coinductive tree construction for $A$; report non-guardedness.
	\item[\emph{ii.}] If construction of $T$ reaches the leaf nodes and none of the guardedness conditions \emph{(i.)} and \emph{(ii.)} is violated, the program $P$ is guarded.
\end{itemize}

\begin{proposition}
Guardedness check 3 terminates, for any logic program.
\end{proposition}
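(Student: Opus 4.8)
The plan is to show that, for each clause head, the \emph{truncated} coinductive tree built by the procedure is a finite tree, that only finitely many such trees are built, and that each node is processed in finite time; termination then follows immediately. The two ingredients are a finite-branching bound and a bound on the depth of every branch coming from the loop-termination rule.

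First I would record two finiteness facts about the coinductive trees of Definition~\ref{df:coindt}. Since a logic program is, by Definition~\ref{df:language}, a \emph{finite} set of clauses, any and-node $A'$ can term-match only finitely many clause heads, so it has at most $|P|$ or-node children; and each or-node has exactly as many and-node children as the body length of its clause, which is again finite. Hence every coinductive tree is finitely branching, independently of any guardedness assumption. This is the decisive difference from the semantic $\&\vee_c$-trees, which may branch countably because they range over all substitution instances: coinductive trees match only against the finitely many program clauses. I would also note that the set of predicate symbols occurring in $P$ is finite; call its cardinality $k$.

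Next I would bound the depth of each truncated tree using the loop-termination rule (i). The key observation is that a \emph{loop} is detected at the level of \emph{predicate symbols}: it suffices that some predicate $Q$ label an and-node and again label a descendant and-node, with no constraint on the argument terms. Reading off the and-nodes along any branch from the root therefore yields a sequence of predicate symbols drawn from a set of size $k$, so by the pigeonhole principle a repeated predicate, i.e.\ a head/tail loop factor pair $Q(\bar t),Q(\bar t')$, must occur within the first $k+1$ and-nodes. At the first such occurrence rule (i) fires and truncates that branch, either reporting non-guardedness or recording the loop as a leaf of the truncated tree. Consequently every branch of the truncated tree has at most $k+1$ and-nodes, so its height is bounded by roughly $2(k+1)$. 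Combining the two facts, a finitely branching tree of bounded height has only finitely many nodes, so the construction halts on each head $A$ after visiting finitely many nodes, each requiring only a finite computation: matching $A'$ against the finitely many clauses of $P$ and, when a loop is found, performing the purely syntactic Guardedness checks 1 and 2 on the single implication $Q(\bar t)\gets Q(\bar t')$. As Guardedness check 3 runs this construction once for each of the finitely many clauses of $P$, the whole procedure terminates.

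The step I expect to require most care is the second one: making precise that loop detection operates at the granularity of predicate symbols rather than full atoms. This is exactly what forces termination. Programs such as \texttt{GC} (Example~\ref{ex:ct-GC}) generate infinitely many pairwise-distinct atoms $\texttt{connected}(z,y),\texttt{connected}(z',y),\dots$ along a single branch, so a loop test comparing whole atoms would never fire and the construction would not terminate; comparing only the predicate symbol $\texttt{connected}$ collapses these to one repeated symbol and triggers truncation within $k+1$ and-nodes. I would therefore argue carefully that ``contains a loop'' is read in this predicate-level sense and that truncation happens at the first repetition irrespective of whether Guardedness checks 1 and 2 succeed or fail, so that no branch can escape the depth bound.
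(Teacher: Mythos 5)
Your two preliminary observations are correct and are indeed implicitly used by the paper: coinductive trees are finitely branching because term-matching is performed against the finitely many clauses of $P$, and Guardedness check 3 builds only one tree per clause. The gap is in your depth bound, and it stems from a misreading of termination rule \emph{(i)}. That rule does \emph{not} truncate a branch whenever a loop (a repeated predicate) is detected: it applies Guardedness checks 1 and 2 to the implication $Q(\bar t)\gets Q(\bar t')$ formed by the loop factors, and terminates the construction \emph{only if those checks fail}. If the loop is guarded, construction continues; this is confirmed by item \emph{(ii)}, which speaks of the construction reaching the leaves, by the example \texttt{P3}, whose trees contain guarded loops and are built in full, and by the paper's own proof, which argues via constructor reduction rather than via a depth bound. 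Consequently your pigeonhole bound of at most $k+1$ and-nodes per branch ($k$ the number of predicate symbols) is false for the procedure as defined. Concretely, take $P$ with the two clauses $p(s(s(x)))\gets p(s(x))$ and $p(s(x))\gets p(x)$. Both pass checks 1 and 2, and there is $k=1$ predicate symbol, yet the tree constructed for the head $p(s(s(x)))$ has branches with three and-nodes $p(s(s(x)))$, $p(s(x))$, $p(x)$: the two loops encountered are guarded, so neither truncates the branch. Padding the head with further occurrences of $s$ yields guarded programs with arbitrarily long branches, so no bound in terms of the number of predicates alone can exist.

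Because of this, the essential idea of the paper's proof is missing from your argument: one must explain why the construction cannot pass through guarded loops forever. The paper's argument is: if a tree construction failed to terminate, the tree would be infinite; being finitely branching, it would have an infinite branch, along which (by exactly your pigeonhole step on the finitely many predicates) loops must occur; if \emph{all} such loops were guarded, then by check 2 each loop strictly decreases the number of occurrences of some constructor, which cannot happen infinitely often; hence some loop is non-guarded, and rule \emph{(i)} then halts the construction. Termination thus holds not because loops truncate branches, but because guardedness of loops is a well-founded, constructor-reducing condition. Your predicate-level reading of ``loop'' is correct and is needed for the pigeonhole step; but the truncation-at-first-repetition reading describes a different, stronger procedure (one that, moreover, could stop a branch before reaching a deeper non-guarded loop), and proving termination for that procedure does not prove the proposition as stated.
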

\begin{proof}
Note that a given program $P$ has a finite and fixed number of clauses. If there are $n$ clauses in the given program, only $n$ coinductive trees will be constructed. It remains to show that each tree construction will be terminated in finite time. Given that $P$ contains a finite number of predicates, an infinite coinductive tree $T$ for $P$ would need to contain a loop.
If all loops occuring in $T$ are guarded, they could not have constructor reduction infinite number of times, so there  should be at least one non-guarded loop. 
But then the tree construction will be terminated, by item $i$. $\Box$ 
\end{proof}

Note that, although the procedure above requires some computations to be performed, the guardedness checks can be done statically, prior to the program run. 
 
\begin{example}
Consider the program \texttt{P3}:\\
\texttt{Q(cons(x,y)) }$\gets$ \texttt{Q2(cons(z,cons(x,y))}. \\
\texttt{Q2(cons(cons(x,y))}  $\gets$ \texttt{Q(y)}.

It satisfies guardedness checks \textbf{1}, \textbf{2} and \textbf{3}. In particular, coinductive trees for both of its clauses are finite, and show constructor reduction.
\end{example}

The Guardedness checks 1-3 are necessary, but not sufficient conditions for guaranteeing well-foundness of all logic programs. This is why, we include some further checks, involving applying checks 1-3 to program heads modulo some chosen substitutions. We will not go into further details here, but will illustrate the issue by the following example.

  \begin{example}
  Consider the logic program \texttt{P4}:\\
  \texttt{Q(s(x),y)} $\gets$ \texttt{P(x,y)}\\
  \texttt{P(t(x),y)} $\gets$ \texttt{Q(y,y)}
  
  Each clause passes the Guardedness checks 1-2 trivially, as they do not have immediate loops. When we construct coinductive trees for each of the clause heads, they do not exhibit the loops, either, due to the restrictive nature of the term matching. However, for the goal \texttt{Q(s(t(x)),s(t(x)))}, the program will give rise to an infinite coinductive tree. 
  \end{example}

Guardedness conditions of CoALP guarantee that, if a program $P$ passed the guardedness checks, then any goal will give rise to only finite coinductive trees. 
Very often, in functional programming, the guardedness conditions reject some well-founded programs \cite{BK08,Coq94,Gimenez98}. 
Termination of recursive programs is in general undecidable, and syntactic guardedness conditions are used only to approximate the notion of termination. 

Here, as well as in functional programming, there will be examples of well-founded but non-guarded programs:

\begin{example}
The Program \texttt{P5} is well-founded but not guarded:\\
\texttt{Q(s(x)),y)} $\gets$ \texttt{Q(y,x)} \\
\texttt{Q(x,s(y))} $\gets$ \texttt{Q(y,x)}
\end{example}
%In this section, we defined well-foundness relative to the 
%coinductive trees, and hence our semantic notion of well-foundness is captured exactly by the syntactic guardedness checks.

Furthermore, the guardedness checks are too restrictive to capture the notion of termination in sequential logic programs as given by e.g. SLD-resolution.  
% The opposite statement fails, as there may be well-founded programs rejected by the guardedness conditions.  This is not uncommon, cf. \cite{BK08}: 

\begin{example} The following program is non-well-founded and not guarded in CoALP setting, but terminates if SLD-resolution is used:\\
\texttt{Q(x)} $\gets$ \texttt{Q(a)}. \\
\texttt{Q(a)}  $\gets$ .

As we discuss in the next section, the program $GC$ gives a similar effect. 
\end{example}
 
%\begin{proposition}
%Given a logic program $P$ and a goal $G$, a coinductive tree determined by $P$ and $G$ has finite height if and only if $P$ is guarded. 
%\end{proposition}

%\item 

%In all the above examples, there is only one coinductive tree for
%every goal, but that is not true in general.

%\begin{example}\label{ex:ct-GC}
%  \texttt{GC}, i.e., Example~\ref{ex:lp}, has a clause
%\begin{eqnarray*}
%  \texttt{connected(x,y)} \gets
%  \texttt{edge(x,z)},\texttt{connected(z,y)}
%\end{eqnarray*}
%in which there is a variable in the body but not the head. If one
%includes a unary function symbol $s$ in \texttt{GC}, the clause
%induces infinitely many coinductive trees, all subtrees of ${\bar
%  p}(\texttt{connected(x,y)})\in \&\vee_c$-$Tree_{At(2)}$, as there are
%infinitely many possible substitutions for $z$. The clause also induces
%infinitely many coinductive trees that do not lie in ${\bar
%  p}(\texttt{connected(x,y)})\in \&\vee_c$-$Tree_{At(n)}$ for any
%$n$.
%\end{example}

Our approach allows us to guard (co-)recursion implicitly, without annotating the predicates as inductive or coinductive, as it was the case in \cite{GuptaBMSM07,SimonBMG07}. 
The advantages of this implicit method of handling (co-)recursive
computational resources can be summarised as follows.  It solves both
difficulties that explicit coinductive resource management causes:
%\begin{enumerate}
 in response to $\bigstar$, the method uniformly treats inductive and coinductive definitions, and it can be used to detect non-well-founded cases in both;
%; see also Appendix A.
 and in response to $\bigstar\bigstar$, it is a corecursive process in
 spirit. Thus, instead of relying on guarantees of loop regularity, it
 relies on well-foundness of every coinductive tree in the
 process of lazy infinite derivations.
%\end{enumerate}

\subsection{Programming with Guarded Corecursion}\label{ap:A}
We proceed with a case study of how guardedness conditions can be used in 
logic programming practice.

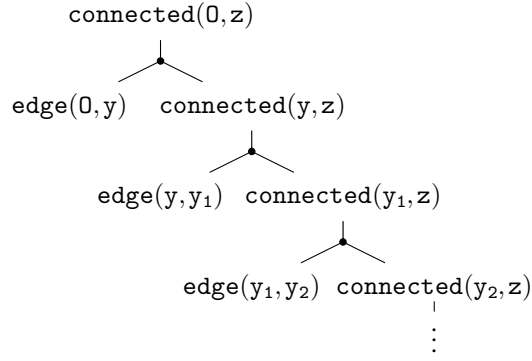
\begin{figure}
\begin{center}
  \begin{tikzpicture}[scale=0.3,baseline=(current bounding box.north),grow=down,level distance=20mm,sibling distance=80mm,font=\footnotesize]
  \node {$\mathtt{connected(O,z)}$}
   child {[fill] circle (4pt)
     child { node {$\mathtt{edge(O,y)}$}}
       child { node {$\mathtt{connected(y,z)}$}
        child {[fill] circle (4pt)
       child { node{$\mathtt{edge(y,y_1)}$}}
         child { node{$\mathtt{connected(y_1,z)}$}
				        child {[fill] circle (4pt)
       child { node{$\mathtt{edge(y_1,y_2)}$}}
         child { node{$\mathtt{connected(y_2,z)}$}
				child {node{$\vdots$}}}}}}}}; 
  \end{tikzpicture}
 \end{center}
\caption{\footnotesize{The infinite coinductive tree for the programs GC from Example~\ref{ex:lp},  GC' from Example~\ref{ex:lp4} and GC* from Example~\ref{ex:lp-cyclic}.
% coinductive forest of infinite height and breadth for
%  \texttt{GC} with a unary function symbol $s$ and goal
%  \texttt{connected(O,Z)}; we use \texttt{conn} to abbreviate
 % \texttt{connected}.
} }
\label{pic:GC*} 
\end{figure}

First, we consider the effects of coalgebraic logic programming on corecursive
resource handling by comparing \texttt{GC} (cf. Example~\ref{ex:lp}) with
\texttt{Stream}. \texttt{GC} uses recursion to
traverse all the connected nodes in a graph.  Two kinds of infinite
$SLD$-derivations are possible: computing finite or infinite objects.

\begin{example}[GC*]\label{ex:lp-cyclic}
  Consider the program GC*. Adding the following  clause to \texttt{GC} makes the graph cyclic:\\
  $\texttt{edge(s(s(0)),0)} \ \gets$.  
	
	Taking a query $\gets
  \texttt{connected(0,y)}$ as a goal may lead to an infinite
  $SLD$-derivation corresponding to an infinite path starting from
  $\texttt{0}$ in the cycle. It would also give rise to infinite coinductive trees, see Figure~\ref{pic:GC*}.  However, the object that is described by
  this program, the cyclic graph with three nodes, is finite.
\end{example}

In the standard practice of logic programming, where the ordering of
the clauses is as in \texttt{GC}, the program behaves gracefully,
giving finitely computed answers, but potentially infinitely many
times. But this balance is fragile. For example, the following
program, with different ordering of the clauses and of the atoms in
the body, results in non-terminating derivations:

\begin{example}[GC']\label{ex:lp4}
Let \texttt{GC'} denote the logic program
\begin{eqnarray*}
1. \  \texttt{connected(x,y)} & \gets & \texttt{connected(z,y)}, \texttt{edge(x,z)}\\
2. \ \texttt{connected(x,x)} & \gets &
\end{eqnarray*}
together with  the database of Example~\ref{ex:lp},
$SLD$-derivation loops as follows: \\
$\texttt{connected(0,$y$)} \rightarrow (\texttt{connected($z$,$y$)}, \texttt{edge(0,$z$)})\rightarrow \\
(\texttt{connected($z_1$,$y$)},\texttt{edge($z$,$z_1$)}, \texttt{edge(0,$z$)})$ $\rightarrow \ldots $\\
It never produces an answer as it falls into an infinite loop
irrespective of the particular graph in question.
% and that is even if there is no loop in the path (created e.g. by $\texttt{edge(c,a)} \ \gets \ $).  This program will eventually have to be re-written.

There is a one-step non-deterministic derivation for \texttt{connected(0,$y$)} 
given by unifying $y$ with $0$ (see Definition \ref{df:coind-der}.)
But there is no coinductive derivation that does that: see Figure \ref{pic:GC*}.
% applying Def 47,
%one first constructs the coinductive tree for connected(x,y). It is
%an infinite tree, all of whose leaves have labels of the form edge(x',x") for
%some variables x' and x". There is no leaf with a label of the
%form connected(x',x"). So, by the first "item" in Def 47, one cannot
%unify and apply the third clause.
%The fact of the above being a non-deterministic derivation but not being
%a coinductive derivation is consistent with "at most one of
%its children" in Definition \ref{df:derivsub} of derivation subtree  but
%exactly "one" in Definition \ref{df:derivsub2} of coinductive subtree.

Spelling out nondeterministic semantics (Theorem \ref{th:sc2}),\\
$T_1 = connected(0,y)$;\\
$T_0 = connected(0,0) \rightarrow \Box$.

%$T_0$ is a derivation subtree of $\bar{p}(connected(x,y))$ because of "at most
%one"  in Definition \ref{df:derivsub}.

\end{example}

\begin{figure}
\begin{center}
\footnotesize{
  \begin{tikzpicture}[scale=0.3,baseline=(current bounding box.north),grow=down,level distance=20mm,sibling distance=80mm,font=\footnotesize]
  \node {$\mathtt{conn(O,cons(y,z))}$}
   child {[fill] circle (4pt)
     child { node {$\mathtt{edge(O,y)}$}}
		child { node {$\mathtt{conn(y,z)}$}
    }}; 
  \end{tikzpicture}
	$\rightarrow$
  \begin{tikzpicture}[scale=0.3,baseline=(current bounding box.north),grow=down,level distance=20mm,sibling distance=90mm,font=\footnotesize]
  \node {$\mathtt{conn(O,cons(sO,z))}$}
   child {[fill] circle (4pt)
     child { node {$\mathtt{edge(0,s0)}$}
		child {[fill] circle (4pt)
		 child { node {$\Box$} }}}
       child { node {$\mathtt{conn(s0,z)}$}
               }}; 
  \end{tikzpicture}
	$\rightarrow$
	 \begin{tikzpicture}[scale=0.3,baseline=(current bounding box.north),grow=down,level distance=20mm,sibling distance=95mm,font=\footnotesize]
  \node {$\mathtt{conn(0,cons(s0,nil))}$}
   child {[fill] circle (4pt)
     child { node {$\mathtt{edge(0,s0)}$}
		child {[fill] circle (4pt)
		 child { node {$\Box$} }}}
       child { node {$\mathtt{conn(s0,nil)}$}
			child {[fill] circle (4pt)
		 child { node {$\Box$} }}
               }}; 
  \end{tikzpicture}}  
\end{center}
\caption{\footnotesize{A finite and well-founded coinductive derivation for a guarded variant of \texttt{$GC^g$}; we use \texttt{conn} to abbreviate \texttt{connected};
  \texttt{s0} abbreviates \texttt{s(0)}. }}
\label{pic:inftree2} 
\end{figure}
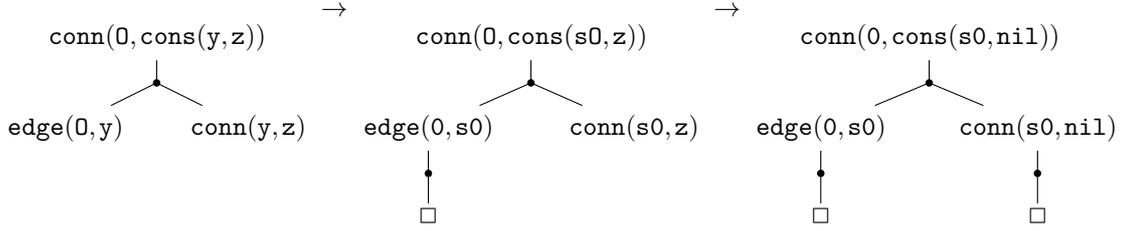

In traditional logic programming, the burden of deciding which
programs might result in loops like the one above falls completely to
the programmer: semantically, \texttt{GC} and \texttt{GC'} are
equivalent. Moreover, in the Co-LP \cite{GuptaBMSM07,SimonBMG07} setting, if the atoms in the programs above are labelled
as inductive, the behaviour of Co-LP is
exactly as it is for $SLD$-resolution. If, on the contrary, the atoms
are marked as coinductive, we may find the derivation loop terminated
as ``successful'' when we should be warned of its being
non-well-founded.

In contrast, compare the coalgebraic semantics of \texttt{GC}, \texttt{GC'}, \texttt{GC*} and
\texttt{Stream}. Figures~\ref{pic:GC*} and~\ref{pic:stream} show the difference between the coinductive trees for 
%Two semantic properties that distinguish 
ill-founded \texttt{GC}, \texttt{GC'} and \texttt{GC*} and  well-founded programs like \texttt{Stream}.
% are
%illustrated by .  
Notably, coinductive definition of \texttt{Stream} is well-founded, while traditional inductive definition of \texttt{GC*} is not.
\texttt{GC}, \texttt{GC'} and \texttt{GC*} give rise to infinite
coinductive trees,
%It also gives rise to infinite height coinductive
%derivation trees and hence forests. 
%But, as mentioned in
%Example~\ref{ex:stream1}, for any goal, 
whereas \texttt{Stream} gives rise
only to finite coinductive trees. 

In CoALP, a set of syntactic guardedness checks 1-3 is embedded, to make sure that 
only programs that satisfy the semantic notion of well-foundness are allowed in CoALP.
Programs like GC, GC' and GC*  will be automatically rejected by CoALP's guardedness checks, see Section \ref{sec:impl}. 
To make the programs like \texttt{GC} guarded, 
 The user will have to reformulate it as
follows:

\begin{example}[$GC^g$]
The program $\mathtt{GC^g}$ below addresses both non-terminating problem for SLD-derivations for \texttt{GC'}, and 
non-well-foundness of \texttt{GC} and \texttt{GC*}.

\begin{eqnarray*}
\texttt{connected(x,cons(y,z))} & \gets &  \texttt{edge(x,y),connected(y,z)}\\
\texttt{connected(x,nil)} & \gets & \\
\texttt{edge(0,0)} & \gets & \\
\texttt{edge(x,s(x))} & \gets & \\
\end{eqnarray*}
The coinductive derivation for it is shown in Figure
\ref{pic:inftree2}, duly featuring coinductive trees  of finite size.
 
\end{example}

\section{Guarding Parallelism by Guarded Corecursion}\label{sec:parallel}

%We now exploit the parallel nature of our coalgebraic
%semantics. Operationally, the major difference between coinductive
%derivation trees and $SLD$-trees lies in parallel versus
%sequential modes of execution, which is crucial for the computation
%of call patterns, (correct) partial answers and soundness of
%computations.

%\subsection{Parallel Logic Programming}

One of the distinguishing features of logic programming languages is
that they allow implicit parallel execution of programs.  In the last
two decades, an astonishing variety of parallel logic programming
implementations have been proposed, see \cite{GPACH12} for a detailed
survey.  The three main types of parallelism used in implementations
of logic programs are \emph{and-parallelism},
\emph{or-parallelism} and their
combination; see also Section \ref{sec:trees2}.
The coalgebraic models we discuss in this paper exhibit a synthetic
form of parallelism: and-or parallelism.  The most common way to
express and-or parallelism in logic programs is
and-or parallel derivation trees~\cite{GuptaC94,GPACH12}, see Definition~\ref{df:andortree}.

In the ground case, coinductive trees and and-or parallel
derivation trees agree, as illustrated by Example \ref{ex:lptree}. But
as we have discussed several times, that does not extend.  In the
general case, in the absence of synchronisation, parallel
and-or-trees may lead to unsound results.

\begin{example}\label{ex:unsound}
  Figure \ref{pic:and-or} depicts an and-or parallel derivation tree
  that finds a refutation $\theta = \{x/0, y/0, x/nil\}$ for the goal
  \texttt{list(cons(x,cons(y,x)))}, although this answer is not sound.
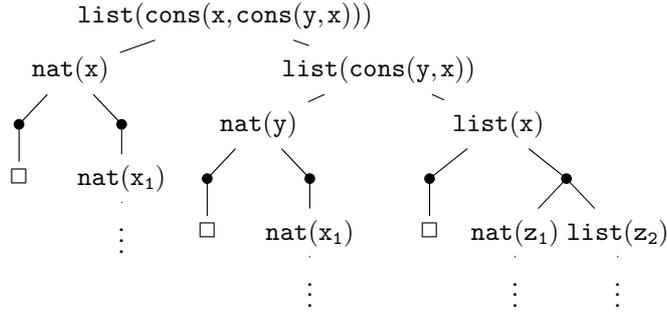
\begin{figure}
\begin{center}
  \begin{tikzpicture}[scale=0.9,baseline=(current bounding box.north),grow=down,level distance=8mm,sibling distance=33mm,font=\footnotesize]
  \node [sibling distance=35mm]{$\mathtt{list(cons(x,cons(y,x)))}$}
    child  [sibling distance=45mm]{ node {$\mathtt{nat(x)}$}
       child  [sibling distance=15mm]{[fill] circle (2pt)
		    child [sibling distance=15mm]{node {$\Box$}}}
				child  [sibling distance=15mm] {[fill] circle (2pt)
				child [sibling distance=15mm]{ node {$\mathtt{nat(x_1)}$}
				    child {node{$\vdots$}}}}}
    child  [sibling distance=45mm] { node {$\mathtt{list(cons(y,x))}$}
       child [sibling distance=35mm]{ node{$\mathtt{nat(y)}$}
			  child [sibling distance=15mm] {[fill] circle (2pt)
		       child [sibling distance=15mm] {node {$\Box$}}}
					child [sibling distance=15mm]{[fill] circle (2pt)
				   child [sibling distance=15mm] { node {$\mathtt{nat(x_1)}$}
				        child [sibling distance=15mm] {node{$\vdots$}}}}}
       child [sibling distance=35mm]{ node{$\mathtt{list(x)}$}
				child [sibling distance=20mm] {[fill] circle (2pt)
		      child [sibling distance=15mm] {node {$\Box$}}}
					child [sibling distance=20mm] {[fill] circle (2pt)
				  child [sibling distance=15mm] { node {$\mathtt{nat(z_1)}$}
				     child [sibling distance=15mm] {node{$\vdots$}}}
				  child [sibling distance=15mm] { node {$\mathtt{list(z_2)}$}
				     child [sibling distance=15mm] {node{$\vdots$}}}}}
		       }; 
  \end{tikzpicture}
\end{center}
\caption{\footnotesize{An unsound refutation by an and-or parallel derivation tree, with $\theta = \{x/0, y/0, x/nil\}$ .}}\label{pic:and-or}
\end{figure}	
\end{example}

A solution proposed in \cite{GuptaC94} was given by \emph{composition
  (and-or parallel derivation) trees}. Construction of composition
trees involves additional algorithms that synchronise substitutions in
the branches of and-or parallel derivation trees.  Composition trees
contain a special kind of composition nodes, used whenever both and-
and or-parallel computations are possible for one goal. A composition
node is a list of atoms in the goal.  If, in a goal $G \ = \ \gets
B_1, \ldots B_n$, an atom $B_i$ is unifiable with $k>1$ clauses, then
the algorithm adds $k$ children (composition nodes) to the node $G$;
similarly for every atom in $G$ that is unifiable with more than one
clause.  Every such composition node has the form $ B_1, \ldots B_n$
and has $n$ and-parallel edges emanating from it. Thus, all possible
combinations of or-choices at every and-parallel step are
given.

Predominantly, the existing parallel implementations of logic programming
 follow Kowalski's principle \cite{Kow79}:
$$\emph{Programs = Logic + Control}.$$

This principle separates the control component (backtracking, occur 
check, goal ordering/selection, parallelisation, variable synchronisation ) 
from the logical specification of a problem (first-order Horn logic, 
$SLD$-resolution, unification). 
Thus the control of program execution becomes independent of 
programming semantics.

With many parallel solutions on offer~\cite{GPACH12}, some form of resource handling
and process scheduling are inevitable ingredients of parallel logic
programming as the algorithms of
unification and $SLD$-resolution are
P-complete \cite{UllmanG88,Kanellakis88} and cannot be
parallelised in general, see Example \ref{ex:unsound}.  Parallel
implementations of PROLOG typically hide all additional control-handling
algorithms at the level of implementation, away from program
specification or semantics \cite{GPACH12}. 
The algorithms used for variable synchronisation pose a sequential barrier for parallelisation.

%We would like to highlight 
Several properties 
%of this example
%implementation that 
are shared by many  parallel
implementations of PROLOG:

\begin{itemize}
\item[$\star$] Although and-or-parallelism is called ``implicit parallelism''
in the literature \cite{GPACH12}, it boils down to explicit resource
handling at compiler level: this includes both annotating the syntax
and maintaining special schedulers/arrays/hash tables to synchronise
variable substitutions computed by different processes; these are
separated from the language and semantics.
	
\item[$\star\star$] Issues of logic and control are separated to the point
that parallel PROLOG systems are usually built as speed-ups to 
$SLD$-resolution and have neither ``logic'' algorithms nor semantics of
their own.
	%tested against the observational behaviour of the sequential $SLD$-resolution-based PROLOG.
	%This is possible because the $SLD$-resolution algorithm allows a variation of implementations ($SLD$-trees, and-trees, or-trees, etc.). 
For composition trees, they are implemented by
adding extra features to $SLD$-resolution. Specifically, composition
nodes are handled by binding arrays at compiler level.
\end{itemize}

%\subsection{Parallelism guarded by Coalgebraic Derivations}
In the previous sections, we have proposed coinductive trees (cf Definition \ref{df:coindt}),
as an alternative to composition trees. Coinductive trees serve as computational units in lazy (co)recursive derivations,
and therefore,  these coinductive tree transitions can be parallelised, as well.
For guarded logic programs, coinductive derivations allow for parallel and even non-deterministic implementations, as Sections \ref{sec:coindtree} and \ref{sec:impl} explain.
Here, we explain the two levels of parallelism in CoALP:  

\textbf{Level 1: Parallel construction of coinductive trees.}

Comparing coinductive
derivation trees with and-or parallel derivation trees, coinductive 
trees are more intrinsic: and-or parallel trees have mgu's built into 
a single tree, whereas mgu's are restricted to term-matching within the coinductive tree. Taking issues
of variable substitution from
the level of individual leaves to the level of trees affects
computations at least in two ways.  Parallel proof-search in branches
of a coinductive tree does not require synchronisation of
variables in different branches: they remain synchronised 
\emph{by construction} of the coinductive tree.   We illustrate with \texttt{ListNat}.

\begin{example}
  The coinductive trees of Figure \ref{pic:tree1} agree
  with the first part of the and-or parallel derivation tree for
  \texttt{list(cons(x,cons(y,x)))} in Figure \ref{pic:and-or}.  But the top left coinductive
  tree has leaves \texttt{nat(x)}, \texttt{nat(y)} and
  \texttt{list(x)}, whereas the and-or parallel derivation tree
  follows those nodes, using substitutions determined by
  mgu's. Moreover, those substitutions need not be consistent with
  each other: not only are there two ways to unify each of
  \texttt{nat(x)}, \texttt{nat(y)} and \texttt{list(x)}, but also
  there is no consistent substitution for \texttt{x} at all. In
  contrast, the coinductive trees handle such cases lazily.
\end{example}

%Returning to coalgebraic logic programming (cf. Definitions
%\ref{df:coindt}, \ref{df:coind-res}), 
Term-matching in coinductive trees permits the construction of every
branch in a coinductive tree independently of
the other branches. Moreover, for programs that are
guarded by constructors, such as \texttt{Stream} and \texttt{ListNat},
we avoid infinite branches or an infinite number of variables in a
single tree. 
Since both term-matching and guardedness are components of the 
``logic'' algorithm of coinductive derivation, the Kowalski's principle can be reformulated for CoALP as follows:

$$\emph{CoALP = Logic \textbf{is} Control}.$$

This distinguishes two approaches: \\
\emph{Parallel LP = and-or parallel derivation trees + explicit handling of parallel resources
   at compiler level};
		and\\
		\emph{CoALP = coinductive
  derivation trees + implicit  handling of parallel resources  ``by program construction''}.

	\textbf{Case-study: Coalgebraic logic programming and resources for parallelism.}\label{ap:B}

In this case study, our focus is on resource handling of parallelism
in logic programming.  We start by illustrating ground cases of
parallel derivations: these can be parallelised straightforwardly,
and coinductive trees and and-or parallel derivation trees
coincide.  We consider the inductive program \texttt{ListNat},
although a similar case-study could be done with a coinductive logic
program such as \texttt{Stream}.

\begin{example}
Consider the and-or parallel derivation tree for \texttt{ListNat} with
goal\\ $\texttt{list(cons(0,cons(0,nil)))}$ in Figure \ref{pic:ok}. 
%Unification and variable dependency make the
%parallelisation of logic programming hard, requiring special attention
%to resource-handling, see Figure
%\ref{pic:and-or}. 
\end{example}

	\begin{figure}
\begin{center}
\footnotesize{
  \begin{tikzpicture}[scale=0.2,baseline=(current bounding box.north),grow=down,level distance=20mm,sibling distance=130mm,font=\footnotesize]
  \node {$\mathtt{list(cons(0,cons(0,nil)))}$}
   child {[fill] circle (4pt)
              child { node {$\mathtt{nat(0)}$}
            child {[fill] circle (4pt)
               child { node {$\Box$}}}}
      child [sibling distance=90mm] { node {$\mathtt{list(cons(0,nil))}$}
        child {[fill] circle (4pt)
                   child { node {$\mathtt{nat(0)}$}
               child {[fill] circle (4pt)
                child { node {$\Box$}}}}
             child[sibling distance=90mm] { node{$\mathtt{list(nil)}$}
						child {[fill] circle (4pt)
                child { node {$\Box$}}}}}}}; 
  \end{tikzpicture}}
	\end{center}
\caption{\footnotesize{An and-or parallel derivation for the goal \texttt{list(cons(O,cons(O,nil)))}. }}
\label{pic:ok} 
\end{figure}

%Here, consider the examples of and-or- and coinductive trees in terms of soundness of concurrency.

%Real-life implementations of parallel PROLOG
%\cite{GPACH12} employ various mechanisms to synchronise variable
%substitutions, and these typically introduce sequentiality and
%explicit resource handling at the implementational level. The
%coalgebraic approach re-establishes the balance in favour of
%parallelism and implicit handling of resources (by which we mean
%variables and substitutions in this case). 
No additional syntactic annotations or variable synchronisation algorithms is required by CoALP when extending
from ground cases to the full fragment of first-order Horn logic with recursion and corecursion.
Not only  termination, but also soundness of parallelism will be guarded by program construction.

\begin{example}
Consider the coinductive
derivation for the goal $\mathtt{list(cons(x,cons(y,x)))}$ given in
Figure \ref{pic:tree1}. In contrast to the and-or parallel derivation
tree, and owing to the restriction of unification to term matching,
every coinductive tree in the derivation pursues fewer
variable substitutions than the corresponding and-or parallel
derivation tree does, cf Figure \ref{pic:and-or}. This allows one to
keep variables synchronised while pursuing parallel proof branches in
the tree. In particular, coinductive derivation of Figure \ref{pic:tree1} will report failure, as required for this example.
\end{example}

%\begin{example}
%  The coinductive trees of Figure \ref{pic:tree1} agree with the first part of the and-or parallel
%  tree for $\mathtt{list(cons(x,cons(y,x)))}$ of Figure
%  \ref{pic:and-or}. 
	\begin{figure}
\begin{center}
\footnotesize{
  \begin{tikzpicture}[scale=0.2,baseline=(current bounding box.north),grow=down,level distance=20mm,sibling distance=95mm,font=\footnotesize]
  \node {$\mathtt{list(cons(x,cons(y,x)))}$}
   child {[fill] circle (4pt)
     child { node {$\mathtt{nat(x)}$}}
       child { node {$\mathtt{list(cons(y,x))}$}
        child [sibling distance=65mm] {[fill] circle (4pt)
       child { node{$\mathtt{nat(y)}$}}
         child { node{$\mathtt{list(x)}$}}}
       }}; 
  \end{tikzpicture}
$\rightarrow$
  \begin{tikzpicture}[scale=0.2,baseline=(current bounding box.north),grow=down,level distance=20mm,sibling distance=95mm,font=\footnotesize]
  \node {$\mathtt{list(cons(O,cons(y,O)))}$}
   child {[fill] circle (4pt)
              child { node {$\mathtt{nat(O)}$}
            child {[fill] circle (4pt)
               child { node {$\Box$}}}}
      child { node {$\mathtt{list(cons(y,O))}$}
        child [sibling distance=65mm] {[fill] circle (4pt)
                   child { node {$\mathtt{nat(y)}$}
               }
             child { node{$\mathtt{list(O)}$}}}}}; 
  \end{tikzpicture}
	$\rightarrow$
	%	$\rightarrow$
  \begin{tikzpicture}[scale=0.2,baseline=(current bounding box.north),grow=down,level distance=20mm,sibling distance=95mm,font=\footnotesize]
  \node {$\mathtt{list(cons(O,cons(O,O)))}$}
   child {[fill] circle (4pt)
              child { node {$\mathtt{nat(O)}$}
            child {[fill] circle (4pt)
               child { node {$\Box$}}}}
      child { node {$\mathtt{list(cons(O,O)}$}
        child [sibling distance=65mm]{[fill] circle (4pt)
                   child { node {$\mathtt{nat(O)}$}
               child [sibling distance=65mm] {[fill] circle (4pt)
                child { node {$\Box$}}}}
             child { node{$\mathtt{list(O)}$}}}}}; 
  \end{tikzpicture}
	}
\end{center}
\caption{\footnotesize{A coinductive derivation for the goal \texttt{list(cons(x,cons(y,x)))}.}}
\label{pic:tree1} 
\end{figure}

	%But the coinductive tree has leaves
  %\texttt{nat(x)}, \texttt{nat(y)} and \texttt{list(x)}, whereas the
  %and-or tree follows those nodes, using substitutions determined by
  %mgu's. Moreover, those substitutions need not be consistent with
  %each other: not only are there two ways to unify each of
  %\texttt{nat(x)}, \texttt{nat(y)} and \texttt{list(x)}, but also
  %there is no consistent substitution for \texttt{x} at all. In
  %contrast, the coinductive trees capture such cases.
%\end{example}

Note that coinductive trees not only permit to achieve
soundness where parallelism normally is not sound, but also they
achieve this without any kind of explicit resource handling.

\textbf{Level 2: Parallel transitions between coinductive trees.}
	
	Consider the leftmost coinductive tree of Figure \ref{pic:tree1}. It has three leaves with two distinct variables.
		Hence, three independent mgu's can be computed to unfold that tree; and the three tree transitions can be done in parallel.
		As the lazy nature of coinductive trees and guardedness checks of CoALP insure both soundness and termination of 
		computations at the level of each individual tree, this opens a possibility for parallel proof search
		through the state space of such trees. We discuss this in detail in Section \ref{sec:impl}.

	To conclude, CoALP gives a
different view of parallel resource handling:

\begin{enumerate}
\item We avoid explicit resource handling either at ``logic'' or
``control'' level; instead, we use implicit methods to control parallel
resources.

\begin{enumerate}
	\item In particular, we restrict unification to term matching:
in contrast to the inherently sequential unification
algorithm~\cite{DKM84}, it is parallelisable.  As a result, parallel
proof search in separate branches of a coinductive tree
does not require explicit synchronisation of variables.
\item Static guardedness checks of CoALP, introduced to guard corecursion, in fact insure that parallel scheduling of computations within the coinductive trees 
will never fall into a non-terminating thread; and parallel scheduling of coinductive tree transitions will never produce unsound results.
Again, this is achieved without introducing new syntax, just by the guarded program construction.
\end{enumerate}

\item The issues of logic and control are now bound together: coinductive
 trees provide both logic specification and resource
control.  Moreover, CoALP comes
with its own coalgebraic semantics that accounts for observational
behaviour of coinductive derivations.

\end{enumerate}

As the next section explains, this approach to parallelism can be viable and efficient.
See also \cite{KSH13} for a detailed study of CoALP's parallel features, in ground, Datalog, and full first-order case.

\section{Implementation}\label{sec:impl}

In \cite{KPS12}, we developed the first minimal prototype of CoALP in PROLOG, to show the feasibility of the coalgebraic logic programming approach, see  \emph{CoALP Prototype-1} in \cite{SK12}. However, it did not make use of parallelisation in modern computer architectures and was constrained by the mechanisms employed by the underlying PROLOG engine. 
%With the lessons learnt during the first implementation, we have now achieved 
Here, we present a new binary standalone implementation engineered using the \textbf{Go} programming language \cite{S12} available as   \emph{CoALP Prototype-2} in \cite{SK12}. Its most important new feature is the use of \textbf{Go}'s built in support for multithreading to achieve parallelisation by using \emph{goroutines} which are coroutines that can be executed in distinct threads. This new implementation also features two levels of parallelism (for coinductive trees and their transitions), static guardedness checks, and implicit handling of corecursion and parallelism. In this section, we describe the most important features arising in the implementation of CoALP.

\textbf{Construction of Coinductive trees} (cf. Definition \ref{df:coindt}) lies at the heart of CoALP's implementation. They are implemented by linking structural records (structs) which represent or-nodes and and-nodes through the use of arrays and pointers. And-nodes represent goal terms and contain a list of pointers to clauses that have heads which are still unifiable with the goal. An and-node with a list containing at least one such pointer is regarded as an open node. The root of any coinductive tree is an and-node constructed by the initial goal. %Terms consist of variables and predicates. Variables are represented as pointers pointing to other terms or if not bound are null pointers. Predicates are represented by a unique id for each predicate name as well as arity and an array of terms which represent the predicates arguments. 

\textbf{Guardedness} plays an important role in CoALP implementation, as Sections \ref{sec:corec} and \ref{sec:parallel} explain. For the proper operation of the CoALP algorithm, it needs to be ensured that a derivation step never produces an infinite and therefore non-well-founded coinductive tree.  This would block the search process by taking up infinite time to expand the tree. We have incorporated the \emph{Guardedness checks} in CoALP (cf Section \ref{sec:guard}); they are used to statically check the input programs, prior to the program run. Note that, in line with lazy corecursion in functional languages, while a coinductive tree may only be finite, the coinductive derivation may still be infinite (cf. \texttt{Stream} in Figure \ref{pic:stream}).

\textbf{Coinductive derivations} are transitions of coinductive trees. 
Whether the CoALP implementation is viewed as a sequential or parallel process, it can be described as follows. Construction of coinductive derivations for a given input program and goal is modeled as a uniform cost search through the graph of coinductive trees connected by the derivation operation. A derivation step here is constrained to first order unification of the first unifiable open node that has the lowest level in the tree; cf. Definition \ref{df:coind-res} and Figures \ref{pic:stream} and \ref{pic:tree1}. Other strategies, including non-deterministic methods are possible for selecting such open nodes; thereby determining substitutions for new coinductive tree transitions. 

\begin{example}
Looking at the $\texttt{ListNat}$ program from Example \ref{ex:listnat}, the tree with root\\ $\mathtt{list(cons(x, cons(y, x)))}$ is connected to $\mathtt{list(cons(0, cons(y, 0)))}$ by unification of the open node $\mathtt{nat(x)}$ with $\mathtt{nat(0)}$. This step is also shown in Figure \ref{pic:tree}. The following derivation and the resulting coinductive tree for $\mathtt{list(cons(0, cons(0, 0)))}$ contains no unifiable open nodes -- note that $\mathtt{list(0)}$ cannot be unified with any clause head of the input program. Only a very thin layer of sequential control in the implementation for this search is needed in the form of a priority search queue.
\end{example}

Using the substitution length of all the substitutions in the derivation chain as priority ranking, we gain an enumeration order even for a potentially infinite lazy derivation processes. Therefore, while an infinite number of coinductive trees can in principle be produced for the goal $\mathtt{list(x)}$, the algorithm returns $\mathtt{list(nil)}$, $\mathtt{list(cons(0,nil))}$ and then $\mathtt{list(cons(s(0),nil))}$ in a finite number of time-steps and keeps producing finite coinductive trees thereafter. Running CoALP \cite{SK12} for $\mathtt{list(x)}$, we get as output  the substitutions for the first three success trees:

\begin{itemize}
	\item[1] $\{x/nil\}$, 
	\item[3] $\{x/cons(x_1,y_1),x_1/0,y_1/nil\}$ and 
	\item[4] $\{x/cons(x_1,y_1),x_1/s(x_2),x_2/0,y_1/nil\}$.
\end{itemize}

Each possible coinductive tree will be produced after finite time, but since there may be infinitely many such trees, the coinductive derivations are implemented as lazy corecursive computations. Contrast this to PROLOG which produces the solutions $\mathtt{list(nil)}$ , $\mathtt{list(cons(0, nil))}$ , $\mathtt{list(cons(0, cons(0, nil)))}$, $\ldots$ but never $\mathtt{list(cons(s(0),nil))}$ for the $\texttt{ListNat}$ program and goal $\mathtt{list(x)}$. Thereby, it does not generate the same set of solutions even if run indefinitely and does not discover some of the solutions that CoALP does.

\textbf{A new approach to Backtracking} is taken, as CoALP explores simultaneously several and-or-choices in a coinductive tree. In contrast to PROLOG, no trail stack is maintained and no backtracking (in the classical sense of \cite{Llo88}) is needed. If a coinductive tree has no open unifiable nodes, it will simply be discarded. If alternative mgu's existed during the derivation steps, they open up different branches in coinductive derivations. Therefore, CoALP implicitly represents alternative mgu's by coinductive trees in the priority search queue. The only time variable bindings may be undone is when checking for unifiability of terms during the derivation step. However, this is only done on copies of the original terms to ensure thread safety and to avoid unnecessary locks and therefore sequential barriers. Furthermore, this is done locally and does not characterize or regulate the overall global search flow.

\textbf{Parallelisation of coinductive trees.} Given that no infinite derivation tree can be generated by a guarded program, the CoALP approach provides multiple points where parallelisation takes place, while still enumerating every possible coinductive success subtree. The use of term matching to traverse and expand trees allows for parallelisation of work without explicit variable synchronization while operating directly on a single tree. 

However, if the coinductive trees are small or few open nodes exist, such as in the running examples \texttt{Stream} and \texttt{ListNat}, the setup and initial communication overhead between parallel threads that process the tree does not usually offset speedup that can be achieved. Therefore, it is dynamically decided during execution whether a program generates sufficiently complex coinductive trees to warrant this parallelisation strategy. Future research will focus on efficient heuristics to decide how this trade-off should be made. 

Term matching can be performed in parallel, but if the terms are small, no practical speedup will be obtained when working with multiple threads. In such cases, it is more efficient to perform distinct term matching operations in parallel by dispatching work on multiple coinductive trees in parallel.

Ground logic programs do not need transitions between the coinductive trees to complete the computation.
 Logic programs containing variables but no function symbols of arity $n>0$ can all be soundly translated into finitely-presented ground logic programs. The most famous example of such a language is Datalog~\cite{UllmanG88,Kanellakis88}.
%One possible restriction is to remove function symbols of arity $n > 0$ from the signature, and the programming language based on such syntax is called DATALOG.
The advantages of Datalog are easier implementations and a greater capacity for parallelisation. %From the point of view of model theory, Datalog programs always have finite models. 

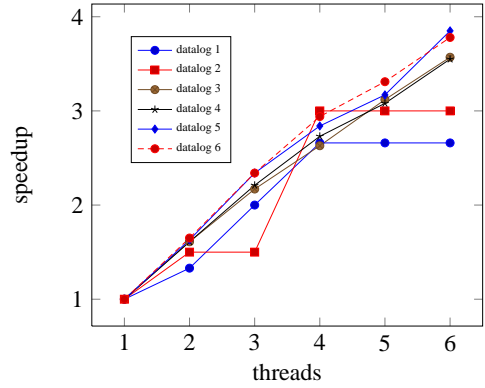
\begin{figure}
\centering
\begin{tikzpicture}[scale=.75]
	\begin{axis}[xlabel= threads,
		ylabel=speedup,legend style={at={(0.1,.9)},anchor=north west}]
	\addplot  coordinates {
		(1,1)
		(2,1.33)
		(3,2)
		(4,2.66)
		(5,2.66)
		(6,2.66)
	
	};
	\addplot coordinates {
		(1,1)
		(2,1.5)
		(3,1.5)
		(4,3)
		(5,3)
		(6,3)
	
	};
	\addplot coordinates {
		(1,1)
		(2,1.61)
		(3,2.17)
		(4,2.63)
		(5,3.12)
		(6,3.57)

	};
	\addplot  coordinates {
		(1,1)
		(2,1.61)
		(3,2.21)
		(4,2.73)
		(5,3.08)
		(6,3.55)
	
	};
	\addplot  coordinates {
		(1,1)
		(2,1.63)
		(3,2.34)
		(4,2.84)
		(5,3.17)
		(6,3.85)
	
	};
        \addplot  coordinates {
		(1,1)
		(2,1.65)
		(3,2.34)
		(4,2.94)
		(5,3.31)
		(6,3.78)
	
	};

	\legend{{\tiny datalog 1}, {\tiny datalog 2}, {\tiny datalog 3}, {\tiny datalog 4}, {\tiny datalog 5}, {\tiny datalog 6}}

	\end{axis}
\end{tikzpicture}
\caption{Speedup of Datalog programs, relative to the base case with 1 thread,  with different number of threads expanding the derivation tree.}\label{fig:datalog}
\end{figure}

Figure \ref{fig:datalog} shows the speedup that can be gained by constructing and-or parallel trees for Datalog programs in our system.
The Datalog programs are randomly generated and can be examined in~\cite{SK12}. As can be seen in Figure~\ref{fig:datalog}, the speedup is
 significant and scales with the number of threads.
%There is a class of Prolog programs that can be expressed by ground logic programs  

\textbf{Parallelisation of coinductive derivations} is more efficient than parallelisation within one coinductive tree for programs like \texttt{ListNat} and \texttt{Stream}. On the search queue level of the algorithm, multiple trees that still have open nodes and possible derivations are dispatched to one or more worker threads. They perform the coinductive derivation steps in parallel. To keep communication minimal, the coinductive trees are compacted by e.g. pruning closed leaves and shortening chains that have no branches in the tree. Since expanding and checking coinductive trees does not always take the same amount of time for each tree, some worker threads might return results earlier than others and thereby disrupt the enumeration order. So, we do not allow them to show results immediately and  directly to the user. CoALP guarantees that success trees which are enumerated sequentially will also be found when working in multithreaded context albeit maybe later. Returning results in the enumeration order of substitution lengths to the user can still be achieved by a little more sequential overhead. For example, the user can specify the option to buffer and sort success coinductive trees until it is guaranteed that no lower order coinductive trees are still being processed or are in the priority search queue.

Considering the other direction of reducing sequential overhead in maintaining the search queue, there is the possibility of using complementary enumeration schemes and thereby partition the search queue into smaller queues that each worker thread maintains on its own. However, this may shift the order of solutions since some worker threads may enumerate only solutions that are computationally easier to find. Thereby a trade-off is to be made between maintaining a perfect ordering or faster processing of coinductive trees. At any rate, the derivations remain sound by the program guardedness and coinductive tree construction, cf Sections  \ref{sec:corec} and \ref{sec:parallel}; and this allows for a range of experiments
on parallelisation for the future.

\section{Conclusions and Future Work}\label{sec:concl}

The main feature of the coalgebraic logic programming approach is its
generality: it is suitable for both inductive and coinductive logic
programs, for programs with variable dependencies or not, and for
programs that are unification-parallelisable or inherently
sequential. Many distinctions that led to a variety of engineering
solutions in the design of corecursive and concurrent logic programs
\cite{GPACH12,GuptaBMSM07,SimonBMG07} are erased here, with
resource-handling delegated to a logic algorithm; and issues
of logic and control, semantics and execution, become inseparable.

The original contributions of this paper relative to the earlier papers~\cite{KMP10,KP11,KP11-2} are the Coalgebraic Calculus of Infinite trees (Section~\ref{sec:trees}), 
operational semantics for non-deterministic derivations (Section~\ref{sec:TO}), extended Guardedness conditions for CoALP (Section~\ref{sec:guard}), and Parallel and Corecursive Implementation of CoALP
in Go (Section~\ref{sec:impl}). Additionally, the paper develops a unfying theory and notation for paralelleism and corecursion in logic programming, putting a new perspective on 
 earlier results~\cite{KMP10,KP11,KP11-2}. All proofs appear here for the first time. 

In future, we plan to investigate the integration of coalgebraic logic
programming with methods of resource handling in state-of-the-art
coinductive logic programming ~\cite{GPACH12,GuptaBMSM07,SimonBMG07},
as well as in modern concurrent logic programming
systems~\cite{GPACH12}. Furthermore, we would like to investigate
whether coalgebraic logic programming has potential to play a positive
role in type inference, cf. \cite{AnconaLZ08}.

The analysis of this paper can be extended to more expressive logic
programming languages, such as~\cite{Girard87,HodasM94,Pym,MillerN86}, also to
functional programming languages in the style of
\cite{PaulsonS89,AnconaLZ08}. We deliberately chose our running
examples to correspond to definitions of inductive or coinductive
types in such languages. 

The key fact driving our analysis has been the observation that the
implication $\gets$ acts at a meta-level, like a sequent rather than a
logical connective.  That observation extends to first-order fragments
of linear logic and the Logic of Bunched Implications
\cite{Girard87,Pym}. So we plan to extend the work in the paper to
logic programming languages based on such logics.

The situation regarding higher-order logic programming languages such
as {\em $\lambda$-PROLOG} \cite{MillerN86} is more subtle. Despite
their higher-order nature, such logic programming languages typically
make fundamental use of sequents. So it may well be fruitful to
consider modelling them in terms of coalgebra too, albeit probably on
a sophisticated base category such as a category of Heyting algebras.

\bibliographystyle{abbrv}
%\bibliographystyle{plain}
%\begin{thebibliography}{9}

%\end{thebibliography}
%\bibliography{katya2}

\end{document}